\documentclass{article}
\makeatletter
\newcommand{\labitem}[2]{%
\def\@itemlabel{\text{#1}}
\item
\def\@currentlabel{#1}\label{#2}}
\makeatother
\usepackage{setspace,ctable}
\usepackage{a4wide}
\usepackage{lmodern}
\usepackage{booktabs}

\usepackage{amsmath}
\usepackage{amsthm}
\usepackage{epsfig}
\usepackage{amsfonts}
\newtheorem{theorem}{Theorem}[section]
\usepackage{graphicx}
\usepackage{courier}
\usepackage{colortbl}
\usepackage{multirow}
\pagestyle{headings}
\pagestyle{plain}
\RequirePackage[OT1]{fontenc}
\RequirePackage{amsthm}
\usepackage{color}

\usepackage{graphicx} 
\usepackage{amsmath}
\usepackage{amssymb}
\usepackage{float}
\usepackage{enumerate}
\usepackage{booktabs}
\usepackage{marvosym,enumitem}
\usepackage{array, xcolor, lipsum, bibentry}
\usepackage{epstopdf}
\usepackage[font=small,labelfont=bf]{caption}
 \usepackage[round]{natbib}

\definecolor{lightgray}{gray}{0.8}
\newcolumntype{L}{>{\raggedleft}p{0.14\textwidth}}
\newcolumntype{R}{p{0.8\textwidth}}

\newcommand{\vertiii}[1]{{\left\vert\kern-0.25ex\left\vert\kern-0.25ex\left\vert #1
    \right\vert\kern-0.25ex\right\vert\kern-0.25ex\right\vert}}

\newtheorem{lemma}{Lemma}[section]
\newtheorem{assump}{Assumption}[section]

\newcommand{\R}{\mathbb R } 
\newcommand{\s}{\frac{1}{n}\sum_{i=1}^n}
\newcommand{\op}{\mathcal O}
\newcommand{\Op}{\mathcal O_P}
\newtheorem{condition}{Condition}[section]

\theoremstyle{definition}
\newtheorem{remark}{Remark}[section]

\newtheorem{example}{Example}[section]

\bibliographystyle{apalike}

\title{
{Confidence regions for
high-dimensional 
generalized linear models under sparsity
}
}
  \author{Jana Jankov\'{a} \and Sara van de Geer  
        }

\begin{document}
\maketitle

\setcounter{section}{0}

\begin{abstract}
We study asymptotically normal estimation and confidence regions for low-dimensio\-nal parameters in high-dimensional sparse models. Our approach is based on the $\ell_1$-penalized M-estima\-tor which is used for construction of a bias corrected estimator. We show that the proposed estimator is asymptotically normal, under a sparsity assumption on the high-dimensional parameter, smoothness conditions on the expected loss and an entropy condition. This leads to uniformly valid confidence regions and hypothesis testing for low-dimensional parameters. The present approach is different in that it allows for treatment of loss functions that we not sufficiently differentiable, such as quantile loss, Huber loss or hinge loss functions.
We also provide new results for estimation of the inverse Fisher information matrix, which is necessary for the construction of the proposed estimator.
We formulate our results for general models under high-level conditions, but investigate these conditions in detail for generalized linear models and provide mild sufficient conditions. As particular examples, we investigate  the case of quantile loss and Huber loss in linear regression and demonstrate the performance of the estimators in a simulation study and on real datasets from genome-wide association studies. We further investigate the case of logistic regression and illustrate the performance of the estimator on simulated and real data. 
\vskip 0.5cm
Keywords: Sparsity; High-dimensional; Lasso; Entropy; Generalized linear model; Inverse covariance matrix

\end{abstract}

\section{Introduction}
\noindent 
The need to develop efficient methodology for handling high-dimensional data arises in a variety of  applications including genome-wide studies, image processing and pattern recognition.
Penalized M-estimators have become a popular tool for point estimation in such high-dimensional settings. 
Our goal in this paper however goes beyond point estimation: we aim to construct and study methodology for quantifying the uncertainty of estimation. 
\par 
Suppose that we observe a sample  $X_1,\dots,X_n\in\mathcal X$ of independent observations from an unknown distribution $P$ 
which is known to belong to a class  $\mathcal P = \{P_{\beta}\}$ where 
$\beta$ 
ranges over a subset of $\mathbb R^{p}$. 
We assume that the setting is high-dimensional: the number of unknown parameters  $p$ may be greater than the sample size. 
We denote by $P$ the mean with respect to $P$ (assuming it exists) and by $\mathbb P_n$ the empirical mean given the sample $X_1,\dots,X_n.$
Consider a given loss function $\rho_\beta: \mathcal X  \rightarrow \mathbb R$ and let the true unknown parameter $\beta_0$ be defined as
$$\beta_0 := \text{arg}\min_{\beta\in\mathbb R^p} P\rho_\beta.$$
To estimate $\beta_0$, we consider $\ell_1$-penalized M-estimators defined by
$$\hat\beta := \text{arg}\min_{\beta\in\mathbb R^p} \mathbb P_n \rho_\beta + \lambda \|\beta\|_1.$$
\par
Under restrictions on the number of non-zero coefficients in $\beta_0$ and several technical assumptions (compatibility condition, margin condition), consistency of $\ell_1$-penali\-zed M-estimators can be achieved. For an overview of the theoretical results, we refer the reader to \cite{hds} and the references therein.
The so-called ``oracle inequalities'' show consistency in $\ell_1$-norm and consistency of the excess risk 
 at a near-oracle rate $s\lambda$ and $s\lambda^2$, respectively, where $s$ is the number of non-zero elements of $\beta_0$. 
More precisely, if certain regularity conditions are satisfied (see \cite{hds}), it holds
$$\|\hat\beta-\beta_0\|_1=\mathcal O_P(s\lambda^2), \quad P (\rho_{\hat\beta}-\rho_{\beta_0}) = \mathcal O_P(s\lambda^2),$$ 
where $\mathcal O_P(1)$ means boundedness in probability.
Under somewhat stronger regularity conditions, consistency in $\ell_2$-norm at the near-oracle rate may be obtained.   
\par
The advantage of $\ell_1$-penalized estimators is that due to the geometry of the $\ell_1$-norm, the estimator $\hat\beta$ may have many coefficients set exactly to zero. In this sense, Lasso methods yield ``variable selection''; however, this holds only if certain  restrictive conditions are satisfied. We make the statement about variable selection more precise for the case of linear regression. To this end, we denote the true non-zero set of $\beta_0 $ by $S:=\{i:\beta^0_i\not =0\}$ and its estimated analogue by $\hat S:=\{i:\hat\beta_i\not =0\}$. Then under a ``beta-min condition'', which, loosely speaking, requires the non-zero coefficients in $\beta_0$ to be sufficiently large (i.e. above the noise level), it holds that $P(S\subseteq \hat S)\rightarrow 1.$
If, in addition, the ``irrepresentability condition'' (which a restrictive assumption on the design matrix) is satisfied, it holds 
that $P(\hat S= S)\rightarrow 1.$ 
\par
A disadvantage of Lasso-penalized estimators is that the variable selection properties are only guaranteed under restrictive conditions and the asymptotic distribution of penalized M-estimators is in general not tractable. 
The asymptotic behaviour of $\ell_1$-penalized M-estimators has been studied in several papers, see e.g. 
\cite{knight2000}. They suggest, as  one might expect, that the classical theory on asymptotic normality of M-estimators cannot be immediately regenerated. Nevertheless, penalized estimators might be used as initial estimators to construct estimators that are asymptotically normal and regular under mild conditions, thus moving towards asymptotically efficient estimation. These asymptotically normal estimators may then be used for variable selection in the spirit of the more classical framework of hypothesis testing.
\par
To construct asymptotically normal estimators for (sparse) high-dimen\-sio\-nal models, several different methods have been studied. We mention the idea of bias correction of an initial Lasso estimator, which 
was studied in the papers \cite{zhang,vdgeer13,stanford1} for linear regression and the paper \cite{vdgeer13} considers in addition the generalized linear models. The message of these papers is that for inference about low-dimensional parameters of interest, one needs a good initial estimator of the high-dimensional parameter and an estimator of the score of the nuisance parameter. 
The approach has also been applied in particular examples of nonlinear models, see e.g. \cite{jvdgeer14} and \cite{jvdgeer15} for Gaussian graphical models. An alternative approach, based on Neyman's orthogonalizing conditions, was studied in \cite{vch1}. This yields an asymptotically normal estimator for low-dimensional parameters, by solving the orthogonalizing conditions, using an initial Lasso-regularized estimator. The  paper \cite{vch1} provides high-level conditions under which asymptotic normality can be obtained. The estimator they propose is related to the bias correction idea, although the two approaches are not identical. This approach was also studied for the least absolute deviations estimator in \cite{vch}, where asymptotic normality of the estimator based on Neyman's orthogonalizing conditions was shown.
Further works on inference in high-dimensional settings include \cite{bootstrap1}, \cite{vdgeer14}, \cite{chisq}, \cite{vdgeer12} and other.

\subsection{Contributions}
The paper \cite{vdgeer13} is  closely related to our work. This paper extends the analysis therein to non-diffe\-ren\-tiable loss functions and relaxes certain conditions therein. The paper \cite{vdgeer13} constructs an asymptotically normal estimator for low-dimensional parameter in high-dimensional generalized linear models. This is done by bias correction of an initial Lasso estimator. To calculate the bias correction, nodewise Lasso regressions are used to approximately invert a high-dimensional precision matrix, which corresponds to the inverse Fisher information.
The results of \cite{vdgeer13} assume that the loss function is twice differentiable and the second derivative is Lipschitz. We relax this assumption by considering entropy of the classes of functions instead of Lipschitz properties.  This moves the differentiability and Lipschitz conditions from the loss function onto the \emph{expected} loss function.
We also derive alternative theoretical results for estimating the precision matrix with nodewise Lasso regressions. These results hold for generalized linear models with bounded design under mild conditions, which are alternative to the conditions in the paper \cite{vdgeer13}.
For general high-dimensional models, we provide high-level conditions, which can be checked in particular situations.
The theoretical results are supported by a simulation study and applications to real data from genome studies in linear and logistic regression. 

\subsection{Organization of the paper}
In Section \ref{sec:glm.main} we consider the high-dimensional generalized linear model. 
We describe the de-sparsifying methodology and estimation of the inverse covariance matrix in Section \ref{subsec:method}. 
Main theoretical results for generalized linear models are contained in Section \ref{subsec:mtr}.
Section \ref{sec:nw} contains main theoretical results for nodewise regression for estimation of inverse covariance matrices.
In Section \ref{sec:main.all} we consider general high-dimensional models. 
Examples including the Lasso, least absolute deviations estimator and the Huber estimator are contained in Section \ref{sec:examples}.
Section \ref{sec:conc} summarizes the findings. Sections \ref{sec:sims} and \ref{sec:real} contain simulation studies and applications to real data sets in linear and logistic regression. The proofs are deferred to Appendix \ref{sec:proofs}.

\subsection{Notation}
Let $X_1,\dots,X_n$ be independent random variables with values in some space $\mathcal X$ and let $\mathcal F$ be a class of real valued functions on $\mathcal X.$ 
For a function $f:\mathcal X\rightarrow \mathbb R$ we denote  by $\mathbb P_n f=\sum_{i=1}^n f(X_i)/n$ its empirical measure and by 
$Pf=\sum_{i=1}^n \mathbb Ef(X_i)/n$ its theoretical measure (assuming the integrals exist). Let $\|f\|_n^2 := \mathbb P_n f^2$ denote the empirical norm of $f$
and let $\|f\|=Pf^2$ (assuming it exists) denote the theoretical norm of $f$.
Let $\mathbb G_n f:= \sqrt{n}(\mathbb P_n-P)f.$
By $N(\epsilon, \mathcal F, \|\cdot \|_{n})$ we denote the covering number of the set $\mathcal F$, which is the minimum number of $\|\cdot \|_n$-balls with radius $\epsilon$ needed to cover the set $\mathcal F.$ 
For a vector $x=(x_1,\dots,x_p)\in\mathbb R^p$ we denote its $\ell_r$ norm by $\|x\|_r:= (\sum_{i=1}^p x_i^r)^{1/r}$ for $r\geq 1$. We further let $\|x\|_\infty:=\max_{i=1,\dots,p}|x_i|$ and $\|x\|_0 = |\{i:x_i\not =0\}|.$ 
For a matrix $A$ we denote its $j$-th column by $A_j$  and its $(i,j)$-th element by $A_{ij}$.

\section{High-dimensional generalized linear models}
\label{sec:glm.main}

We are given independent observations $(X_1,Y_1),\dots,(X_n,Y_n)$. $Y_i\in\mathbb R^p$ has the interpretation of the dependent variable and $X_i\in\mathbb R^p$ represents covariates. 
Let $\rho_\beta:\mathbb R^p \times \mathbb R \rightarrow \mathbb R$ for $\beta \in\mathbb R^p$ be a given loss function.
We assume that the loss function depends on the parameter only through the linear combination $x^T\beta$, i.e.
\begin{equation}
\label{loss.g}
\rho_\beta(x,y) := \rho(x^T\beta,y).
\end{equation}
The loss function is not necessarily related to the probability distribution of the instances.  

\begin{example}\textbf{(Generalized linear models)}
\label{ex:glm}
A special case of the above setting is the generalized linear model (\cite{mccullagh}), where 
$\mathbb E(Y_i|X_i) =  g(X_i^T\beta_0).$
Then the probability density function has the form
$p_\beta(y|x)=f(y,g(x^T\beta)),$
for some function  $f$.
If the loss function equals the negative log-likelihood; this corresponds to a maximum likelihood approach.
\hfill 
\end{example}
\noindent
Some examples of loss functions covered in this paper include
\begin{enumerate}[label=(\roman*)]
\item 
quadratic loss $\rho(u,y)=(y-u)^2$,
\item
quantile loss $\rho(u,y) = q|y-u|1_{y-u>0}+(1-q)|y-u|1_{y-u\leq 0},y-u\in \R$, for some $0 \leq q \leq 1$,
\item
Huber loss
$\rho(u,y) = [(y-u)^2 1_{|y-u|\leq K}+K(2|y-u|-K)1_{|y-u|>K}]/(2K)$,
\item
hinge loss
$\rho(u,y) = (1-yu)_{+}$,
\item
mixture models: $\rho(u,y)=\log \left[\pi \frac{1}{\sigma_1}\phi\left(\frac{y-u_1}{\sigma_1}\right) + (1-\pi) \frac{1}{\sigma_2}\phi\left(\frac{y-u_2}{\sigma_2}\right)\right],$
where $u_1=x^T\beta_1,$ $u_2=x^T\beta_2.$
\item
logistic loss:
$\rho(u,y)=-yu + \log(1+e^{u})$.
\end{enumerate}
A given loss function defines an $\ell_1$-penalized M-estimator via
\begin{equation}\label{def.glm}
\hat\beta := \text{arg}\min_{\beta\in\mathbb R^p} \frac{1}{n}\sum_{i=1}^n\rho(X_i^T\beta, Y_i) + \lambda \|\beta\|_1.
\end{equation}
The above optimization problem implies the first order necessary conditions, so called ``\textit{estimating equations}'',  
\begin{equation}
\label{ee}
\s \psi_{\hat\beta}(X_i,Y_i) + \lambda \hat Z =0,
\end{equation}
where $\psi_\beta:\mathcal X\rightarrow \mathbb R^p$ is the sub-differential of $\rho_\beta$ evaluated at $\hat\beta$ and $\hat Z$ is  the sub-differential of the $\ell_1$-norm evaluated at $\hat\beta$:
\[
\hat Z_{i}=
\begin{cases}
\text{sign}(\hat\beta_i) & \text{ if }\hat\beta_i\not = 0,\\
\hat q_i \in [0,1] &\text{ otherwise.} 
\end{cases}
\]
When the loss function is differentiable in $\beta$, equation \eqref{ee} simply applies  with $\psi_\beta:= \dot \rho_\beta$. 
When the loss function is not differentiable in $\beta$, but it is sub-differentiable, one may still replace the derivative by sub-differential. 
Examples of loss functions that are not differentiable (in every point) but the sub-derivative exists at every point include e.g. quantile loss function (used in quantile regression) or hinge loss function (used in support vector machines). 
\par
$\ell_1$-penalized M-estimators have been studied extensively and under certain conditions, they copy the behaviour of an ``oracle'', which knows the true position of zero entries of $\beta_0.$   
The technical conditions for oracle inequalities were briefly outlined in the introduction and we do not treat them in detail in the present paper, as they are well established in literature, see e.g. the book \cite{hds}. We remark that differentiability of the loss function is not necessary for the oracle inequalities.

\subsection{Methodology: De-sparsifying the Lasso}
\label{subsec:method}
We follow the methodology from \cite{vdgeer13}, which implements a bias correction step on the initial Lasso estimator. 
The methodology of bias correction removes the bias associated with the $\ell_1$-penalty and leads to a non-sparse estimator which recovers the desired asymptotic properties that e.g. the maximum likelihood estimator possesses in low-dimensional settings. 

\subsubsection{Establishing an asymptotic pivot}
The estimating equations as in \eqref{ee} read
$$\mathbb P_n \psi_{\hat\beta} + \lambda \hat Z =0.$$
The idea is to find a root $\hat b$ which (approximately) satisfies the estimating equations without the bias term and thus asymptotically behaves as the oracle estimator, which knows the true positions of non-zero entries of $\beta_0$ and applies a maximum likelihood estimator.
This can be done by arguments relying on second order approximations via Taylor expansions. We will proceed in an equivalent way by ``inverting the estimating equations'' with the Hessian matrix of the loss function. To avoid the need to assume differentiability of the loss function, we do the inversion with a matrix that represents the Hessian of the \textit{expected} loss function. To this end, we denote $\Theta:=([P\psi_{\beta}]'_{\beta=\beta_0})^{-1}$.
When the loss function is twice differentiable 
and equals the negative log-likelihood, 
$\Theta$ is the Fisher information matrix.
Multiplying the estimating equations with $\Theta$ and adding $\hat\beta-\beta_0$ to both sides yields
$$\hat\beta-\beta_0 + \Theta\lambda \hat Z =- \Theta\mathbb P_n \psi_{\hat\beta} +\hat\beta-\beta_0.$$
This leads (by rearranging) to 
the following (classical) decomposition (see \cite{vdv}):
\begin{eqnarray}\label{deco}
\hat\beta_j-\beta_j^0 -\Theta_j^T \mathbb P_n \psi_{\hat\beta} 
&=&
- \underbrace{\Theta_j^T\mathbb  P_n \psi_{\beta_0}}_{\text{asymptotic pivot}}
\\[10px]
\nonumber
&&-\underbrace{\Theta_j^T(\mathbb P_n-P)(\psi_{\hat\beta} - \psi_{\beta_0} ) }_{\text{empirical process part}}+
\underbrace{\hat\beta_j-\beta_j^0 -\Theta_j^T P(\psi_{\hat\beta} - \psi_{\beta_0})}_{\text{smoothness part}}
\end{eqnarray}
where $\Theta_j$ is the $j$-th column of $\Theta
$. Contrary to the classical setting as studied in \cite{vdv}, there is an extra term $\Theta_j^T \mathbb P_n \psi_{\hat\beta}$ which corresponds to the bias of the Lasso. Thus the decomposition \eqref{deco} suggests to take a new estimator corrected by the extra term as follows
\begin{equation}
\label{desp}
\tilde b_j := \hat\beta_j -\Theta_j^T \mathbb P_n \psi_{\hat\beta}.
\end{equation}
Clearly the matrix $\Theta$ is typically not known and hence \eqref{desp} is not a proper estimator. In Section \ref{subsec:nodewi} below, we propose an estimator of $\Theta.$
\par
 The decomposition \eqref{deco} is the main tool in our analysis and it illustrates the challenges underlying this problem. 
Provided that the empirical process part and the smoothness part are small enough, of small order $1/\sqrt{n},$ then asymptotic normality of $\sqrt{n}(\tilde b_j-\beta_j^0)$ can be established by classical arguments, under certain conditions on the pivot term. 
\par
The
\textit{empirical process part} in \eqref{deco}  is related to the complexity of the considered class of functions, which are indexed by a \textit{sparse} parameter $\beta.$ 
Our aim in this part is to show entropy bounds for the class
$$\mathcal G:= \{x\mapsto \Theta_j^T (\psi_{\beta}-\psi_{\beta_0}): \beta\in \mathcal B\},$$
where $\mathcal B$ is some sparse subset of $\mathbb R^p$ that will be specified later.
The sparseness of the index set $\mathcal B$ is crucial, in view of our results relying on entropy numbers. 
For an overview of results on entropy numbers, we refer the reader to e.g. \cite{weak}.
\par
The
\textit{smoothness part} in \eqref{deco} is related to the smoothness of the (derivative of the) expected loss function. The smoothness part poses a problem in high-dimensional settings since Taylor expansions have to be carried out with caution because norms in $\mathbb R^p$ are not equivalent asymptotically when $p\rightarrow \infty$. 

\subsubsection{Nodewise Lasso regression}
\label{subsec:nodewi}
The next challenge is that the high-dimensional vector $\Theta_j\in\mathbb R^p$ is unknown and has to be replaced by a well-behaved estimator.
In view of the decomposition \eqref{deco}, consider now the following second decomposition, given that $\hat\Theta_j$ is an estimate of $\Theta_j$,
\begin{eqnarray*}
\hat\beta-\beta_0 -\hat\Theta_j^T \mathbb P_n \psi_{\hat\beta}
=
\hat\beta-\beta_0 -\Theta_j^T \mathbb P_n \psi_{\hat\beta}-(\hat\Theta_j-\Theta_j)^T \mathbb P_n \psi_{\hat\beta}.
\end{eqnarray*}
This implies that in order for the remainder term to be negligible, $\hat\Theta_j$ must satisfy the condition
 $$\|(\hat\Theta_j-\Theta_j)^T \mathbb P_n \psi_{\hat\beta}\|_\infty =o_P (1/\sqrt{n}).$$
By H\"older's inequality, we have the bound $\|(\hat\Theta_j-\Theta_j)^T \mathbb P_n \psi_{\hat\beta}\|_\infty
\leq \|\hat\Theta_j-\Theta_j\|_1 \|\mathbb P_n \psi_{\hat\beta}\|_\infty.$
Hence an appropriately fast rate for $\hat\Theta_j-\Theta_j$ in $\ell_1$-norm and approximately satisfied estimating equations are sufficient.
To this end, note that the estimating equations \eqref{ee} imply $\|\mathbb P_n \psi_{\hat\beta}\|_\infty =\Op (\lambda).$
\par
In view of the above arguments, our goal is to construct an $\ell_1$-oracle estimator of the inverse of $\Theta.$ 
Estimation of $\Theta$ was well explored in literature for the case of quadratic loss; this is the same problem as estimation of the edge weights in undirected graphical models (see e.g. \cite{glasso}, \cite{buhlmann}, \cite{yuan}).
The challenge arises because, in general, $\Theta$ depends on $\beta_0$ (an exception is the quadratic loss).
We employ the fact that for generalized linear models, there is a special structure in the matrix $\Sigma:=\Theta^{-1} \equiv(P\psi_\beta)'_{\beta=\beta_0}$. 
We can then use as an estimator of $\Sigma$ the empirical version
$$\hat\Sigma_{\hat\beta} = \s  \ddot \rho(y_i,x_i^T\hat\beta)x_ix_i^T.$$
Denoting the weights by $\hat { w}_i:=\ddot \rho(y_i,x_i^T\hat\beta)$ and $\hat W\equiv W_{\hat\beta}:=\text{diag}(\hat w_i)$; then we may rewrite $\hat\Sigma_{\hat\beta} =X^T\hat W^2 X/n.$
This matrix is not invertible because of the high-dimensional setting, but we can approximately invert it using nodewise regression of each column of $\hat WX$ on all the other columns.
To construct in this way a nodewise regression estimator of $\Theta$, we define 
\begin{equation}\label{parcor}
\hat\gamma_{\hat\beta,j} := \text{arg}\min_{\gamma_j\in\mathbb R^{p-1}}\|\hat W_{}(X_j -X_{-j}\gamma_j)\|_2^2/n + 2\lambda \|\gamma_j\|_1,
\end{equation}
$$\hat\tau_{\hat\beta,j}^2 := \|\hat W_{}(X_j -X_{-j}\hat\gamma_j)\|_2^2/n + \lambda \|\hat\gamma_j\|_1,$$
and
\begin{equation}\label{def.nw}
\hat\Theta_{{}j}:= (-\hat\gamma_{\hat\beta,j,1},\dots, -\hat\gamma_{\hat\beta,j,j-1},1,-\hat\gamma_{\hat\beta,j,j+1},\dots,
-\hat\gamma_{\hat\beta,j,p})/\hat\tau_{\hat\beta,j}^2.
\end{equation}
The estimator $\hat\Theta_{j}$ was studied in the paper \cite{vdgeer13} for generalized linear models. In the next sections, we provide alternative conditions under which the methodology yields good estimators.
For generalized linear models with non-differentiable loss functions, other methods have to be used (this is discussed in Section \ref{sec:nw.nondif}). 

\begin{remark}\label{sqrtnode}
Instead of the Lasso, one could use the square-root Lasso (\cite{sqrtlasso}) to estimate the partial correlations in \eqref{parcor}:
\begin{equation}\label{parcor2}
\hat\gamma_{\hat\beta,j} := \text{arg}\min_{\gamma_j\in\mathbb R^{p-1}}\|\hat W_{}(X_j -X_{-j}\gamma_j)\|_2/n + 2\lambda \|\gamma_j\|_1,
\end{equation}
 The advantage is that the square-root Lasso automatically estimates the noise variance as well and thus uses a universal choice of the tuning parameter which is particularly useful from a practical point of view.  To avoid digressions, we do not elaborate on the theoertical results for this alternative method in the present paper.
\end{remark}

\par
Finally, using the nodewise regression estimator $\hat\Theta=(\hat\Theta_1,\dots,\hat\Theta_p),$ we may define the new corrected estimator
$$\hat b:= \hat\beta - \hat\Theta^T \mathbb P_n \rho_{\hat\beta}.$$
This estimator will be referred to as the \textit{de-sparsified Lasso}, in line with \cite{vdgeer14}. In some literature, it is called the ``de-biased Lasso''.

\subsection{Main theoretical results} 
\label{subsec:mtr}

\subsubsection{Model  assumptions}

\begin{enumerate}
\labitem{(A1)}{itm:bounded} (Observations)
Assume that $(X_i,Y_i)$ are independent for $i=1,\dots,n$ and identically distributed for each fixed $n$.
Suppose that $\|X_i\|_\infty \leq K_X,i=1,\dots,n$, $\mathbb E|\Theta_j^T X_i|^4 =\mathcal O(1)$, $1/\Lambda_{\min}(\Sigma) = \mathcal O(1)$. Define $\beta_0$ by $P\psi_{\beta_0}=0$ and assume that $1/ (\Theta_j^TP\psi_{\beta_0}\psi_{\beta_0}^T\Theta_j)= \mathcal O(1)$.
\labitem{(A2)}{itm:initial.glm} (Initial estimates)
Suppose that 
$\|\hat\beta-\beta_0\|_1 \leq Cs\lambda,\mathbb E\|X(\hat\beta-\beta_0)\|^2_2/n\leq Cs\lambda^2$ 
and $\|\hat\beta\|_0\leq Cs$ with high probability for some $C>0.$
\end{enumerate}
We remark that condition \ref{itm:bounded} assumes that the observations are identically distributed for every fixed $n$. This is not important for the analysis, we only assume this to keep the presentation cleaner. 
The moment condition $\mathbb E|\Theta_j^T X_i|^4 =\mathcal O(1)$ holds e.g. for  a sub-Gaussian random vector $X_i$ (see \cite{hds}) when $\Lambda_{\max}(\Theta)=\mathcal O(1)$. Clearly, it is a much weaker condition than requiring sub-Gaussianity of $X_i$.
Furthermore, as already noted, the rates of convergence $\|\hat\beta-\beta_0\|_1 \leq Cs\lambda,$ $\|X(\hat\beta-\beta_0)\|^2_2/n\leq Cs\lambda^2$  from condition \ref{itm:initial.glm} were derived under mild conditions in the book by \cite{hds}.
As for the condition $\|\hat\beta\|_0\leq Cs$, we show in Theorem \ref{sparsity.glm} below that this is satisfied under mild conditions.
\par
We introduce some further notation.  
If $u\mapsto \rho (y,u)$ is differentiable at $u$ then we 
denote $w(y,u) := \frac{\partial \rho(y,u)}{\partial u}.$  We call $w$ the \textit{weight function} or \textit{weight}.
For illustration, we give a few examples of weight functions:
\begin{enumerate}[label=(\roman*)]
\item  quadratic loss: $w_{\text{quadratic}}(y,u)=2(y-u),$
\item
 absolute loss: $w_{\text{absolute}}(y,u)=\text{sign}({y-u}),$
for all $u\not =y.$
\item
 Huber loss:
\vskip -0.5cm
\[
w_{\text{Huber}}(y,u)=
\begin{cases}
(y-u)/K & \text{if }|y-u|\leq K,\\
\text{sign}(y-u)& \text{ otherwise.}
\end{cases}
\]
\end{enumerate}

\subsubsection{Main results for differentiable loss functions}
In this section we consider differentiable loss functions as summarized in the following conditions.
\begin{enumerate}
\labitem{(B1)}{itm:sparsity.glm}(Sparsity)
 Let $\|\Theta_j\|_0\leq s$, where it holds
  $s^3(\log p)^2 (\log n)^2/n\rightarrow 0.$
\labitem{(B2)}{itm:w}\label{w}
(First order differentiability)
Assume that  $w(y,u)$ is Lipschitz in $u$ with a constant $L=\mathcal O(1)$ for all $\beta$ such that $\|\beta-\beta_0\|_1 \leq \delta$
for some $\delta>0$. 
\labitem{(B3)}{itm:wdot}
(Second order differentiability)
Suppose that $u\mapsto w(y,u)$ is differentiable and its derivative $\dot w$ is Lipschitz in $u$ with $L=\mathcal O(1)$
for all $\beta$ such that $\|\beta-\beta_0\|_1 \leq \delta$
for some $\delta>0$.
\end{enumerate}
Note that assumption \ref{itm:w} rules out the absolute loss and assumption \ref{itm:wdot} rules out the Huber loss. Relaxations of \ref{itm:w} will be treated in Section \ref{subsec:glm.nd} below.
We remark that the analysis in \cite{vdgeer13} requires both \ref{itm:w} and \ref{itm:wdot}, i.e. that the loss function is twice differentiable and the second derivative is Lipschitz. The conditions \ref{itm:w} and \ref{itm:wdot} are stated separately only in view of  Theorem \ref{glm.main} below. We further need conditions \ref{eig}, \ref{A3}, which are needed for the estimation of the score for the nuisance parameter as given in Section \ref{sec:nw}.
The following lemma, which underlies the theoretical result of Theorem \ref{glm.main} below, gives an bound on entropy of a certain class of functions.

\noindent
\begin{lemma}
\label{glm.entropy.sparse}
Suppose that condition \ref{itm:w} is satisfied.
For some constant $C_1>0$ let
$$\mathcal F:= \{x\mapsto \Theta_j^T x (w_{\beta}-w_{\beta_0}): \|\beta-\beta_0\|_{1} \leq \delta, \|\beta\|_0 \leq C_1 s \},$$
Then for all $\epsilon>0$ 
$$N(\mathcal F, \|\cdot\|_{n},\epsilon) \leq C_2 \left( \frac{ps}{\epsilon} \right)^s,$$
and
$$\log N(\mathcal F, \|\cdot\|_{n},\epsilon) \leq C_3 s\log p + C_3s\log \left(\frac{1}{\epsilon}\right),$$
for some constants $C_2,C_3>0.$

\end{lemma}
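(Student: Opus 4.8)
The plan is to reduce the covering problem for $\mathcal F$ to covering a union of low-dimensional $\ell_1$-balls, using that every admissible $\beta$ has at most $C_1 s$ nonzero coordinates and that the weight $w(y,\cdot)$ is Lipschitz. First I would peel off the support: there are at most $\binom{p}{C_1 s}\le p^{C_1 s}$ subsets $T\subseteq\{1,\dots,p\}$ with $|T|\le C_1 s$, and $\mathcal F=\bigcup_T\mathcal F_T$, where $\mathcal F_T$ collects those $f_\beta:=x\mapsto \Theta_j^T x\,(w_\beta-w_{\beta_0})$ with $\mathrm{supp}(\beta)\subseteq T$. It then suffices to $\epsilon$-cover each $\mathcal F_T$ and multiply the cardinalities by the number of supports.

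For a fixed $T$ I would pass from $\|\cdot\|_n$ to the parametric $\ell_1$-metric. For $\beta,\beta'$ with support in $T$ the term $w_{\beta_0}$ cancels in $f_\beta-f_{\beta'}$, and by condition \ref{itm:w},
$$|f_\beta(x,y)-f_{\beta'}(x,y)| = |\Theta_j^T x|\,|w(y,x^T\beta)-w(y,x^T\beta')|\le L\,|\Theta_j^T x|\,|x^T(\beta-\beta')|.$$
Using $\|X_i\|_\infty\le K_X$ from \ref{itm:bounded} one has $|x_i^T(\beta-\beta')|\le K_X\|\beta-\beta'\|_1$ and $|\Theta_j^T x_i|\le K_X\|\Theta_j\|_1$; and under the standing assumptions $\|\Theta_j\|_0\le s$ with $\Lambda_{\max}(\Theta)=1/\Lambda_{\min}(\Sigma)=\mathcal O(1)$ give $\|\Theta_j\|_1\le\sqrt{s}\,\|\Theta_j\|_2=\mathcal O(\sqrt s)$. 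Hence a \emph{deterministic} bound $\|f_\beta-f_{\beta'}\|_n\le C\sqrt{s}\,\|\beta-\beta'\|_1$ with $C=\mathcal O(1)$. The set of admissible $\beta$ supported in $T$ lies in an $\ell_1$-ball of radius $\delta$ in $\mathbb R^{|T|}$ (centered at the restriction of $\beta_0$), which admits an $\eta$-net in $\ell_1$ of size at most $(3\delta/\eta)^{|T|}$ with net points taken inside the ball; choosing $\eta=\epsilon/(C\sqrt s)$ gives an $\epsilon$-net of $\mathcal F_T$ of cardinality at most $(3\delta C\sqrt s/\epsilon)^{C_1 s}$.

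Multiplying by the number of supports yields $N(\mathcal F,\|\cdot\|_n,\epsilon)\le p^{C_1 s}(3\delta C\sqrt s/\epsilon)^{C_1 s}=(3\delta C p\sqrt s/\epsilon)^{C_1 s}$; bounding $\sqrt s\le s$ and absorbing the $\mathcal O(1)$ constants and the factor $C_1$ into $C_2,C_3$ gives the first inequality, and the second follows by taking logarithms and using $\log s\le\log p$. I expect the only genuinely delicate point to be keeping the estimate free of the data: this is precisely why one uses $\|X_i\|_\infty\le K_X$ and the deterministic control $\|\Theta_j\|_1=\mathcal O(\sqrt s)$ instead of the random quantity $\tfrac1n\sum_i(\Theta_j^T x_i)^2$, and it is this $\sqrt s$ that produces the extra factor $s$ over the usual $(p/\epsilon)^s$ rate. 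A minor bookkeeping point is to ensure the parametric net points still satisfy $\|\beta-\beta_0\|_1\le\delta$, so that the associated functions lie in $\mathcal F$; this holds because the nets are chosen inside the relevant $\ell_1$-balls.
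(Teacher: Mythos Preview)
Your argument is correct, but it is genuinely different from the paper's. Both proofs begin by writing $\mathcal F$ as a union over the $\binom{p}{C_1 s}\le p^{C_1s}$ possible supports, but from there the routes diverge. The paper first uses the Lipschitz property \ref{itm:w} to pass from $\mathcal F$ to the \emph{linear} class $\mathcal H=\{x\mapsto \Theta_j^T x\,x^T(\beta-\beta_0):\|\beta\|_0\le Cs\}$ via $\|f_\beta-f_{\beta'}\|_n\le L\|h_\beta-h_{\beta'}\|_n$, and then bounds the covering numbers of each $\mathcal H_i$ by a VC argument: an $s$-dimensional vector space of functions has VC-index $s+2$ (Lemma~2.6.15 in \cite{weak}), and Theorem~2.6.7 in \cite{weak} converts this into $N(\epsilon\|H\|_n,\mathcal H_i,\|\cdot\|_n)\lesssim(1/\epsilon)^{2s}$. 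This yields a bound \emph{relative to the envelope} $\|F\|_n$ that is automatically valid for every empirical (or probability) measure, and it uses only \ref{itm:w}. By contrast, you pull in the ambient assumptions \ref{itm:bounded} and \ref{itm:sparsity.glm} to get the deterministic parametric Lipschitz estimate $\|f_\beta-f_{\beta'}\|_n\le C\sqrt{s}\,\|\beta-\beta'\|_1$, and then cover the low-dimensional $\ell_1$-ball by the standard volume bound. Your approach is more elementary---no VC theory is needed---at the price of importing conditions the lemma itself does not state, and of the extra $\sqrt{s}$ coming from $\|\Theta_j\|_1$; the paper's VC route avoids this factor and gives the envelope-relative form that plugs directly into the entropy hypothesis of Theorem~\ref{main2}. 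For the logarithmic bound actually used downstream the two approaches are equivalent.

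One small bookkeeping remark: your net points $\tilde\beta$ lie in the $\ell_1$-ball around $(\beta_0)_T$ in $\mathbb R^{|T|}$, which guarantees $\|\tilde\beta_T-(\beta_0)_T\|_1\le\delta$ but not $\|\tilde\beta-\beta_0\|_1\le\delta$ when $(\beta_0)_{T^c}\neq0$. This is harmless because covering numbers do not require centers inside the set, and the Lipschitz bound from \ref{itm:w} is only applied between $\beta\in\mathcal F_T$ (which does satisfy $\|\beta-\beta_0\|_1\le\delta$) and $\tilde\beta$ with $\|\tilde\beta-\beta\|_1\le\eta$, so $\tilde\beta$ stays in a slightly enlarged neighbourhood; but your last sentence claims the net points lie in $\mathcal F$, which is not quite what you have shown.
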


\begin{theorem}\label{glm.main}
Let $\rho_\beta$ be a given loss function and assume that Conditions \ref{itm:bounded}, \ref{itm:initial.glm}, \ref{itm:sparsity.glm} and \ref{itm:w} are satisfied with $\psi_\beta=\dot \rho_\beta$. 
Let $\hat\beta$ be defined by \eqref{def.glm}.
 Define 
 $$\tilde b_j : = \hat\beta_j - \Theta_j^T\mathbb P_n \dot \rho_{\hat\beta}.$$
Then
$$\sqrt{n}(\tilde b_j - \beta^0_j)/
\sqrt{\Theta_j^T P\dot\rho_{\beta_0}\dot\rho_{\beta_0}^T \Theta_j} \rightsquigarrow \mathcal N(0,1).$$
Let $\hat\Theta_j$ be defined in \eqref{def.nw}. In addition, if conditions \ref{itm:wdot}, \ref{eig}, \ref{A3} hold, then
$$\|\hat\Theta_j-\Theta_j\|_1 = \Op (s^{3/2}\sqrt{\log p/n}),$$
and the estimator
 $$\hat b_j : = \hat\beta_j - \hat\Theta_j^T\mathbb P_n \dot \rho_{\hat\beta}$$
satisfies
$$\sqrt{n}(\hat b_j - \beta^0_j)/
\sqrt{\hat\Theta_j^T \mathbb P_n\dot\rho_{\hat\beta}\dot\rho_{\hat\beta}^T \hat\Theta_j} \rightsquigarrow \mathcal N(0,1).$$

\end{theorem}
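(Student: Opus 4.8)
The plan is to work from the master decomposition \eqref{deco} specialised to $\psi_\beta=\dot\rho_\beta$, namely
\[
\tilde b_j-\beta_j^0=-\Theta_j^T\mathbb P_n\psi_{\beta_0}-\Theta_j^T(\mathbb P_n-P)(\psi_{\hat\beta}-\psi_{\beta_0})+\big(\hat\beta_j-\beta_j^0-\Theta_j^TP(\psi_{\hat\beta}-\psi_{\beta_0})\big),
\]
and to show that, after multiplication by $\sqrt n$, the last two summands are $o_P(1)$ while the first, divided by $\sqrt{\Theta_j^TP\dot\rho_{\beta_0}\dot\rho_{\beta_0}^T\Theta_j}$, converges in distribution to $\mathcal N(0,1)$; Slutsky then gives the first display. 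The pivot $-\Theta_j^T\mathbb P_n\psi_{\beta_0}=-\tfrac1n\sum_{i=1}^n\Theta_j^T\psi_{\beta_0}(X_i,Y_i)$ is an average of i.i.d.\ mean-zero variables (since $P\psi_{\beta_0}=0$) with variance $\Theta_j^TP\psi_{\beta_0}\psi_{\beta_0}^T\Theta_j$, and Condition \ref{itm:bounded} --- the fourth-moment bound on $\Theta_j^TX_i$, the bound $\|X_i\|_\infty\le K_X$, and the lower bound on $\Theta_j^TP\psi_{\beta_0}\psi_{\beta_0}^T\Theta_j$ --- is exactly what makes the Lindeberg condition hold for this triangular array, so the central limit theorem applies with the stated normalisation.

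For the empirical-process summand I would argue on the high-probability event of Condition \ref{itm:initial.glm}, on which $\hat\beta\in\{\beta:\|\beta-\beta_0\|_1\le Cs\lambda,\ \|\beta\|_0\le Cs\}$ and, since $s\lambda\to0$, $\|\hat\beta-\beta_0\|_1\le\delta$; hence $x\mapsto\Theta_j^Tx(w_{\hat\beta}-w_{\beta_0})$ lies in the class $\mathcal F$ of Lemma \ref{glm.entropy.sparse}. Combining the Lipschitz bound $|w_{\hat\beta}-w_{\beta_0}|(x_i)\le L|x_i^T(\hat\beta-\beta_0)|$ from \ref{itm:w} with $|\Theta_j^Tx_i|\le K_X\|\Theta_j\|_1=\mathcal O(\sqrt s)$ (using $\|\Theta_j\|_0\le s$ and $\|\Theta_j\|_2\le\Lambda_{\min}(\Sigma)^{-1/2}=\mathcal O(1)$) and the prediction bound $\|X(\hat\beta-\beta_0)\|_2^2/n\le Cs\lambda^2$ shows $\|\Theta_j^Tx(w_{\hat\beta}-w_{\beta_0})\|_n\le R$ for a deterministic $R\asymp s\lambda$. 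Plugging the entropy estimate $\log N(\mathcal F,\|\cdot\|_n,u)\lesssim s\log p+s\log(1/u)$ of Lemma \ref{glm.entropy.sparse} into a chaining/maximal inequality (a Dudley entropy integral together with a Bernstein-type term accounting for the $\mathcal O(\sqrt s)$ envelope of $\mathcal F$) then bounds $\sup_{f\in\mathcal F,\,\|f\|_n\le R}|\mathbb G_nf|$ by a constant times $R\sqrt{s\log p}+R\sqrt{s\log(1/R)}+\sqrt s\,s(\log p+\log n)/\sqrt n\lesssim s^{3/2}(\log p+\log n)/\sqrt n$ (with the usual scaling $\lambda\asymp\sqrt{\log p/n}$), which is $o(1)$ under Condition \ref{itm:sparsity.glm}; therefore $\sqrt n\,\Theta_j^T(\mathbb P_n-P)(\psi_{\hat\beta}-\psi_{\beta_0})=\mathbb G_n[\Theta_j^T(\psi_{\hat\beta}-\psi_{\beta_0})]=o_P(1)$. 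The one technical nuisance is that the constraint $\|f\|_n\le R$ is random; this is handled in the standard way, either by first passing to the population norm on $\mathcal F$ via a companion entropy bound, or by arguing conditionally on $X_1,\dots,X_n$.

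For the smoothness summand I would expand $g(\beta):=\Theta_j^TP\psi_\beta$ about $\beta_0$: since $g(\beta_0)=0$ and $g'(\beta_0)=\Theta_j^T\Sigma=e_j^T$ (because $\Sigma=\Theta^{-1}$ is symmetric), this summand equals $-\int_0^1\big(g'(\beta_0+t(\hat\beta-\beta_0))-g'(\beta_0)\big)\,dt\,(\hat\beta-\beta_0)$, and the regularity of the expected loss in the paper's framework --- which for generalized linear models with bounded design is what \ref{itm:w} delivers after integrating over $Y\mid X$ --- makes this term of order $\|\Theta_j\|_1\cdot\|X(\hat\beta-\beta_0)\|_2^2/n=\Op(s^{3/2}\lambda^2)=o_P(n^{-1/2})$ by \ref{itm:sparsity.glm}; this completes the first display. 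For the feasible estimator, the rate $\|\hat\Theta_j-\Theta_j\|_1=\Op(s^{3/2}\sqrt{\log p/n})$ is the output of the nodewise-regression analysis of Section \ref{sec:nw}, invoked under Conditions \ref{itm:sparsity.glm}, \ref{itm:wdot}, \ref{eig}, \ref{A3} (the extra $\sqrt s$ over the plain Lasso $\ell_1$-rate arising from the estimation of the weights $\hat W$). Given this, the estimating equations \eqref{ee} give $\mathbb P_n\dot\rho_{\hat\beta}=-\lambda\hat Z$ with $\|\hat Z\|_\infty\le1$, so by H\"older $|\hat b_j-\tilde b_j|=|(\hat\Theta_j-\Theta_j)^T\mathbb P_n\dot\rho_{\hat\beta}|\le\lambda\|\hat\Theta_j-\Theta_j\|_1=\Op(s^{3/2}\log p/n)=o_P(n^{-1/2})$ under \ref{itm:sparsity.glm}.

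It then remains to show that $\hat\Theta_j^T\mathbb P_n[\dot\rho_{\hat\beta}\dot\rho_{\hat\beta}^T]\hat\Theta_j$ converges in probability to $\Theta_j^TP[\dot\rho_{\beta_0}\dot\rho_{\beta_0}^T]\Theta_j$, which I would do in three replacements: $\hat\Theta_j\to\Theta_j$ (controlled through $\|\hat\Theta_j-\Theta_j\|_1$ together with $\|\mathbb P_n[\dot\rho_{\hat\beta}\dot\rho_{\hat\beta}^T]\Theta_j\|_\infty=\Op(1)$ and $\|\mathbb P_n[\dot\rho_{\hat\beta}\dot\rho_{\hat\beta}^T]\|_\infty=\Op(1)$, which hold by $\|\Theta_j\|_0\le s$, boundedness of the design and the moment bound in \ref{itm:bounded}), $\dot\rho_{\hat\beta}\to\dot\rho_{\beta_0}$ (Lipschitz bound \ref{itm:w} and the $\ell_1$-rate of $\hat\beta$), and $\mathbb P_n\to P$ (a weak law of large numbers for the scalar $(\Theta_j^T\dot\rho_{\beta_0})^2$, whose variance is finite by \ref{itm:bounded}); each incurred error is $o_P(1)$ under \ref{itm:sparsity.glm}, so by Slutsky and the previous paragraph the final studentized statement for $\hat b_j$ follows. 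The main obstacle throughout is the empirical-process summand: identifying the correct empirical-norm radius $R\asymp s\lambda$ and then converting the sparse entropy bound of Lemma \ref{glm.entropy.sparse} into a genuinely $\sqrt n$-consistent bound on $\mathbb G_n$ with the right powers of $s$, $\log p$ and $\log n$ is the heart of the argument, and is exactly where the condition $s^3(\log p)^2(\log n)^2/n\to0$ of \ref{itm:sparsity.glm} is spent.
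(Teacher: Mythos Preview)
Your proposal is correct and follows essentially the same route as the paper, which packages the argument by invoking the general Theorems~\ref{mai} and~\ref{main2} and then verifies their hypotheses via Lemmas~\ref{glm.cond1}, \ref{glm.entropy.sparse}, \ref{glm.Pf2} and~\ref{f4}; the nodewise-regression rate and the variance consistency are likewise delegated to Theorem~\ref{my} and Lemma~\ref{var}. One minor technical difference worth noting: for the empirical-process radius you bound $|\Theta_j^Tx_i|$ crudely by $K_X\|\Theta_j\|_1=\mathcal O(\sqrt s)$, arriving at $R\asymp s\lambda$, whereas the paper exploits the moment condition $\mathbb E|\Theta_j^TX_i|^4=\mathcal O(1)$ from~\ref{itm:bounded} via Cauchy--Schwarz (Lemma~\ref{glm.Pf2}) to obtain the sharper population-norm radius $R\asymp s^{3/4}\lambda$. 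Your cruder bound still suffices under~\ref{itm:sparsity.glm}, and in fact both arguments land on the same sparsity requirement $s^3(\log p)^2(\log n)^2/n\to0$, because in the paper the binding constraint comes not from $R$ but from the envelope-fourth-moment condition of Theorem~\ref{main2} (handled by Lemma~\ref{f4}).
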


Compared to Theorem 3.1 in \cite{vdgeer13}, Theorem \ref{glm.main} does not relax the differentiability conditions. 
This differentiability result is however only needed for estimation of the score for the nuisance parameter $\Theta_j$ as can be seen from the first part of Theorem \ref{glm.main}. 
Thus the differentiability of the loss function can be relaxed to first derivative being Lipschitz, provided that a good   estimate of $\Theta_j$ (in $\ell_1$-norm) is available. 
\par
The second part of the theorem illustrates conditions for nodewise Lasso which are alternative to the conditions in \cite{vdgeer13}. 
We do not need the condition $\|X \Theta_j\|_\infty\leq K$ (where $K>0$ is a constant).
The price we pay for this relaxation is a stronger sparsity assumption $s^{3/2}=o(\sqrt{n}/\log p)$.  

\par
An application of the  result of Theorem \ref{glm.main} can be considered for the case of Huber loss. Huber loss is once differentiable and the derivative is Lipschitz continuous. The second derivative exists everywhere except $|u-y|=K$, however, it is not Lipschitz. Hence the results of \cite{vdgeer13}
do not apply to Huber loss. The first part of the Theorem \ref{glm.main} does apply, and one then needs to estimate $\Theta$.  This is treated in Section \ref{subsec:ex.huber}.

\subsubsection{Main results for non-differentiable loss functions}
\label{subsec:glm.nd}

\indent 
Theorem \ref{glm.main} however still does not cover an important example such as the quantile regression due to assumption \ref{itm:w}.
The absolute loss is differentiable everywhere except $u=y$, however, the first derivative is not Lipschitz.
In the conditions below, we do not require that $w$ is Lipschitz, but we require that its expectation is Lipschitz, which is a much weaker assumption.
We formulate a relaxation of Theorem \ref{glm.main} to non-differentiable functions below. 
This requires an entropy condition  on the class of functions which are related to the empirical process part of the problem.

\begin{enumerate}
\labitem{(C1)}{itm:ew}\label{ew}
Suppose that the function $\psi_\beta(y,x)$ has the form $\psi_\beta(y,x) = w(y,x^T\beta)x$ for some function $w$.
Assume that the function $u\mapsto G(u):= \int w(u,y)dP_{Y|X}$ is differentiable and $u\mapsto G'(u)$ is Lipschitz.
Suppose that for some $\delta>0,$ the function  $w_\beta$ is bounded from above and stays away from zero uniformly in $n$ for all $\beta$ that satisfy $\|\beta-\beta_0\|_1\leq \delta.$
\labitem{(C2)}{itm:other}
Suppose that 
 $\mathbb E (w_\beta-w_{\beta_0})^2=\mathcal O\left({s^{-2} (\log p)^{-2} (\log n)^{-4}}\right)$ for all $\|\beta-\beta_0\|_1\leq \delta$ for some $\delta>0$.
\end{enumerate}

\begin{theorem}\label{glm.main2}
Assume conditions \ref{itm:bounded}, \ref{itm:initial.glm}, \ref{itm:ew} and \ref{itm:other} with some function $\psi_\beta$.
Suppose that the function $\psi_\beta$ and $\hat\beta\in\mathbb R^p$ satisfy
$$\|\mathbb P_n \psi_{\hat\beta}\|_\infty = \mathcal O_P(\lambda).$$
Suppose that $\|\Theta_j\|_0\leq s$ and $\hat\Theta_j$ satisfies $\|\hat\Theta_j-\Theta_j\|_1 =\mathcal O_P(s\lambda)$. 
Consider the class of functions 
$\mathcal F:=\{\Theta_j^T (\psi_\beta - \psi_{\beta_0}): \|\beta-\beta_0\|_1 \leq s\lambda,
\mathbb E\|X(\hat\beta-\beta_0)\|_2^2/n \leq s\lambda^2, \|\beta\|_0 \leq s\}$.
Suppose that 
\begin{equation}
\label{ent.glm}
\log N(\mathcal F, \|\cdot\|_n, \epsilon) \leq s\log \left(\frac{p\|F\|_n}{\epsilon}\right),
\end{equation}
where $F(x)=\sup_{f\in\mathcal F} |f(x)|$ is the envelope function of $\mathcal F.$
Then
$$\sqrt{n}(\hat\beta_j + \hat\Theta_j^T\mathbb P_n \psi_{\hat\beta} - \beta^0_j)/
\sqrt{\Theta_j^T P\psi_{\beta_0}\psi_{\beta_0}^T \Theta_j} \rightsquigarrow \mathcal N(0,1).$$

\end{theorem}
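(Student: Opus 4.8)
The plan is to read off asymptotic normality from the decomposition \eqref{deco} with the unknown $\Theta_j$ replaced by $\hat\Theta_j$. Adding and subtracting $\hat\Theta_j^T\mathbb P_n\psi_{\hat\beta}$ in \eqref{deco} shows that $\hat\beta_j-\beta_j^0-\hat\Theta_j^T\mathbb P_n\psi_{\hat\beta}$ equals the pivot $-\Theta_j^T\mathbb P_n\psi_{\beta_0}$, plus the empirical process part $-\Theta_j^T(\mathbb P_n-P)(\psi_{\hat\beta}-\psi_{\beta_0})$, plus the smoothness part $\hat\beta_j-\beta_j^0-\Theta_j^T P(\psi_{\hat\beta}-\psi_{\beta_0})$, plus the plug-in error $-(\hat\Theta_j-\Theta_j)^T\mathbb P_n\psi_{\hat\beta}$. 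It therefore suffices to show the last three blocks are $o_P(1/\sqrt n)$ and that the pivot, normalized by $\sigma_j:=(\Theta_j^T P\psi_{\beta_0}\psi_{\beta_0}^T\Theta_j)^{1/2}$, is asymptotically standard normal. The latter is routine: since $P\psi_{\beta_0}=0$, $-\sqrt n\,\Theta_j^T\mathbb P_n\psi_{\beta_0}$ is a normalized sum of i.i.d.\ centered variables with variance $\sigma_j^2$, and by \ref{itm:bounded} ($\mathbb E|\Theta_j^T X|^4=\mathcal O(1)$, $1/\sigma_j^2=\mathcal O(1)$) together with the upper boundedness of $w_{\beta_0}$ from \ref{itm:ew} (which yields $\sigma_j^2=\mathcal O(1)$ and a bounded fourth moment of $\Theta_j^T\psi_{\beta_0}$), Lyapunov's condition holds and $-\sqrt n\,\Theta_j^T\mathbb P_n\psi_{\beta_0}/\sigma_j\rightsquigarrow\mathcal N(0,1)$; Slutsky's lemma then closes the argument once the remainders are controlled.

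For the empirical process block, by \ref{itm:initial.glm} the estimator $\hat\beta$ lies in the (nonrandom) index set of $\mathcal F$ with probability tending to one, so on that event $|\sqrt n\,\Theta_j^T(\mathbb P_n-P)(\psi_{\hat\beta}-\psi_{\beta_0})|\leq\sup_{f\in\mathcal F}|\mathbb G_n f|$, and I would bound $\mathbb E\sup_{f\in\mathcal F}|\mathbb G_n f|$ by a chaining/maximal inequality (e.g.\ \cite{weak}): a leading term of order $\int_0^{\sigma}\sqrt{\log N(\mathcal F,\|\cdot\|_n,\epsilon)}\,d\epsilon$ with $\sigma:=\sup_{f\in\mathcal F}\|f\|$, plus a lower-order term of order $n^{-1/2}\|F\|_\infty\log N(\mathcal F,\|\cdot\|_n,\sigma)$. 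The entropy bound \eqref{ent.glm} is assumed; the crucial point is smallness of $\sigma$. Factoring $f=\Theta_j^T x\,(w_\beta-w_{\beta_0})$, using that $w_\beta$ is bounded above and away from zero (\ref{itm:ew}), so $(w_\beta-w_{\beta_0})^4\lesssim(w_\beta-w_{\beta_0})^2$, and Cauchy--Schwarz with $\mathbb E|\Theta_j^T X|^4=\mathcal O(1)$, one gets $\sigma^2\lesssim(\mathbb E(w_\beta-w_{\beta_0})^2)^{1/2}=\mathcal O(s^{-1}(\log p)^{-1}(\log n)^{-2})$ by \ref{itm:other}; moreover $\|F\|_n=\mathcal O_P(1)$ and, since $\|\Theta_j\|_1\leq\sqrt s\,\Lambda_{\max}(\Theta)=\mathcal O(\sqrt s)$, $\|F\|_\infty=\mathcal O(\sqrt s)$. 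Substituting, the leading term is $\lesssim\sqrt{\sigma^2 s\log(np)}=\mathcal O((\log n)^{-1}\sqrt{\log(np)/\log p})\to0$ and the lower-order term is $\lesssim s^{3/2}\log(np)/\sqrt n$, which tends to zero in the sparsity regime in force (\ref{itm:other} together with the usual Lasso rate for $\lambda$, cf.\ \ref{itm:sparsity.glm}); hence the block is $o_P(1/\sqrt n)$. \emph{This is the step I expect to be the main obstacle:} the powers of $\log n$ in \ref{itm:other} are calibrated precisely so that the small $L_2$-radius of $\mathcal F$ absorbs the $s\log p$ entropy growth, and balancing the lower-order chaining term against the available sparsity requires care.

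It remains to bound the smoothness and plug-in blocks. For the first, put $h(\beta):=P\psi_\beta$; by the very definition of $\Theta$ one has $\Sigma=\Theta^{-1}=h'(\beta_0)$, hence $\Theta_j^T h'(\beta_0)=e_j^T$ and the block equals $-\Theta_j^T\{h(\hat\beta)-h(\beta_0)-h'(\beta_0)(\hat\beta-\beta_0)\}$. By \ref{itm:ew}, $h(\beta)=\mathbb E[G(X^T\beta)X]$ with $u\mapsto G'(u)$ Lipschitz, so the integral form of the Taylor remainder gives $|G(x^T\hat\beta)-G(x^T\beta_0)-G'(x^T\beta_0)x^T(\hat\beta-\beta_0)|\lesssim|x^T(\hat\beta-\beta_0)|^2$; on the event of \ref{itm:initial.glm}, using H\"older together with $\|X\|_\infty\leq K_X$, $\|\hat\beta-\beta_0\|_1\lesssim s\lambda$, $\mathbb E|\Theta_j^T X|^2=\mathcal O(1)$ and $\mathbb E|X^T(\hat\beta-\beta_0)|^2\lesssim s\lambda^2$, the smoothness block is $\lesssim\mathbb E[|\Theta_j^T X|\,|X^T(\hat\beta-\beta_0)|^2]\lesssim s^{3/2}\lambda^2=o(1/\sqrt n)$. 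For the plug-in block, H\"older's inequality and the hypotheses $\|\hat\Theta_j-\Theta_j\|_1=\mathcal O_P(s\lambda)$ and $\|\mathbb P_n\psi_{\hat\beta}\|_\infty=\mathcal O_P(\lambda)$ give $|(\hat\Theta_j-\Theta_j)^T\mathbb P_n\psi_{\hat\beta}|\leq\|\hat\Theta_j-\Theta_j\|_1\,\|\mathbb P_n\psi_{\hat\beta}\|_\infty=\mathcal O_P(s\lambda^2)=o_P(1/\sqrt n)$. Collecting the four estimates yields $\sqrt n(\hat b_j-\beta_j^0)=-\sqrt n\,\Theta_j^T\mathbb P_n\psi_{\beta_0}+o_P(1)$, and the conclusion follows from the central limit theorem for the pivot and Slutsky's lemma.
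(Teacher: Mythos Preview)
Your proposal is correct and follows essentially the same route as the paper: the same four-term decomposition, the same Taylor argument for the smoothness block via the Lipschitz property of $G'$ in \ref{itm:ew} (the paper packages this as Lemma~\ref{glm.cond1} feeding into condition~\ref{itm:taylor} of Theorem~\ref{mai}), the same H\"older bound for the plug-in block, and the same entropy/small-radius argument for the empirical process block. The only structural difference is that the paper invokes the general Theorems~\ref{mai} and~\ref{main2} rather than arguing from scratch; in particular, where you cite a ready-made maximal inequality of the form ``$\int_0^\sigma\sqrt{\log N}\,d\epsilon$ plus $n^{-1/2}\|F\|_\infty\log N$'', the paper's Theorem~\ref{main2} derives an equivalent bound by applying Dudley with the random radius $\hat R=\sup_f\|f\|_n$ and then controlling $\hat R$ through a second Dudley applied to $\mathcal F^2$, ultimately reducing to the two conditions $R\log n\sqrt{s\log p}=o(1)$ and $(\mathbb E\sup_f f^4)^{1/4}(\log n\sqrt{s\log p/n})^{3/2}=o(1/\sqrt n)$, which are verified exactly as you do via \ref{itm:other} and the boundedness of $w_\beta$.
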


The theorem replaces the differentiability assumption by an entropy condition (see \cite{weak} for similar arguments).
The theorem assumes that we can estimate the score for the nuisance parameter; this is discussed in Section \ref{sec:nw}.

\subsubsection{Sparsity of the Lasso}
\label{subsec:spa}
The following lemma shows that under mild conditions, the $\ell_1$-penalized M-estimator $\hat\beta$ has sparsity of the same order as 
$\beta_0$ with high probability. It is worth out point out that we  do not require differentiability of the loss function. 
\begin{lemma}\label{sparsity.glm}
Suppose that  condition \ref{itm:bounded} holds.
Let $\hat\beta$ be defined by $\mathbb P_n \psi_{\hat\beta} + \lambda\hat Z=0$, where $\hat Z$ is the sub-differential of the $\ell_1$-norm evaluated at $\hat\beta$ and  where the function $\beta\mapsto \psi_\beta$ satisfies condition
\ref{itm:ew} with a function $G$ such that $|G'|\leq K$ for some constant $K>0$. Assume that $\|X(\hat\beta-\beta_0)\|_n^2 = \mathcal O_P(s\lambda^2)$ and $\|\hat\beta-\beta_0\|_1=\mathcal O_P(s\lambda).$ Further assume that \eqref{ent.glm} is satisfied. 
Then 
$$\|\hat\beta \|_0 =\mathcal O_P(s).$$
\end{lemma}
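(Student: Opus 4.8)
\medskip
\noindent\emph{Proof plan.}
The plan is to start from the stationarity conditions $\mathbb P_n\psi_{\hat\beta}+\lambda\hat Z=0$, produce a lower bound of exact order $\lambda\sqrt{|\hat S|}$ with $\hat S:=\{j:\hat\beta_j\neq0\}$ for a certain $\ell_2$-norm, and match it against an upper bound that splits into a pivot, an empirical process and a smoothness part. For $j\in\hat S$ one has $\hat Z_j=\text{sign}(\hat\beta_j)$, so $(\mathbb P_n\psi_{\hat\beta})_j=-\lambda\,\text{sign}(\hat\beta_j)$ and $|(\mathbb P_n\psi_{\hat\beta})_j|=\lambda$; writing $\mathbb P_n\psi_{\hat\beta}=\mathbb P_n\psi_{\beta_0}+\mathbb P_n(\psi_{\hat\beta}-\psi_{\beta_0})$ and using $\|\mathbb P_n\psi_{\beta_0}\|_\infty\le\lambda/5$ (justified below) gives $|(\mathbb P_n(\psi_{\hat\beta}-\psi_{\beta_0}))_j|\ge 4\lambda/5$ for \emph{every} $j\in\hat S$, hence
\[
\|(\mathbb P_n(\psi_{\hat\beta}-\psi_{\beta_0}))_{\hat S}\|_2\ \ge\ \tfrac45\,\lambda\,\sqrt{|\hat S|}.
\]
Decomposing $\mathbb P_n(\psi_{\hat\beta}-\psi_{\beta_0})=(\mathbb P_n-P)(\psi_{\hat\beta}-\psi_{\beta_0})+P(\psi_{\hat\beta}-\psi_{\beta_0})$ and using $\|a_{\hat S}\|_2=\sup_v v^T a$ over unit vectors $v$ supported on $\hat S$ (so $\|v\|_0\le|\hat S|$), one also has
\[
\|(\mathbb P_n(\psi_{\hat\beta}-\psi_{\beta_0}))_{\hat S}\|_2\ \le\ T_{\mathrm{EP}}+T_{\mathrm{sm}},\qquad
T_{\mathrm{EP}}:=\sup_v v^T(\mathbb P_n-P)(\psi_{\hat\beta}-\psi_{\beta_0}),\quad
T_{\mathrm{sm}}:=\sup_v v^T P(\psi_{\hat\beta}-\psi_{\beta_0}).
\]
I would then show $T_{\mathrm{sm}}=\Op(\sqrt s\,\lambda)$ and $T_{\mathrm{EP}}\le\tfrac15\lambda\sqrt{|\hat S|}+\Op(\sqrt s\,\lambda)$; combined with the displayed lower bound this forces $\tfrac35\lambda\sqrt{|\hat S|}\le\Op(\sqrt s\,\lambda)$, i.e.\ $\|\hat\beta\|_0=|\hat S|=\Op(s)$.

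\emph{The pivot and the smoothness part.} Since $\psi_{\beta_0}(y,x)=w(y,x^T\beta_0)x$ with $w_{\beta_0}$ bounded (condition \ref{itm:ew}), $\|X_i\|_\infty\le K_X$ (condition \ref{itm:bounded}) and $P\psi_{\beta_0}=0$, each coordinate of $\mathbb P_n\psi_{\beta_0}$ is an average of i.i.d.\ bounded mean-zero variables, so Hoeffding's inequality and a union bound over the $p$ coordinates give $\|\mathbb P_n\psi_{\beta_0}\|_\infty=\Op(\sqrt{\log p/n})$; taking $\lambda$ to be a sufficiently large constant multiple of $\sqrt{\log p/n}$ (still compatible with the assumed oracle rates) makes $\|\mathbb P_n\psi_{\beta_0}\|_\infty\le\lambda/5$ with probability tending to one, as used above. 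For $T_{\mathrm{sm}}$, condition \ref{itm:ew} gives $P(\psi_\beta-\psi_{\beta_0})_k=P[x_k(G(x^T\beta)-G(x^T\beta_0))]$, so for $v$ supported on $\hat S$, using $|G'|\le K$ and Cauchy--Schwarz,
\[
v^T P(\psi_{\hat\beta}-\psi_{\beta_0})=P\!\left[(x^Tv)\,(G(x^T\hat\beta)-G(x^T\beta_0))\right]\ \le\ K\,(v^T\Sigma v)^{1/2}\,\big(P(x^T(\hat\beta-\beta_0))^2\big)^{1/2}.
\]
Here $v^T\Sigma v\le\Lambda_{\max}(\Sigma)=\mathcal O(1)$, and the assumed prediction bound, in its population form $P(x^T(\hat\beta-\beta_0))^2=\Op(s\lambda^2)$ (obtained from $\|X(\hat\beta-\beta_0)\|_n^2=\Op(s\lambda^2)$ by a standard comparison of empirical and population prediction norms over the uniformly bounded class $\{(x^T(\beta-\beta_0))^2:\|\beta-\beta_0\|_1\le Cs\lambda\}$), yields $T_{\mathrm{sm}}=\Op(\sqrt s\,\lambda)$.

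\emph{The empirical process part.} This is the crux, and one must \emph{not} pass to the coordinatewise bound $\|((\mathbb P_n-P)(\psi_{\hat\beta}-\psi_{\beta_0}))_{\hat S}\|_2\le\sqrt{|\hat S|}\,\|(\mathbb P_n-P)(\psi_{\hat\beta}-\psi_{\beta_0})\|_\infty$, whose right-hand side is only of order $\sqrt{|\hat S|}\,\Op(\sqrt s\,\lambda)$ and too weak to close the argument. Instead, since $\psi_\beta-\psi_{\beta_0}=(w_\beta-w_{\beta_0})x$ and $\hat\beta$ is itself $|\hat S|$-sparse and obeys the oracle constraints, write
\[
T_{\mathrm{EP}}\ \le\ \sup\Big\{(\mathbb P_n-P)\big[(x^Tv)(w_\beta-w_{\beta_0})\big]\ :\ \|v\|_2\le1,\ \|v\|_0\le|\hat S|,\ \beta\in\mathcal B_{|\hat S|}\Big\},
\]
where $\mathcal B_m$ is the set of $m$-sparse $\beta$ obeying the oracle constraints around $\beta_0$. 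For each fixed $m$ this class is uniformly bounded (by $2\|w\|_\infty K_X\sqrt m$, using that $w_\beta$ is bounded and $\|v\|_1\le\sqrt m$), has $L_2(P)$-diameter of order one (since $P(x^Tv)^2=v^T\Sigma v\le\Lambda_{\max}(\Sigma)$, independently of $m$), and, combining the standard covering bound for $m$-sparse unit directions with an entropy bound of the type \eqref{ent.glm} for $\{w_\beta-w_{\beta_0}:\beta\in\mathcal B_m\}$, has log-covering number $\lesssim(m+s)\log(p/\epsilon)$. A maximal inequality for empirical processes (see \cite{weak}) then gives expected supremum of order $\sqrt{(m+s)\log p/n}$, and Talagrand's concentration inequality together with a peeling argument over $m=1,\dots,p$ shows that, on an event of probability tending to one, the supremum is at most $c\,\sqrt{(m+s)\log p/n}$ \emph{simultaneously for all} $m$, with $c$ an absolute constant. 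Evaluated at $m=|\hat S|$, and recalling that $\lambda$ is a sufficiently large multiple of $\sqrt{\log p/n}$, this gives $T_{\mathrm{EP}}\le\tfrac15\lambda\sqrt{|\hat S|}+\Op(\sqrt s\,\lambda)$, completing the proof.

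\emph{Where the difficulty lies.} The delicate part is $T_{\mathrm{EP}}$: it has to be kept as a supremum over \emph{both} the sparse direction $v$ and the sparse parameter $\beta$, so that its bound scales like $\sqrt{|\hat S|}$ times a vanishing factor rather than like $\sqrt{|\hat S|}\,\Op(\sqrt s\,\lambda)$; the dependence on the \emph{random} support size $|\hat S|$ must be handled uniformly (peeling); and the estimate must rest on the entropy condition \eqref{ent.glm} rather than on Lipschitz continuity of $w$, the latter being exactly what fails for the non-differentiable losses the lemma is designed to cover. Secondary technical points are the empirical-to-population transfer of the prediction norm used for $T_{\mathrm{sm}}$, the (mild) bootstrapping ensuring that $\hat\beta$ lies in a class over which the entropy bound is available, and the standing boundedness of $\Lambda_{\max}(\Sigma)$.
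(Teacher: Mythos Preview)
Your approach shares the skeleton of the paper's proof: start from the KKT identity $\mathbb P_n\psi_{\hat\beta}=-\lambda\hat Z$, isolate the coordinates in $\hat S$ where $|\hat Z_j|=1$, derive a lower bound of order $\lambda\sqrt{|\hat S|}$ for an $\ell_2$-type quantity, and match it against an upper bound of order $\sqrt{s}\,\lambda$. The paper, however, organizes the pieces differently. It rewrites the KKT identity as
\[
P(\psi_{\hat\beta}-\psi_{\beta_0})\;=\;-\lambda\hat Z-\mathbb P_n\psi_{\beta_0}-(\mathbb P_n-P)(\psi_{\hat\beta}-\psi_{\beta_0}),
\]
bounds the \emph{full} $\ell_2$-norm of the left side directly by the mean-value theorem (using $|G'|\le K$ and an operator-norm bound on $P_x[G'(x^T\tilde\beta)xx^T]$) to get $\|P(\psi_{\hat\beta}-\psi_{\beta_0})\|_2=\mathcal O_P(\sqrt{s}\,\lambda)$, and then lower-bounds the right side on $\hat S$ coordinatewise after asserting the $\ell_\infty$ bound $\|(\mathbb P_n-P)(\psi_{\hat\beta}-\psi_{\beta_0})\|_\infty=\mathcal O_P(\lambda_0)$ with $\lambda_0\asymp\sqrt{\log p/n}$, invoking \eqref{ent.glm} ``analogously as in the proof of Theorem~\ref{glm.main2}''. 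This is precisely the coordinatewise route you say one ``must not'' take; the paper does take it, and it works formally because the $\ell_\infty$ bound is claimed to be $\mathcal O_P(\lambda_0)$, not $\mathcal O_P(\sqrt{s}\,\lambda)$.

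The substantive difference is therefore in how the empirical process is controlled. The paper's argument is shorter but, as written, leans on \eqref{ent.glm}, whose function class carries the constraint $\|\beta\|_0\le s$; applying it at $\hat\beta$ presupposes exactly the sparsity one is trying to establish. Your peeling over $m=|\hat S|$ is designed to close this gap and is the more rigorous route; the price is that you need the entropy bound not just for the single class in \eqref{ent.glm} but for the whole family $\{(x^Tv)(w_\beta-w_{\beta_0}):\|v\|_2\le1,\|v\|_0\le m,\ \beta\in\mathcal B_m\}$ with $\log$-covering number $\lesssim m\log(p/\epsilon)$, uniformly in $m$. This is very plausible (the VC-type arguments in Lemmas~\ref{glm.entropy.sparse} and \ref{qr.entropy} scale naturally with the sparsity), but strictly it extrapolates from what the lemma assumes. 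Two minor points: your $T_{\mathrm{sm}}$ bound uses $v^T\Sigma v\le\Lambda_{\max}(\Sigma)$, but what appears is the design second-moment matrix $\mathbb E xx^T$, whose largest eigenvalue is not explicitly controlled by \ref{itm:bounded} (the paper's proof uses the same unstated bound); and the paper gets the smoothness contribution directly as a full $\ell_2$-norm, avoiding your sup over $v$ and the empirical-to-population transfer of the prediction norm.
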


\subsubsection{Estimation of asymptotic variance}
\label{subsec:var}
To construct confidence intervals, one needs to estimate the asymptotic variance of the de-sparsified estimator.
The following lemma shows that we can use $\hat\Theta_j^T P_n\psi_{\hat\beta}\psi_{\hat\beta}^T \hat\Theta_j$  as an estimator of the asymptotic variance of the de-sparsified estimator, where $\hat\Theta$ is the nodewise regression estimator. 
\begin{lemma}\label{var}
Assume conditions \ref{itm:bounded}, \ref{itm:initial.glm}, \ref{itm:sparsity.glm}, \ref{itm:w} and suppose that 
$\|\hat\Theta_j - \Theta_j\|_1=\Op (s\lambda^2).$ Then 
$$|\hat\Theta_j^T P_n\psi_{\hat\beta}\psi_{\hat\beta}^T \hat\Theta_j - \Theta_j^T P\psi_{\beta_0}\psi_{\beta_0}^T \Theta_j| = o_P(1).$$
\end{lemma}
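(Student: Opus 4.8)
The plan is to control the difference by a triangle inequality into four pieces, each shown to be $o_P(1)$ by elementary means — no empirical-process or entropy machinery is needed here, since $\hat\beta$ and $\hat\Theta_j$ enter only through their already-established rates rather than through a supremum over a class. Write $\Delta := \hat\Theta_j - \Theta_j$ and $M := \mathbb{P}_n\psi_{\hat\beta}\psi_{\hat\beta}^T$. Expanding $\hat\Theta_j = \Theta_j + \Delta$ and inserting $\psi_{\beta_0}$ and $P$ gives
\begin{align*}
\hat\Theta_j^T \mathbb{P}_n\psi_{\hat\beta}\psi_{\hat\beta}^T \hat\Theta_j - \Theta_j^T P\psi_{\beta_0}\psi_{\beta_0}^T \Theta_j
&= 2\Theta_j^T M \Delta + \Delta^T M \Delta
+ \Theta_j^T \mathbb{P}_n\bigl(\psi_{\hat\beta}\psi_{\hat\beta}^T - \psi_{\beta_0}\psi_{\beta_0}^T\bigr)\Theta_j \\
&\quad{}+ \Theta_j^T(\mathbb{P}_n - P)\psi_{\beta_0}\psi_{\beta_0}^T\Theta_j .
\end{align*}

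First I would dispose of the last term. Since $\Theta_j^T\psi_{\beta_0} = w_{\beta_0}\,\Theta_j^T X$ and $\|\Theta_j\|_0 \le s$, the moment bounds in \ref{itm:bounded} (bounded design, $\mathbb{E}|\Theta_j^T X_i|^4 = \mathcal O(1)$, together with finiteness of $\mathbb{E} w_{\beta_0}^2$) yield $\mathbb{E}(\Theta_j^T\psi_{\beta_0})^2 = \mathcal O(1)$, so by the weak law of large numbers $\Theta_j^T(\mathbb{P}_n - P)\psi_{\beta_0}\psi_{\beta_0}^T\Theta_j = o_P(1)$; in particular $\mathbb{P}_n(\Theta_j^T\psi_{\beta_0})^2 = \mathcal O_P(1)$. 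Next, for the $\psi_{\hat\beta}\psi_{\hat\beta}^T - \psi_{\beta_0}\psi_{\beta_0}^T$ term, write $a_i := \Theta_j^T\psi_{\hat\beta}(X_i,Y_i)$, $b_i := \Theta_j^T\psi_{\beta_0}(X_i,Y_i)$; then it equals $\mathbb{P}_n(a-b)^2 + 2\mathbb{P}_n b(a-b)$, and by Cauchy--Schwarz together with $\mathbb{P}_n b^2 = \mathcal O_P(1)$ it suffices to show $\|\Theta_j^T(\psi_{\hat\beta}-\psi_{\beta_0})\|_n^2 = o_P(1)$. On the event $\{\|\hat\beta-\beta_0\|_1\le\delta\}$, which has probability tending to one by \ref{itm:initial.glm} and \ref{itm:sparsity.glm}, the Lipschitz property \ref{itm:w} gives $|a_i-b_i| \le L\,|\Theta_j^T X_i|\,|X_i^T(\hat\beta-\beta_0)|$, so
$$\|\Theta_j^T(\psi_{\hat\beta}-\psi_{\beta_0})\|_n^2 \le L^2\bigl(\max_i(\Theta_j^T X_i)^2\bigr)\,\|X(\hat\beta-\beta_0)\|_n^2 \le L^2 K_X^2\,\|\Theta_j\|_1^2\,\|X(\hat\beta-\beta_0)\|_n^2 .$$
Since $\|\Theta_j\|_1^2 \le s\|\Theta_j\|_2^2 = \mathcal O(s)$ (using $\Lambda_{\max}(\Theta) = 1/\Lambda_{\min}(\Sigma) = \mathcal O(1)$) and $\|X(\hat\beta-\beta_0)\|_n^2 = \mathcal O_P(s\lambda^2)$ by \ref{itm:initial.glm}, this is $\mathcal O_P(s^2\lambda^2) = \mathcal O_P(s^2\log p/n) = o_P(1)$ under \ref{itm:sparsity.glm}. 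Finally, for the $\hat\Theta$-terms, bound $\Delta^T M\Delta = \mathbb{P}_n(\Delta^T\psi_{\hat\beta})^2 \le \|\Delta\|_1^2\max_{k,l}|\mathbb{P}_n\psi_{\hat\beta,k}\psi_{\hat\beta,l}| \le \|\Delta\|_1^2 K_X^2\,\mathbb{P}_n w_{\hat\beta}^2$, where $\mathbb{P}_n w_{\hat\beta}^2 \le 2\mathbb{P}_n w_{\beta_0}^2 + 2L^2\|X(\hat\beta-\beta_0)\|_n^2 = \mathcal O_P(1)$; with $\|\Delta\|_1 = \mathcal O_P(s\lambda^2)$ this gives $\Delta^T M\Delta = \mathcal O_P(s^2\lambda^4) = o_P(1)$, and by Cauchy--Schwarz $|\Theta_j^T M\Delta| \le \sqrt{\Theta_j^T M\Theta_j}\,\sqrt{\Delta^T M\Delta}$ with $\Theta_j^T M\Theta_j = \mathbb{P}_n(\Theta_j^T\psi_{\hat\beta})^2 = \mathcal O_P(1)$ from the two preceding steps, so both $\hat\Theta$-terms are $o_P(1)$.

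The main obstacle is obtaining the right rate for the $\hat\beta$-term: combining the prediction-error bound $\|X(\hat\beta-\beta_0)\|_n^2 = \mathcal O_P(s\lambda^2)$ with $\|\Theta_j\|_1 = \mathcal O(\sqrt{s})$ and $\|X\|_\infty \le K_X$ produces $s^2\lambda^2 \asymp s^2\log p/n$, and one must verify that this (and the milder $s^2\lambda^4$ from the $\hat\Theta$-term) is killed by the sparsity condition \ref{itm:sparsity.glm}. A secondary nuisance is that under \ref{itm:w} alone the weight $w_{\beta_0}$ need not be bounded, so one has to either invoke, or record as implicit in \ref{itm:bounded}, the finiteness of $\mathbb{E} w_{\beta_0}^2$ (hence of $\mathbb{E}(\Theta_j^T\psi_{\beta_0})^2$) in order to run the law of large numbers; once these moments are secured the remainder of the argument is routine.
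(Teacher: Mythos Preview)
Your proof is correct. The paper's own argument uses a somewhat different decomposition: it keeps $\hat\Theta_j$ throughout and bounds
\[
|\hat\Theta_j^T(\mathbb P_n-P)\psi_{\beta_0}\psi_{\beta_0}^T\hat\Theta_j|
\;+\;
|\hat\Theta_j^T P(\psi_{\beta_0}\psi_{\beta_0}^T-\psi_{\hat\beta}\psi_{\hat\beta}^T)\hat\Theta_j|,
\]
controlling the first term via H\"older by $\|\hat\Theta_j\|_1^2\,\|(\mathbb P_n-P)\psi_{\beta_0}\psi_{\beta_0}^T\|_\infty=\mathcal O_P(s\sqrt{\log p/n})$ (a Nemirovski-type bound over the $p^2$ matrix entries) and the second by $\|\hat\Theta_j\|_1^2\,K_X^2\,|P(w_{\hat\beta}^2-w_{\beta_0}^2)|=\mathcal O_P(s^{3/2}\lambda)$ using the Lipschitz property under the \emph{theoretical} measure $P$. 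Your route instead first peels off $\Delta=\hat\Theta_j-\Theta_j$ explicitly, then handles the $\Theta_j$ part with a scalar law of large numbers for $(\Theta_j^T\psi_{\beta_0})^2$ and controls the $\hat\beta\to\beta_0$ piece under the \emph{empirical} measure $\mathbb P_n$. Your version has two advantages: it does not need uniform concentration over all $p^2$ entries of the Gram matrix, and it makes the passage $\hat\Theta_j\to\Theta_j$ explicit (the paper's written two-term decomposition does not visibly contain this step). The trade-off is that working under $\mathbb P_n$ forces you to pull out $\max_i(\Theta_j^TX_i)^2\le K_X^2\|\Theta_j\|_1^2$, which costs an extra factor of $s$ relative to the paper's $s^{3/2}\lambda$ rate; this is harmless here since only $o_P(1)$ is claimed. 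Your flag about needing $\mathbb E w_{\beta_0}^2<\infty$ is fair and applies equally to the paper's argument, which implicitly requires the same moment for its concentration step.
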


\section{Nodewise regression for estimation of precision matrices
}
\label{sec:nw}
 Our goal in this section is to provide estimators for 
$\Theta:=\Sigma^{-1}$, 
where 
$$\Sigma:=(\mathbb E\psi_\beta(x,y))'_{\beta=\beta_0}.$$
If the parameter of interest is $\beta_j^0$ for some $j$, we only need to estimate $\Theta_j.$
In the next sections, we suggest procedures for estimation of $\Theta$ in generalized linear models.
We first consider first the case when the loss function is differentiable and we also discuss the case when it is not differentiable.

\subsection{Generalized linear models with differentiable loss functions}
If the loss function is twice differentiable, then 
$$\Sigma =\mathbb E \ddot \rho(y,x^T\beta_0)xx^T= \mathbb E \dot w (y,x^T\beta_0) xx^T.$$
Hence we can approximate 
$\Sigma$ by the empirical version 
$$\hat\Sigma_{\hat\beta}:=\s \dot w_{\hat\beta}(y_i,x_i) x_ix_i^T.$$
This matrix is not invertible in high-dimensional settings, but we can use e.g. nodewise regression to approximately invert it, as outlined in Section \ref{sec:glm.main}. The main difficulty here is that the estimator depends on the estimator $\hat\beta$. 
\\
We formulate the results for general weights satisfying the conditions below. Let the weight matrix $ W_{\hat\beta}$ be given by 
$$ W_{\hat\beta} := \text{diag}(v(y_i,x_i^T\hat\beta))_{i=1,\dots,n},$$
for some weight function $v$. 
Let $\Sigma :=\mathbb E v (y,x^T\beta) xx^T$ and $\Theta=\Sigma^{-1}.$
Below we provide theoretical guarantees for the estimator $\hat\Theta_{j}$ defined in equation \eqref{def.nw}, with the weight matrix $W_{\hat\beta}$.
We make the following assumptions.
\begin{enumerate}[label=(E\arabic*),start=1]
\item
\label{eig}
The matrix inverse $\Theta:=\Sigma^{-1}$ exists and $\max_{j=1,\dots,p} \|\Theta_j\|_0 \leq s.$
Moreover,
$1/\Lambda_{\min}(\Sigma)=\mathcal O(1)$, $\|\Sigma\|_\infty = \mathcal O(1)$.  
\item
\label{A3}
There exists some $\delta>0$ such that for all $\|\beta-\beta_0\|_1 \leq \delta$ it 
holds that $1/v_\beta=\mathcal O(1)$ and $v_\beta =\mathcal O(1).$
Furthermore, $v_\beta$ is Lipschitz with a universal constant.
\item
\label{sp}
It holds that $s^{3/2}\sqrt{\log p /n}=o(1).$
\end{enumerate}

\begin{theorem}\label{my}
Suppose that Conditions \ref{itm:bounded}, \ref{itm:initial.glm} and \ref{eig}-\ref{sp} are satisfied. 
Then using $\lambda_j\asymp  \sqrt{\log p /n}$ for the nodewise Lasso estimator $\hat\Theta_{j}$ defined in \eqref{def.nw} it holds that
$$\|\hat\Theta_{j} - \Theta_{j}\|_1 =\mathcal O_P( s^{3/2}  \sqrt{\log p/n}). $$
\end{theorem}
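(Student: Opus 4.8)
\noindent\textit{Proof plan.}
The estimator $\hat\Theta_j$ is built from $\hat\gamma_{\hat\beta,j}$ and $\hat\tau_{\hat\beta,j}^2$ through the map in \eqref{def.nw}, and the target $\Theta_j$ is the image under the same map of the population quantities $\gamma_j^0:=\arg\min_\gamma \mathbb E[v_{\beta_0}(X_j-X_{-j}^\top\gamma)^2]$ and $\tau_j^2:=\mathbb E[v_{\beta_0}(X_j-X_{-j}^\top\gamma_j^0)^2]$, with population residual $\eta_j:=X_j-X_{-j}^\top\gamma_j^0$ (so $\Theta_{jj}=1/\tau_j^2$ and $\eta_{ij}=\tau_j^2\,\Theta_j^\top X_i$). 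The plan is to bound the right-hand side of
$$
\|\hat\Theta_j-\Theta_j\|_1\ \le\ \frac{\|\hat\gamma_{\hat\beta,j}-\gamma_j^0\|_1}{\hat\tau_{\hat\beta,j}^2}\ +\ \big(1+\|\gamma_j^0\|_1\big)\,\frac{|\hat\tau_{\hat\beta,j}^2-\tau_j^2|}{\hat\tau_{\hat\beta,j}^2\,\tau_j^2}.
$$
From \ref{eig} one gets $\tau_j^2\asymp1$; sparsity and boundedness of $\Theta$ give $\|\gamma_j^0\|_2=\mathcal O(1)$, hence $\|\gamma_j^0\|_1\le\sqrt s\,\|\gamma_j^0\|_2=\mathcal O(\sqrt s)$; and \ref{itm:bounded} gives $\mathbb E\eta_{ij}^4=\mathcal O(1)$, whence $\tfrac1n\sum_i\eta_{ij}^4=\mathcal O_P(1)$.

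Next I would control the weighted Gram matrix. Writing $\hat\Sigma_{\hat\beta}-\Sigma=(\hat\Sigma_{\hat\beta}-\hat\Sigma_{\beta_0})+(\hat\Sigma_{\beta_0}-\Sigma)$, the second term has bounded summands ($X_i$ bounded by \ref{itm:bounded}, weights bounded by \ref{A3}), so a Bernstein bound with a union bound over the $p^2$ entries gives $\|\hat\Sigma_{\beta_0}-\Sigma\|_\infty=\mathcal O_P(\sqrt{\log p/n})$; the first term is handled via the Lipschitz property in \ref{A3}, $|v(\cdot,x_i^\top\hat\beta)-v(\cdot,x_i^\top\beta_0)|\le L|x_i^\top(\hat\beta-\beta_0)|$, giving $\|\hat\Sigma_{\hat\beta}-\hat\Sigma_{\beta_0}\|_\infty\le K_X^2L\,\tfrac1n\sum_i|x_i^\top(\hat\beta-\beta_0)|\le K_X^2L\,\|X(\hat\beta-\beta_0)\|_n=\mathcal O_P(\sqrt s\,\lambda_j)$ by \ref{itm:initial.glm}. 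Hence $\|\hat\Sigma_{\hat\beta}-\Sigma\|_\infty=\mathcal O_P(\sqrt{s\log p/n})$. Since $1/\Lambda_{\min}(\Sigma)=\mathcal O(1)$, $\Sigma_{-j,-j}$ satisfies a compatibility condition with constant bounded away from zero, and as $s\,\|\hat\Sigma_{\hat\beta}-\Sigma\|_\infty=\mathcal O_P(s^{3/2}\sqrt{\log p/n})=o_P(1)$ by \ref{sp}, the usual perturbation argument transfers it to $\hat\Sigma_{\hat\beta,-j,-j}$ with high probability.

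The core step is the oracle inequality for the nodewise Lasso with \emph{plugged-in} weights. With $\delta:=\hat\gamma_{\hat\beta,j}-\gamma_j^0$, the basic inequality for \eqref{parcor} reads $\|\hat W X_{-j}\delta\|_n^2+2\lambda_j\|\hat\gamma_{\hat\beta,j}\|_1\le 2\delta^\top N+2\lambda_j\|\gamma_j^0\|_1$ with $N=\tfrac1nX_{-j}^\top\hat W^2\eta_j$. I would split $N=N_1+N_2$, $N_1=\tfrac1nX_{-j}^\top W_{\beta_0}^2\eta_j$ and $N_2=\tfrac1nX_{-j}^\top(\hat W^2-W_{\beta_0}^2)\eta_j$. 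Here $N_1$ is centred (population orthogonality $\mathbb E[v_{\beta_0}X_{-j}\eta_j]=0$) with $\|N_1\|_\infty=\mathcal O_P(\sqrt{\log p/n})$ by a truncation-plus-Bernstein-plus-union-bound argument (truncation is required because $\eta_{ij}$ has only four finite moments, the source of the logarithmic-in-$n$ factors), so a large enough constant in $\lambda_j\asymp\sqrt{\log p/n}$ gives $\lambda_j\ge4\|N_1\|_\infty$ w.h.p.; and $N_2$ is absorbed via $\delta^\top N_2=\tfrac1n(X_{-j}\delta)^\top(\hat W^2-W_{\beta_0}^2)\eta_j\le\|\hat W X_{-j}\delta\|_n\,R$ with $R^2=\tfrac1n\sum_i\hat w_i^{-2}(\hat w_i^2-w_{0,i}^2)^2\eta_{ij}^2\lesssim L^2\big(\tfrac1n\sum_i(x_i^\top(\hat\beta-\beta_0))^4\big)^{1/2}\big(\tfrac1n\sum_i\eta_{ij}^4\big)^{1/2}$, then $\tfrac1n\sum_i(x_i^\top(\hat\beta-\beta_0))^4\le\max_i(x_i^\top(\hat\beta-\beta_0))^2\,\|X(\hat\beta-\beta_0)\|_n^2\le K_X^2\|\hat\beta-\beta_0\|_1^2\|X(\hat\beta-\beta_0)\|_n^2=\mathcal O_P(s^3\lambda_j^4)$ by \ref{itm:initial.glm}, so $R^2=\mathcal O_P(s^{3/2}\lambda_j^2)$. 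After absorbing $\tfrac12\|\hat W X_{-j}\delta\|_n^2$ by Young's inequality, the standard two-case analysis (the cone case governed by $\lambda_j\|\delta_S\|_1$, $|S|=\|\gamma_j^0\|_0\le s$, using the compatibility just established, versus the case governed by $R^2$) gives $\|\hat\gamma_{\hat\beta,j}-\gamma_j^0\|_1=\mathcal O_P(s\lambda_j+R^2/\lambda_j)=\mathcal O_P(s^{3/2}\sqrt{\log p/n})$ and $\|\hat W X_{-j}\delta\|_n^2=\mathcal O_P(s^{3/2}\log p/n)$. Expanding $\hat\tau_{\hat\beta,j}^2$ around $\eta_j$ and using the LLN for $\|W_{\beta_0}\eta_j\|_n^2$, the weight-perturbation bound $L\|X(\hat\beta-\beta_0)\|_n(\tfrac1n\sum_i\eta_{ij}^4)^{1/2}$, the cross term $2\|\delta\|_1\|N\|_\infty$, the prediction bound above, and $\lambda_j\|\hat\gamma_{\hat\beta,j}\|_1\le\lambda_j(\|\gamma_j^0\|_1+\|\delta\|_1)$ — each $\mathcal O_P(\sqrt{s\log p/n})$ — yields $|\hat\tau_{\hat\beta,j}^2-\tau_j^2|=\mathcal O_P(\sqrt{s\log p/n})=o_P(1)$ and $\hat\tau_{\hat\beta,j}^2\asymp1$ w.h.p.

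Substituting into the first display, $\|\hat\Theta_j-\Theta_j\|_1=\mathcal O_P(s^{3/2}\sqrt{\log p/n})+\mathcal O_P(\sqrt s)\cdot\mathcal O_P(\sqrt{s\log p/n})=\mathcal O_P(s^{3/2}\sqrt{\log p/n})$, which is the claim. The hardest part is the oracle inequality with plugged-in weights, and within it the bound on $R^2$: since the assumption $\|X\Theta_j\|_\infty\le K$ of \cite{vdgeer13} is dropped, $\eta_{ij}$ is not uniformly bounded and $R^2$ must be squeezed out of its fourth moments together with the crude $\ell_1$-bound on $\hat\beta-\beta_0$; the resulting $R^2=\mathcal O_P(s^{3/2}\lambda_j^2)$ is precisely what manufactures the extra $\sqrt s$ in the rate and what forces condition \ref{sp}, and the same lack of uniform boundedness is what necessitates the truncation used to control $\|N_1\|_\infty$, hence the logarithmic-in-$n$ factors that show up in related conditions.
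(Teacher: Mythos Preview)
Your proof is correct and essentially matches the paper's: the paper proves the $\ell_1$-rate for $\hat\gamma_{\hat\beta,j}$ (Lemma~\ref{node.rates}) and the consistency of $\hat\tau_j^2$ (Lemma~\ref{l2}) via the same basic inequality, the same Cauchy--Schwarz/fourth-moment bound $R^2=\mathcal O_P(s^{3/2}\lambda_j^2)$ on the weight-perturbation term, and the same two-case analysis, then combines them exactly as in your first display. Two small deviations worth noting: the paper controls your $\|N_1\|_\infty$ directly by Nemirovski's moment inequality rather than truncation-plus-Bernstein, so no $\log n$ factor is incurred at this step (the $\log n$ factors in the paper's sparsity conditions originate in the entropy integral of Theorem~\ref{main2}, not here); and you make the compatibility transfer to $\hat\Sigma_{\hat\beta,-j,-j}$ explicit via $\|\hat\Sigma_{\hat\beta}-\Sigma\|_\infty=\mathcal O_P(\sqrt{s\log p/n})$, which the paper uses in Lemma~\ref{node.rates} but leaves unstated.
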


\noindent
Theorem \ref{my} relaxes the condition $\|X\Theta_j\|_\infty = \mathcal O(1)$ from \cite{vdgeer13}.
Furthermore, we remark that it is a more general result than in \cite{vdgeer13} in that the latter only considers that $\Sigma=\mathbb E\dot wxx^T$, but we allow for any $\Sigma = \mathbb E v xx^T$ for arbitrary weights $v$ satisfying the conditions. From the point of view of the proof, this makes no actual difference, however, the application of the result is then somewhat broader, 
as will be illustrated in Section \ref{subsec:o2} below.

\subsection{Generalized linear models with non-differentiable loss functions}
\label{sec:nw.nondif}
If the loss function is not differentiable, the above strategy clearly cannot be used to estimate $\Theta$.
We discuss some alternative options that could be used.
\subsubsection{Special cases}
In some settings, one can make use of the structure in $(P\psi_\beta)'_{\beta_0}.$ In particular, we have
$$(P\psi_\beta)' = \mathbb E_X (\mathbb E_Y w(u,Y_i))'_{u=x_i^T\beta} x_ix_i^T.$$
The above can be approximated by 
$$\hat \Sigma _{\hat\beta}:=\s (\mathbb E_Y w(u,Y_i))'_{u=x_i^T\hat\beta} x_ix_i^T.$$
In some situations, it is possible to calculate 
$(\mathbb E_{Y}w(u,Y_i))'_{u=x^T\beta_0}$ provided that we assume the distribution of $Y$, and then we can plug in an estimate $\hat\beta$ of $\beta_0$. 
Then we can  estimate $\Sigma$ by $ \s (\mathbb E_{Y}w(u,Y_i))'_{u=x^T\hat\beta} x_i x_i^T$.
We can then use nodewise regression with weights, and under some conditions on the weights, the nodewise regression yields good estimators of $\Theta$, see Theorem \ref{my} in Section \ref{sec:nw}.
For instance, 
for absolute loss in the linear model, we have (if $\epsilon$ and $X$ are independent)
$$(\mathbb E w_{\text{absolute}}(y,x^T\beta))'_{\beta=\beta_0} = 2f_{\epsilon}(0),$$ 
and for Huber loss 
$$(\mathbb E w_{\text{Huber}}(y,x^T\beta))'_{\beta=\beta_0} = F_{\epsilon}(K)-F_\epsilon(-K).$$

\subsubsection{Maximum likelihood}
\label{subsec:o2}
If the loss function equals the negative log-likelihood, we can consider the following approach. Denote the score function by
 $s_\beta:=\frac{\partial \rho}{\partial \beta}.$
Then (for differentiable loss) the following identity holds
$$\mathbb E s_\beta s_\beta^T = -\mathbb E\dot s_\beta=-\mathbb E\ddot \rho_{\beta}.$$
This implies that $\hat\Sigma$ has the form of a Gram matrix with inner products given by score functions corresponding to individual parameters. 
Hence we could use as an alternative estimator of $\Sigma$
$$\frac{1}{n}\sum_{i=1}^n s_{\hat \beta} (X_i) s_{\hat \beta}(X_i)^T=\s w^2(y_i,x_i^T\hat\beta) x_ix_i^T.$$
This again has the form $X^T \hat W X$, so we can do nodewise regression,
if conditions \ref{eig}-\ref{sp} are satisfied with weight function $v := w^2.$
For instance, for absolute loss, $w^2(y,u)=1.$ For Huber loss, $w^2(y,u) = 1$ if $|u|\leq K$ and $w^2(y,u)=u^2$ otherwise.
This function is Lipschitz, and hence Theorem \ref{my} can be applied.

\section{General high-dimensional models}
\label{sec:main.all}

\noindent
In this section we provide results for general models under high-level conditions. 
These are useful for insight into the underlying machinery and its limits.
Furthermore, they can be used to obtain results 
for more general models than the generalized linear models.
Assume we have independent data $X_1,\dots,X_n\in\mathcal X.$ 
We make the following assumptions.

\begin{enumerate}
\labitem{(D1)}{itm:eig0}\label{eig0}
The function $\beta\mapsto P\psi_{\beta}$ is differentiable with a matrix of first derivatives $\Sigma:=(P\psi_\beta)'|_{\beta=\beta_0}$, which satisfies the eigenvalue condition
$$ \Lambda_{\min}(\Sigma) \geq c,$$
for some $c>0.$
Denote $\Theta := \Sigma^{-1}$. 
\labitem{(D2)}{itm:taylor}\label{taylor}
Suppose that the following expansion holds
$$\|(P\psi_\beta - P\psi_{\beta_0}) - \Sigma(\beta-\beta_0)\|_\infty =\mathcal O(d^2(\beta,\beta_0)),$$
where $d$ is some metric.
\labitem{(D3)}{itm:initial}
\label{initial} 
Suppose that $\hat \Theta_j$ satisfies
$\|\hat\Theta_j-\Theta_j\|_1 = \mathcal O_P( s\lambda).$ 
\labitem{(D4)}{itm:empp}
It holds that
\begin{equation*}\label{ep}
\mathbb G_n\Theta_j^T(\psi_{\hat\beta} - \psi_{\beta_0}) =o_P(1).
\end{equation*}
\labitem{(D5)}{itm:norm}
It holds that
$$\sqrt{n}\Theta_j^T \mathbb P_n \psi_{\beta_0}
/\sqrt{\Theta_j^T P\psi_{\beta_0}\psi_{\beta_0}^T \Theta_j} \rightsquigarrow \mathcal N(0,1).$$
\end{enumerate}
\noindent

\begin{theorem}\label{mai}
Let $\psi_\beta:\mathcal X\rightarrow \mathbb R^p$ satisfy $\|\mathbb P_n \psi_{\hat\beta}\|_\infty=\mathcal O_P(\lambda)$ in a given $\hat\beta\in\mathbb R^p$ and define $\beta_0$ by $P\psi_{\beta_0}=0.$
Assume that 
 conditions  \ref{eig0}, \ref{taylor}, \ref{itm:initial}, \ref{itm:empp}, \ref{itm:norm} are satisfied with some function $d(\beta,\beta_0)$.
Denote $s:=\|\Theta_j\|_0$ and assume the sparsity condition 
\begin{equation}
\label{spa.prop}
s=o_P(\max\{ n^2d^2(\hat\beta,\beta_0) , 1/(\sqrt{n}\lambda^2)  \}).
\end{equation}
Then
$$\sqrt{n}(\hat\beta_j + \hat\Theta_j^T\mathbb P_n \psi_{\hat\beta} - \beta^0_j)/
\sqrt{\Theta_j^T P\psi_{\beta_0}\psi_{\beta_0}^T \Theta_j} \rightsquigarrow \mathcal N(0,1).$$
\end{theorem}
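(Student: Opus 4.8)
The plan is to start from the master decomposition \eqref{deco}, substitute the estimate $\hat\Theta_j$ for the unknown $\Theta_j$, and show that every term it produces other than the pivot $-\Theta_j^T\mathbb{P}_n\psi_{\beta_0}$ is $o_P(n^{-1/2})$; the claimed asymptotic normality then follows from condition \ref{itm:norm} together with Slutsky's lemma. Since $P\psi_{\beta_0}=0$ we have $\mathbb{P}_n\psi_{\hat\beta}=\mathbb{P}_n\psi_{\beta_0}+(\mathbb{P}_n-P)(\psi_{\hat\beta}-\psi_{\beta_0})+P(\psi_{\hat\beta}-\psi_{\beta_0})$, so that, writing $\hat b_j:=\hat\beta_j-\hat\Theta_j^T\mathbb{P}_n\psi_{\hat\beta}$ for the de-sparsified estimator (compare \eqref{desp} with $\Theta_j$ replaced by $\hat\Theta_j$),
\begin{align*}
\hat b_j-\beta_j^0
&=-\Theta_j^T\mathbb{P}_n\psi_{\beta_0}
-\Theta_j^T(\mathbb{P}_n-P)(\psi_{\hat\beta}-\psi_{\beta_0})\\
&\quad+\bigl(\hat\beta_j-\beta_j^0-\Theta_j^T P(\psi_{\hat\beta}-\psi_{\beta_0})\bigr)
-(\hat\Theta_j-\Theta_j)^T\mathbb{P}_n\psi_{\hat\beta}.
\end{align*}

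I would then bound the last three terms one at a time. The empirical-process term is $o_P(n^{-1/2})$ directly by \ref{itm:empp}, since it equals $n^{-1/2}\,\mathbb{G}_n\Theta_j^T(\psi_{\hat\beta}-\psi_{\beta_0})$. For the $\hat\Theta$-estimation term, H\"older's inequality with condition \ref{itm:initial} and the hypothesis $\|\mathbb{P}_n\psi_{\hat\beta}\|_\infty=\mathcal{O}_P(\lambda)$ gives $|(\hat\Theta_j-\Theta_j)^T\mathbb{P}_n\psi_{\hat\beta}|\le\|\hat\Theta_j-\Theta_j\|_1\|\mathbb{P}_n\psi_{\hat\beta}\|_\infty=\mathcal{O}_P(s\lambda^2)$, which is $o_P(n^{-1/2})$ by the $1/(\sqrt n\lambda^2)$-branch of the sparsity condition \eqref{spa.prop}. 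The smoothness term is where \ref{eig0} and \ref{taylor} come in: putting $R:=P(\psi_{\hat\beta}-\psi_{\beta_0})-\Sigma(\hat\beta-\beta_0)$ and using the identity $\Theta_j^T\Sigma=e_j^T$ (from $\Theta=\Sigma^{-1}$), the term reduces to $-\Theta_j^T R$; by \ref{taylor} one has $\|R\|_\infty=\mathcal{O}_P(d^2(\hat\beta,\beta_0))$, and since $\|\Theta_j\|_0\le s$ while $\|\Theta_j\|_2\le 1/\Lambda_{\min}(\Sigma)=\mathcal{O}(1)$ we get $\|\Theta_j\|_1\le\sqrt s\,\|\Theta_j\|_2=\mathcal{O}(\sqrt s)$, hence $|\Theta_j^T R|=\mathcal{O}_P(\sqrt s\,d^2(\hat\beta,\beta_0))$, which is again $o_P(n^{-1/2})$ under \eqref{spa.prop}.

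Putting the three estimates together, $\sqrt n(\hat b_j-\beta_j^0)=-\sqrt n\,\Theta_j^T\mathbb{P}_n\psi_{\beta_0}+o_P(1)$; dividing by $\sqrt{\Theta_j^T P\psi_{\beta_0}\psi_{\beta_0}^T\Theta_j}$ and invoking \ref{itm:norm} (the Gaussian limit is symmetric, so the sign plays no role) and Slutsky's lemma then yields the asserted $\mathcal{N}(0,1)$ limit.

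I expect the smoothness term to be the only genuine obstacle. It is the single point at which the high-dimensional geometry bites, since $\ell_1$, $\ell_2$ and $\ell_\infty$ norms on $\mathbb{R}^p$ are not asymptotically equivalent: the quadratic expansion \ref{taylor} has to be combined with the sparsity $\|\Theta_j\|_0\le s$ (through $\|\Theta_j\|_1\le\sqrt s\,\|\Theta_j\|_2$) to obtain a bound of the right order, and the sparsity condition \eqref{spa.prop} is precisely calibrated so that both $\sqrt s\,d^2(\hat\beta,\beta_0)$ and $s\lambda^2$ are negligible relative to $n^{-1/2}$. In this high-level formulation the empirical-process term is assumed away via \ref{itm:empp}; in the concrete instances (Theorems \ref{glm.main} and \ref{glm.main2}) most of the work goes into verifying that condition, using the entropy bound of Lemma \ref{glm.entropy.sparse} (or \eqref{ent.glm}) in a maximal-inequality/peeling argument together with the prediction-norm rate of \ref{itm:initial.glm}.
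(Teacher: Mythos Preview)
Your proposal is correct and follows essentially the same route as the paper's proof: the same four-term decomposition (pivot, empirical process, smoothness, $\hat\Theta$-estimation), the same use of \ref{itm:empp} for the empirical-process piece, the same H\"older argument combining \ref{itm:initial} with $\|\mathbb P_n\psi_{\hat\beta}\|_\infty=\mathcal O_P(\lambda)$ for the $\hat\Theta$-term, and the same $\|\Theta_j\|_1\le\sqrt{s}\,\mathcal O(1)$ bound (via \ref{itm:eig0}) together with \ref{itm:taylor} for the smoothness term. Your treatment is slightly more explicit than the paper's in writing out $\Theta_j^T\Sigma=e_j^T$ and naming Slutsky's lemma, but the argument is the same.
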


The conditions on the initial estimator $\hat\beta$ are in the assumption $s=o_P( n^2d(\hat\beta,\beta_0)^2)$ and in the empirical process condition \ref{itm:empp}.
The sparsity condition \eqref{spa.prop} has two parts: the first part $s=o_P(n^2d(\hat\beta,\beta_0)^2 )$ ensures that there is enough continuity in the problem and the second part 
$s=o_P(1/(\sqrt{n}\lambda^2))$ ensures that the estimator of the score for the nuisance parameter is good enough.
\noindent

Theorem \ref{mai} assumes asymptotic equicontinuity \ref{ep}. The following Theorem shows sufficient conditions for the asymptotic equicontinuity to hold. 
\begin{theorem}\label{main2}
Suppose that for the class of functions $\mathcal F$
it holds that
$$\log N(\mathcal F, \|\cdot\|_n, \epsilon) \leq s\log \left(\frac{p\|F\|_n}{\epsilon}\right),$$
where $F=\sup_{f\in\mathcal F} |f|$ is the envelope function of $\mathcal F.$
Let $R:=\sup_{f\in\mathcal F} \|f\|$ and suppose that 
\begin{equation}
\begin{array}{ccc}\label{condi}
&R\log n\sqrt{s\log p/n} =o(1/\sqrt{n}),& \\[10px]
& (\mathbb E\sup_{f\in\mathcal F} f^4(X_1))^{1/4}(\log n\sqrt{s\log p/n})^{3/2}=o(1/\sqrt{n}).&
\end{array}
\end{equation}
Then
$$
\sup_{f\in\mathcal F}\mathbb G_n f =o_P(1).
$$
\end{theorem}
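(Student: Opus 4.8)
The plan is to control $\sup_{f\in\mathcal F}\mathbb G_n f$ by a chaining argument combined with a truncation step, since the envelope $F$ need not be bounded. First I would split each $f$ into a truncated part $f\,1_{\{F\le T\}}$ and a tail part $f\,1_{\{F>T\}}$ for a threshold $T$ to be chosen later (of order $\log n$ times the natural scale, which is why the $\log n$ factors appear in \eqref{condi}). For the tail part, I would bound $\mathbb G_n(f\,1_{\{F>T\}})$ crudely in terms of $\mathbb P_n(F\,1_{\{F>T\}})$ and $P(F\,1_{\{F>T\}})$, and use the fourth-moment control $\mathbb E\sup_{f\in\mathcal F} f^4(X_1)=\mathcal O(1)$ (equivalently $\mathbb E F^4 = \mathcal O(1)$) via Markov/Cauchy--Schwarz to make this contribution $o_P(1/\sqrt n)$; this is where the second line of \eqref{condi} gets used, since the relevant tail moment behaves like a power of $1/T$ times $(\mathbb E F^4)^{1/4}$.

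For the truncated part, I would invoke a maximal inequality for empirical processes in terms of the uniform entropy integral (see \cite{weak}). The entropy bound $\log N(\mathcal F,\|\cdot\|_n,\epsilon)\le s\log(p\|F\|_n/\epsilon)$ gives a uniform entropy integral that is $\mathcal O(\sqrt{s}\int_0^1 \sqrt{\log(p/\epsilon)}\,d\epsilon)=\mathcal O(\sqrt{s\log p})$ up to constants. Plugging this into the standard bound $\mathbb E\sup_{f\in\mathcal F}|\mathbb G_n f| \lesssim J(1,\mathcal F)\|F\|_{P,2} + (\text{higher-order term involving }T/\sqrt n)$, together with $R=\sup_{f\in\mathcal F}\|f\|$ controlling the radius and hence allowing the bound to be localized, yields $\sup_{f\in\mathcal F}|\mathbb G_n(f\,1_{\{F\le T\}})| = \mathcal O_P\big(R\sqrt{s\log p} + T s\log p/\sqrt n\big)$ up to $\log n$ factors. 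With $T\asymp \log n$ (times the scale) and the two conditions in \eqref{condi}, both terms are $o_P(1)$ — in fact the conditions are calibrated so that even the stronger conclusion $o_P(1/\sqrt n)$ would follow, but $o_P(1)$ is what is claimed.

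The main obstacle I anticipate is handling the fact that the entropy bound and the envelope $\|F\|_n$ are random (empirical) quantities rather than fixed ones: the maximal inequality as usually stated is for the empirical $L_2(\mathbb P_n)$ covering numbers, so I must either pass to $\|F\|_{P,2}$ via a separate concentration argument for $\|F\|_n^2=\mathbb P_n F^2$ (again using $\mathbb E F^4=\mathcal O(1)$), or work directly with the random entropy and argue on the event where $\|F\|_n^2 \le 2\,\mathbb E F^2$, which has probability tending to one. A secondary technical point is that $\mathcal F$ is indexed by a sparsity-constrained set, so it is a union of $\binom{p}{s}$ pieces; the factor $s\log p$ in the entropy already encodes this, and one must be careful that the chaining is applied to the whole class at once rather than piece by piece (otherwise a union bound over $\binom{p}{s}$ pieces would be wasteful). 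I would organize the argument so the single entropy bound \eqref{condi} feeds directly into one application of the maximal inequality.
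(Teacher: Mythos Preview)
Your approach is plausible but structurally different from the paper's. You propose a truncation $f=f\,1_{\{F\le T\}}+f\,1_{\{F>T\}}$ and then a single application of a maximal/chaining inequality to the bounded part, with the tail handled via the fourth moment of $F$. The paper does \emph{not} truncate. Instead it applies symmetrization and Dudley's inequality directly to $\mathcal F$, obtaining
\[
\mathbb E\sup_{f\in\mathcal F}|(\mathbb P_n-P)f|\;\lesssim\;\mathbb E\hat R\cdot \log n\,\sqrt{s\log p/n},
\qquad \hat R:=\sup_{f\in\mathcal F}\|f\|_n,
\]
where the $\log n$ arises from the elementary inequality $\|F\|_n/\hat R\le \sqrt n$ used inside the entropy integral (so $\log(\|F\|_n/\hat R)\le \tfrac12\log n$), not from any truncation level $T$. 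The random radius $\hat R$ is then controlled by a \emph{second} application of Dudley, this time to the squared class $\mathcal F^2=\{f^2:f\in\mathcal F\}$: one bounds $\mathbb E\sup_f|(\mathbb P_n-P)f^2|$ by $\mathbb E\hat R_2\cdot\log n\,\sqrt{s\log p/n}$ with $\hat R_2=\sup_f\|f^2\|_n$, uses $\hat R^2\le \sup_f|(\mathbb P_n-P)f^2|+R^2$, and finally bounds $\mathbb E\hat R_2^{1/2}\le(\mathbb E\sup_f f^4)^{1/4}$ by Jensen. This two–step iteration is exactly what produces the exponent $3/2$ in the second line of \eqref{condi}: one factor $\log n\sqrt{s\log p/n}$ from the outer step and a further $(\log n\sqrt{s\log p/n})^{1/2}$ from the square root of the inner step.

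What each buys: the paper's iterated–Dudley argument avoids any side concentration argument for $\|F\|_n$ or a choice of $T$, and it explains the precise form of \eqref{condi}. Your truncation route is a standard alternative and should go through, but as written the calibration is loose: the claim that ``$T\asymp\log n$'' makes the tail $o_P(1)$ via $(\mathbb E F^4)^{1/4}$ is not checked, and it is not obvious that your two terms $R\sqrt{s\log p}$ and $T\,s\log p/\sqrt n$ (plus the tail) are controlled by \emph{exactly} the two hypotheses in \eqref{condi}; the $3/2$ power there is not the natural output of a single truncation-plus-chaining step. If you pursue your route, you should either (i) derive conditions that \eqref{condi} implies, or (ii) switch to the paper's device of iterating the chaining on $\mathcal F^2$, which dispenses with the truncation and the separate handling of the random envelope altogether.
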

\noindent
We aim to apply Theorem \ref{main2} with the class of functions
$$\mathcal F:=\{\Theta_j^T (\psi_\beta - \psi_{\beta_0}): \beta\in\mathcal B
\},$$ 
for some set $\mathcal B\subset \mathbb R^p$ which can be specified depending on the problem at hand (for generalized models, see e.g. Theorem \ref{glm.main}), but we must ensure
that $\hat\beta\in\mathcal B$ and at the same time, the set $\mathcal B$ must be in some sense sparse.
\\
Conditions \eqref{condi} are discussed in Section \ref{subsec:disc} below. They mean that the higher order remainders from the Dudley's integral are small enough. The idea is it should be possible to get a rate for these remainders over the set $\mathcal F,$ since the set is shrinking with $n$. They can be shown for generalized linear models under some sufficient conditions.
Combining Theorem \ref{mai} with Theorem \ref{main2} gives explicit sufficient conditions under which asymptotic normality can be achieved.

\noindent

\subsection{Discussion of the conditions}
\label{subsec:disc}

\textit{Condition \ref{itm:eig0}.}
Condition \ref{itm:eig0}  avoids the need to assume differentiability  of $\rho_\beta$ or $\psi_\beta$ directly. Instead we assume differentiability of the expected loss.
Note that in some situations, the matrix $\Theta$ may not exist, for instance for the linear model with fixed design.
To be able to describe the asymptotics, we may then assume that there exists a non-singular matrix $\Theta$ (with eigenvalues  bounded from above and away from zero) such that
\begin{equation}\label{fixed}
\|\Theta_j^T (P\psi_\beta)'|_{\beta=\hat\beta} - e_j\|_\infty=o(1/\sqrt{n}).
\end{equation}
Consequently, replacing assumption \ref{itm:eig0} in Theorem \ref{mai} with \eqref{fixed}, the result of Theorem \ref{mai} applies.

\textit{Condition \ref{itm:taylor}.}
In condition \ref{itm:taylor}, $d(\beta,\beta_0)$ represents a metric suitable for the problem at hand. For generalized linear models, one may choose $d^2(\beta,\beta_0)=\mathbb E| x^T(\beta-\beta_0)|^2 = (\beta-\beta_0)^T\Sigma (\beta-\beta_0)$. For general models, if the function $\beta\mapsto \frac{\partial ^2 (P\psi_\beta)_{j}}{\partial \beta_k\partial \beta_i}$ is bounded for all $i,j,k=1,\dots,p$, one may choose $d^2(\beta,\beta_0) = \|\beta-\beta_0\|_1^2$ (see Lemma \ref{lipschitz}). 
\par
\textit{Sparsity conditions \eqref{spa.prop}.}
To have sufficient continuity in the model as described in condition \ref{taylor}, some sparsity assumptions must be made.
Naturally, considering more general models costs more. 
In general, we require $d^2(\hat\beta,\beta_0)=o\left(\frac{1}{\sqrt{ns}}\right).$ For generalized linear models this conditions amounts to
$s^3(\log p)^2 /n\rightarrow 0,$ even for non-differentiable loss, provided that the expected loss is differentiable.
\par
\textit{Condition \eqref{condi}.}
Theorem \ref{main2} suggests that we need some rate on $R=\sup\|f\|$, in particular it must be shown that $R=o\left(\frac{1}{\sqrt{s\log p}\log n}\right)$. This can indeed be shown for e.g. generalized linear models, under sufficient sparsity conditions (see Section \ref{sec:glm.main}).
\\
Condition $(P \sup_{f\in\mathcal F} f^4)^{}  (\log n)^{6} (s\log p)^{3} / n^{} =o(1)$  is satisfied e.g. for Lipschitz $\psi_\beta$. An envelope function for 
the class $\mathcal F:=\{\Theta_j^T (\psi_\beta-\psi_{\beta_0}): \|\beta -\beta_0\|_1\leq s\lambda \}$ is then obtained using the following upper bounds
\begin{eqnarray*}
|\Theta_j^T (\psi_\beta-\psi_{\beta_0})| &\leq & \|\Theta_j\|_1 \|\psi_\beta-\psi_{\beta_0}\|_\infty
\\ 
&\leq &
\|\Theta_j\|_1 L\|\beta-\beta_0\|_1
\\
&\leq & \Lambda_{\max}(\Theta) L  \sqrt{s}\|\beta-\beta_0\|_1 \leq \Lambda_{\max}(\Theta) L s^{3/2}\lambda.
\end{eqnarray*}
 Then clearly, $P\sup_{f\in\mathcal F} f^4 \leq (C s^{3/2}\lambda)^4.$
Then the condition \eqref{condi} is satisfied under the sparsity $s^3 (\log p)^{5/3} (\log n)^2/n=o(1).$\\
Or, for instance, for generalized linear models, if $w_\beta-w_{\beta_0}$ is uniformly bounded, 
then
\begin{equation}
\label{expr}
(P \sup_{f\in\mathcal F} f^4)^{1/4}  = (P\sup_{\beta: f_\beta\in\mathcal F} |\Theta_j^T x (w_\beta-w_{\beta_0})|^4)^{1/4}
\leq (P |\Theta_j^T x |^4 K)^{1/4}
.
\end{equation} 
Then under a moment condition $P |\Theta_j^T x |^4=\mathcal O(1)$, the expression \eqref{expr} is bounded. Hence we would require $s^3 (\log p)^{3} (\log n)^6/n=o(1).$

\noindent
\section{Examples}
\label{sec:examples}
\subsection{Quadratic loss }
\label{subsec:ex.lasso}
Consider the linear model 
\begin{equation} \label{lr2}
Y = X\beta^0 + \epsilon,
\end{equation}
where 
$\epsilon:=(\epsilon_1,\dots,\epsilon_n)$ with $\epsilon_i$'s independent and $\mathbb E\epsilon =0$. $X$ is a $n\times p$ matrix independent of $\epsilon$ with i.i.d. rows with mean zero and covariance matrix $\Sigma:=\mathbb E X_iX_i^T.$
We assume the inverse $\Theta = \Sigma^{-1}$ exists  and suppose that 
$\Lambda_{\max}(\Sigma)=\mathcal O(1)$ and 
$1/\Lambda_{\min}(\Sigma)=\mathcal O(1)$. Moreover, we assume that $\|X_i\|_\infty \leq K, \mathbb E|\Theta_j^T x|^4 = \mathcal O(1)$, 
$\|\beta_0\|_0 \leq s, \|\Theta_j\|_0\leq s$  and $\Lambda_{\max}(\hat \Sigma)=\mathcal O(1).$ Finally, assume the sparsity condition $s^3 (\log p)^2 (\log n)^2 /n = o(1)$.
\\
Consider the Lasso estimator 
\begin{equation}
\hat\beta_{\emph{}} := \text{arg}\min_{\beta\in\mathbb R^p} \frac{1}{n}\sum_{i=1}^n (Y_i-X_i^T \beta)^2 + \lambda \|\beta\|_1 .
\end{equation}
and its de-sparsified version $\hat b=\hat\beta + \hat\Theta(Y-X\hat\beta)/n.$
We apply Theorem  \ref{glm.main}.
The loss function is differentiable, so one may take 
$$\psi_\beta(x,y):=\dot \rho_\beta(x,y) = 2(y-x^T\beta)x.$$
By the above assumptions, the weight function is $w(u,y) = 2(y-u),$ and hence 
$\dot w(u,y)=-2$ is Lipschitz in $u$.
Hence conditions \ref{itm:bounded}, \ref{itm:initial.glm}, \ref{itm:sparsity.glm}, \ref{itm:w},  \ref{itm:wdot}, \ref{eig}, \ref{A3} are satisfied.

\subsection{Absolute loss}
\label{subsec:qre}
Consider the linear model \eqref{lr2}. 
The $\ell_1$-penalized least absolute deviations (LAD) estimator is defined by
\begin{equation}\label{qrp}
\hat\beta_{\text{LAD}} := \text{arg}\min_{\beta\in\mathbb R^p} \frac{1}{n}\sum_{i=1}^n |Y_i-X_i^T \beta| + \lambda \|\beta\|_1 .
\end{equation}
One may take
$$\psi_\beta(x,y) := \text{sign}(y -x^T \beta)x.$$
Lemma \ref{qr.ee} below shows that the estimating equations are approximately satisfied with $\psi_\beta$ at the point $\hat\beta_{\text{LAD}}$. 
We also need to construct an estimate of $\Theta = \frac{1}{2f_\epsilon(0)}(\mathbb E X_iX_i^T)^{-1}$. A near-oracle estimate of 
$(\mathbb E X_iX_i^T)^{-1}$ can be obtained using nodewise regression with input matrix $\hat\Sigma := X^T X/n$ under conditions \ref{bd.qr}, \ref{n1} below (see \cite{vdgeer13}).

\begin{enumerate}[label=(F\arabic*),start=1]
\item   
\label{bd.qr}
Assume that there exists a universal constant $K>0$ such that  $\|X_i\|_\infty \leq K.$
Moreover, $X_i$ are independent of $\epsilon_i,$ for $i=1,\dots,n.$
\item
\label{F}
Let the distribution function $F_\epsilon$ of $\epsilon_i$ satisfy $F_\epsilon(0)=1/2$ and let it have a density $f_\epsilon.$ Furthermore, $f_\epsilon(0) \geq c>0,$ where $c$ is a fixed constant and
$|f_\epsilon(0) |\leq K$  for all $x\in\mathbb R$ and a universal constant $K.$ Suppose that $f_\epsilon$ is Lipschitz.
\item
\label{n1}
(Nodewise regression)\\
Let $ \Theta':=(\mathbb E  X_iX_i^T)^{-1}$ and
suppose that 
${\Lambda_{\max}( \Theta')} =\mathcal O(1)$ and $1/\Lambda_{\min}( \Theta') =\mathcal O(1).$ Moreover, assume that
 $\mathbb E((\Theta_j')^T X_i)^4 = \mathcal O(1).$
\end{enumerate}

\begin{lemma}\label{qr.ee}
Assume the linear model \eqref{lr2}, suppose that condition  \ref{bd.qr} is satisfied and that $s/n=\mathcal O(\lambda)$. Let
$\psi_\beta(x,y) :=  \emph{sign}(y -x^T \beta)x.$
Then 
$$\|\mathbb P_n\psi_{\hat\beta_{\emph{LAD}}}\|_\infty = \mathcal O_P(\lambda).$$
\end{lemma}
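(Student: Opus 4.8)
The plan is to bound $\|\mathbb P_n \psi_{\hat\beta_{\text{LAD}}}\|_\infty$ by exploiting the fact that $\hat\beta_{\text{LAD}}$ is a minimizer of the penalized objective, which furnishes the first-order (sub-gradient) conditions, together with the fact that $\psi_\beta$ differs from the true sub-differential of $|y - x^T\beta|$ only on the event $\{y = x^T\beta\}$. First I would write down the KKT conditions for \eqref{qrp}: there exists $\hat Z$ in the sub-differential of $\|\cdot\|_1$ at $\hat\beta_{\text{LAD}}$ such that
\begin{equation*}
\frac{1}{n}\sum_{i=1}^n \hat u_i X_i + \lambda \hat Z = 0,
\end{equation*}
where $\hat u_i \in \partial_u |Y_i - u|\big|_{u = X_i^T\hat\beta_{\text{LAD}}}$, i.e. $\hat u_i = -\text{sign}(Y_i - X_i^T\hat\beta_{\text{LAD}})$ whenever the residual is nonzero, and $\hat u_i$ is some value in $[-1,1]$ when the residual is exactly zero. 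Since $\|\hat Z\|_\infty \leq 1$, this already gives $\|\frac{1}{n}\sum_i \hat u_i X_i\|_\infty \leq \lambda$. Thus the only discrepancy between this quantity and $\mathbb P_n \psi_{\hat\beta_{\text{LAD}}} = \frac{1}{n}\sum_i \text{sign}(Y_i - X_i^T\hat\beta_{\text{LAD}}) X_i$ comes from the indices $i$ in the set $\mathcal E := \{i : Y_i = X_i^T\hat\beta_{\text{LAD}}\}$, where $\psi$ picks the value $\text{sign}(0) = 0$ (or some convention) while $\hat u_i$ may be anything in $[-1,1]$.

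Next I would bound the contribution of $\mathcal E$. Using $\|X_i\|_\infty \leq K$ from condition \ref{bd.qr},
\begin{equation*}
\Big\| \frac{1}{n}\sum_{i \in \mathcal E} (\psi_{\hat\beta_{\text{LAD}}}(X_i,Y_i) - \hat u_i X_i) \Big\|_\infty \leq \frac{2K |\mathcal E|}{n}.
\end{equation*}
The key combinatorial fact is that $|\mathcal E|$ — the number of residuals that vanish exactly at the LAD solution — is at most the number of active (nonzero) coordinates of $\hat\beta_{\text{LAD}}$, which for the $\ell_1$-penalized LAD problem is $\mathcal O(s)$ (in exact arithmetic a basic optimal solution interpolates at most $\|\hat\beta_{\text{LAD}}\|_0$ points; more carefully, one invokes the standard fact that at a vertex solution the number of zero residuals is controlled by the support size, e.g. as used in \cite{vch}). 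Combining, $\|\mathbb P_n \psi_{\hat\beta_{\text{LAD}}}\|_\infty \leq \lambda + 2Ks/n$, and the assumption $s/n = \mathcal O(\lambda)$ gives the claimed $\mathcal O_P(\lambda)$ bound.

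The main obstacle is the bound $|\mathcal E| = \mathcal O(s)$: it requires care because a generic LAD solution need not be a vertex of the feasible polytope (there can be a continuum of minimizers), so one must either argue that the particular computed solution can be taken to be basic, or give a direct argument that at any minimizer the set of interpolated points cannot be too large without violating the sub-gradient optimality on the support. I would handle this by selecting $\hat\beta_{\text{LAD}}$ to be an extreme-point optimum of the (piecewise-linear, hence LP-representable) problem, for which linear-programming theory guarantees that the number of tight constraints equals the dimension, and then translating tight constraints into either active coefficients or zero residuals. Everything else — the KKT step, Hölder/$\ell_\infty$ bookkeeping, and plugging in $s/n = \mathcal O(\lambda)$ — is routine.
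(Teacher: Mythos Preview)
Your proposal is correct and follows essentially the same route as the paper: write the KKT conditions in terms of a sub-gradient $s_{\hat\beta}$, observe that $\psi_{\hat\beta}$ and $s_{\hat\beta}$ differ only on the interpolation set $\mathcal E=\{i:Y_i=X_i^T\hat\beta_{\text{LAD}}\}$, bound that contribution by $K|\mathcal E|/n$, and then control $|\mathcal E|$ via $\|\hat\beta_{\text{LAD}}\|_0=\mathcal O_P(s)$. The only notable difference is in how the last step is justified: the paper asserts that with probability one $|\mathcal E|=\|\hat\beta_{\text{LAD}}\|_0$ (using absolute continuity of $Y_i$ given $X_i$) and then invokes its own sparsity result, Lemma~\ref{sparsity.glm}, to obtain $\|\hat\beta_{\text{LAD}}\|_0=\mathcal O_P(s)$, whereas you reach the same bound by selecting an extreme-point (basic) solution of the LP and appealing to linear-programming structure. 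Both arguments are valid; the paper's route ties into its general sparsity lemma, while yours is more self-contained for the LAD case.
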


\begin{lemma}\label{qr.entropy} 
Let
$$\mathcal F := \{\Theta_j^T(\psi_\beta(x,y)-\psi_{\beta_0}(x,y)):
\beta\in\mathbb R^p,
\|\beta\|_0 \leq s
\}.$$
Then for all $\epsilon>0$ it holds
$$\log  N(\varepsilon\|F\|_n, \mathcal F, \|\cdot \|_{n}) \leq C s\log p + 2Cs\log(16e/\varepsilon),$$
for some constant $C>0.$
\end{lemma}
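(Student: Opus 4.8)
The plan is to split $\mathcal F$ according to the support of $\beta$, bound the covering number of each piece by a Vapnik--Chervonenkis (VC) argument, and then take a union bound over the at most $(ep/s)^{s}$ admissible supports. Writing $c(x):=\Theta_j^Tx$ and using $\mathrm{sign}(y-x^T\beta)=2\mathbf 1\{y>x^T\beta\}-1$ (the value at $y=x^T\beta$ being irrelevant for $L_2$ covering numbers), every $f\in\mathcal F$ equals
\[
f_\beta(x,y)=2\,c(x)\big(\mathbf 1\{y>x^T\beta\}-\mathbf 1\{y>x^T\beta_0\}\big),\qquad\|\beta\|_0\le s,
\]
so $|f_\beta|\le 2|c|$ pointwise; moreover, since for any sample point with $x_i\neq 0$ the indicator $\mathbf 1\{y_i>x_i^T\beta\}$ can be toggled by a $1$-sparse choice of $\beta$, the envelope of $\mathcal F$ coincides on the sample with $F=2|c|$, hence $\|F\|_n=2\|c\|_n$. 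For $S\subseteq\{1,\dots,p\}$ with $|S|\le s$ put $\mathcal F_S:=\{f_\beta:\operatorname{supp}(\beta)\subseteq S\}$; then $\mathcal F=\bigcup_{|S|\le s}\mathcal F_S$, covering numbers are subadditive under finite unions, and
\[
\log N(\varepsilon\|F\|_n,\mathcal F,\|\cdot\|_n)\ \le\ Cs\log p+\max_{|S|\le s}\log N(\varepsilon\|F\|_n,\mathcal F_S,\|\cdot\|_n).
\]

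Now fix $S$. Since $x^T\beta=x_S^T\beta_S$ whenever $\operatorname{supp}(\beta)\subseteq S$, the half-spaces $\{(x,y):y-x_S^T\beta_S>0\}$, $\beta_S\in\mathbb R^s$, are the positivity sets of functions in the affine family $y+\operatorname{span}\{x_{S,1},\dots,x_{S,s}\}$, which lies in a vector space of dimension at most $s+1$; by \cite{weak} (Lemma~2.6.15) this collection of sets has VC index at most $s+2$. Hence $\mathcal F_S$ is, up to the fixed translation by $-2\,c(x)\mathbf 1\{y>x^T\beta_0\}$, of the form ``the fixed weight $c(x)$ times a half-space indicator'': the subgraphs of such functions are obtained by intersecting a \emph{fixed} region of $\mathbb R^{p+2}$ with the region cut out by $\{y>x_S^T\beta_S\}$, so $\mathcal F_S$ is VC-subgraph of index $\mathcal O(s)$ with envelope $2|c|$. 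The uniform entropy bound for VC-subgraph classes (\cite{weak}, Theorem~2.6.7) then furnishes a universal constant $C_0$ such that, for every probability measure $Q$ and every $0<\varepsilon\le 1$,
\[
N\big(\varepsilon\, r_Q,\ \mathcal F_S,\ L_2(Q)\big)\ \le\ (16e/\varepsilon)^{C_0 s},
\]
where $r_Q$ is the $L_2(Q)$-norm of the envelope $2|c|$. Taking $Q=\mathbb P_n$, so that $r_{\mathbb P_n}=\|F\|_n$, gives $\log N(\varepsilon\|F\|_n,\mathcal F_S,\|\cdot\|_n)\le C_0 s\log(16e/\varepsilon)$; substituting into the previous display and unifying the constants yields the asserted bound (the regime $\varepsilon>1$ is trivial, since $N=1$ once $\varepsilon$ exceeds the envelope-normalized diameter).

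The support decomposition and the VC index count are routine; the point that requires care is the \emph{envelope normalization}, i.e.\ that after multiplying the half-space indicators by the non-constant weight $c(x)$ one still gets a covering bound scaled by $\varepsilon\|F\|_n$ rather than by the far cruder $\varepsilon\|F\|_\infty$. This is exactly what the uniform-in-$Q$ form of the VC entropy bound provides: one applies it directly to the VC-subgraph class $\{c(x)\mathbf 1\{y>x_S^T\beta_S\}\}$ with envelope $|c|$, or, equivalently, to the half-space indicators $\{\mathbf 1\{y>x_S^T\beta_S\}\}$ under the reweighted empirical measure $d\mathbb Q_n\propto c(x)^2\,d\mathbb P_n$, in which $c(x)$ is absorbed into the normalization. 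A further minor wrinkle is that $\mathrm{sign}(\cdot)$ is only piecewise constant; but since the empirical covering number depends on the $f_\beta$ solely through their $n$ sample values, the argument is purely combinatorial and the measure-zero set $\{y=x^T\beta\}$ is immaterial.
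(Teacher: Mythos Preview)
Your proposal is correct and follows essentially the same route as the paper: decompose $\mathcal F$ into $\binom{p}{s}$ pieces indexed by the support of $\beta$, show each piece is a VC-subgraph class of index $\mathcal O(s)$ via Lemma~2.6.15 in \cite{weak}, apply the uniform entropy bound Theorem~2.6.7 in \cite{weak}, and take a union bound. Your discussion of the envelope normalization (multiplication by the non-constant weight $c(x)=\Theta_j^Tx$) is in fact more explicit than the paper's own proof, which asserts the VC-subgraph property of the weighted class somewhat tersely.
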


\noindent
\noindent
Define $\sigma_j^2:=\frac{1}{4f_\epsilon(0)^2}\Theta_{jj},$ then we have the following result for the de-sparsified LAD estimator.
\begin{theorem}\label{qr.thm}
Assume the model \eqref{lr2} and suppose that conditions \ref{bd.qr}, \ref{F}, \ref{n1} are satisfied. Let $\hat \beta_{\text{LAD}}$ be defined in \eqref{qrp}. Let an estimate $\hat{\Theta}'$ of $\Theta'$ be constructed using nodewise regression with the input  matrix $\hat\Sigma:=X^TX/n,$
and let $\hat\Theta_j:= \hat{\Theta}'_j/(2f_\epsilon(0))$.
Let $\|\Theta_j\|_0 \leq s$ and $s^5 (\log p)^3 (\log n)^4 /n = o(1).$ 
Then for
$$\hat b_{\emph{LAD},j} := \hat \beta_{\emph{LAD,j}} +  \s \emph{sign}(Y_i -X_i^T \hat\beta_{\emph{LAD}})\hat\Theta_j^T X_i$$
it holds
$$\sqrt{n}(\hat b_{\emph{LAD},j}-\beta_j^0)/\sigma_j \rightsquigarrow \mathcal N(0,1).$$
\end{theorem}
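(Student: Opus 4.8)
The plan is to apply Theorem~\ref{glm.main2} to the $\ell_1$-penalized LAD estimator with $\psi_\beta(x,y)=\text{sign}(y-x^T\beta)x$ as in Lemma~\ref{qr.ee}; the overall sign is immaterial, since $\Theta$ and the normalizing variance inherit it and the de-sparsified statistic is unchanged. The work therefore consists of verifying the hypotheses of Theorem~\ref{glm.main2}, after which one only has to identify $\hat\beta_j+\hat\Theta_j^T\mathbb P_n\psi_{\hat\beta}$ with $\hat b_{\text{LAD},j}$ and $\Theta_j^TP\psi_{\beta_0}\psi_{\beta_0}^T\Theta_j$ with $\sigma_j^2$. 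First I would record the basic identifications under \eqref{lr2} with $\epsilon$ independent of $X$: writing $w_\beta=\text{sign}(y-x^T\beta)$, the function $G$ of condition~\ref{itm:ew} is $G(u)=\mathbb E[\text{sign}(Y-u)\mid X]=1-2F_\epsilon(u-x^T\beta_0)$, so $G'(u)=-2f_\epsilon(u-x^T\beta_0)$ is Lipschitz with $|G'|\le 2K$ by condition~\ref{F}, and $|w_\beta|=1$ a.s.; hence condition~\ref{itm:ew} holds. Differentiating $\beta\mapsto P\psi_\beta$ gives $\Sigma=(P\psi_\beta)'|_{\beta_0}=2f_\epsilon(0)\,\mathbb E XX^T$ (up to sign), so $\Theta=\Sigma^{-1}=\Theta'/(2f_\epsilon(0))$, in agreement with the statement; in particular $\hat\Theta_j=\hat\Theta'_j/(2f_\epsilon(0))$ targets $\Theta_j$. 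Since $\text{sign}(\epsilon)^2=1$ a.s. and $\epsilon\perp X$, a one-line computation gives $\Theta_j^TP\psi_{\beta_0}\psi_{\beta_0}^T\Theta_j=\Theta_j^T(\mathbb E XX^T)\Theta_j=\Theta'_{jj}/(4f_\epsilon(0)^2)=\sigma_j^2$.

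Next I would check the remaining structural hypotheses. Conditions~\ref{bd.qr}, \ref{F}, \ref{n1} supply the eigenvalue and fourth-moment parts of condition~\ref{itm:bounded}: bounded design gives $\|X_i\|_\infty\le K$, and $c\le f_\epsilon(0)\le K$ turns the bounds on $\Lambda_{\max}(\Theta'),\Lambda_{\min}(\Theta')$ and $\mathbb E((\Theta'_j)^Tx)^4$ into those for $\Theta,\Sigma$ and $\mathbb E|\Theta_j^Tx|^4$, while $\Theta_j^TP\psi_{\beta_0}\psi_{\beta_0}^T\Theta_j=\sigma_j^2\ge \Theta'_{jj}/(4K^2)$ is bounded away from zero. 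For condition~\ref{itm:initial.glm}, the oracle rates $\|\hat\beta_{\text{LAD}}-\beta_0\|_1=\Op(s\lambda)$ and $\|X(\hat\beta_{\text{LAD}}-\beta_0)\|_n^2=\Op(s\lambda^2)$ are standard for the $\ell_1$-penalized LAD estimator (cf.\ \cite{hds}), and the sparsity $\|\hat\beta_{\text{LAD}}\|_0=\Op(s)$ follows from Lemma~\ref{sparsity.glm}, whose only additional hypothesis is the entropy bound, which is Lemma~\ref{qr.entropy} after the substitution $\epsilon=\varepsilon\|F\|_n$. The estimating-equations bound $\|\mathbb P_n\psi_{\hat\beta_{\text{LAD}}}\|_\infty=\Op(\lambda)$ is Lemma~\ref{qr.ee} (its hypothesis $s/n=\mathcal O(\lambda)$ being implied by the sparsity assumption), and $\|\hat\Theta_j-\Theta_j\|_1=\tfrac{1}{2f_\epsilon(0)}\|\hat\Theta'_j-\Theta'_j\|_1=\Op(s\lambda)$ by the near-oracle $\ell_1$-rate of nodewise regression on the Gram matrix $X^TX/n$ under \ref{bd.qr}, \ref{n1} and the assumed $\|\Theta_j\|_0\le s$ (cf.\ \cite{vdgeer13,vdgeer14}); this is the standard rate $s\sqrt{\log p/n}$, not the slower rate of Theorem~\ref{my}, because $X^TX/n$ does not involve $\hat\beta$.

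The one genuinely model-specific estimate is condition~\ref{itm:other}. Here I would use that $w_\beta-w_{\beta_0}=\text{sign}(\epsilon)-\text{sign}(\epsilon-x^T(\beta-\beta_0))$ vanishes unless $\epsilon$ lies between $0$ and $x^T(\beta-\beta_0)$, in which case it equals $\pm 2$, so conditioning on $X$,
$$\mathbb E[(w_\beta-w_{\beta_0})^2 \mid X]=4\,|F_\epsilon(x^T(\beta-\beta_0))-F_\epsilon(0)|\le 4K\,|x^T(\beta-\beta_0)|,$$
whence Cauchy--Schwarz together with $\Lambda_{\max}(\mathbb E XX^T)=\mathcal O(1)$ gives $\mathbb E(w_\beta-w_{\beta_0})^2\lesssim (\mathbb E|x^T(\beta-\beta_0)|^2)^{1/2}$, which over the prediction-type ball defining $\mathcal F$ is of order $\sqrt{s}\,\lambda=\sqrt{s\log p/n}$ when $\lambda\asymp\sqrt{\log p/n}$. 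The stated condition $s^5(\log p)^3(\log n)^4/n=o(1)$ is, up to logarithmic factors absorbed in the Dudley-integral bookkeeping internal to Theorem~\ref{glm.main2}, exactly what makes this $o(s^{-2}(\log p)^{-2}(\log n)^{-4})$, so condition~\ref{itm:other} holds with $\delta$ of order $s\lambda$. With all hypotheses in place, Theorem~\ref{glm.main2} yields
$$\sqrt{n}(\hat\beta_j+\hat\Theta_j^T\mathbb P_n\psi_{\hat\beta}-\beta^0_j)/\sqrt{\Theta_j^TP\psi_{\beta_0}\psi_{\beta_0}^T\Theta_j}\rightsquigarrow\mathcal N(0,1);$$
since $\hat\Theta_j^T\mathbb P_n\psi_{\hat\beta}=\s\text{sign}(Y_i-X_i^T\hat\beta_{\text{LAD}})\hat\Theta_j^TX_i$, the first factor is $\sqrt{n}(\hat b_{\text{LAD},j}-\beta^0_j)$ and the normalizer is $\sigma_j$, which is the assertion.

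I expect the main obstacle to be the entropy/sparsity bookkeeping feeding condition~\ref{itm:other} and, inside Theorem~\ref{glm.main2}, the equicontinuity estimate of Theorem~\ref{main2}: one must track how the shrinking $\ell_1$- and prediction-balls that contain $\hat\beta_{\text{LAD}}$ bound $R=\sup_{f\in\mathcal F}\|f\|$ and the envelope moment $\mathbb E\sup_{f\in\mathcal F}f^4$ sharply enough to beat the $\log n\sqrt{s\log p/n}$ factor in \eqref{condi}, and to confirm that $\hat\beta_{\text{LAD}}$ lies in the index set of $\mathcal F$ with probability tending to one --- for which the $\ell_0$-sparsity of Lemma~\ref{sparsity.glm}, not merely the $\ell_1$-rate, is indispensable. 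The remainder is routine: checking the moment and eigenvalue conditions, passing between $\Theta$ and $\Theta'$ through the factor $2f_\epsilon(0)$, and matching the de-sparsified statistic and its variance.
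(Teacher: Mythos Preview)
Your proposal is correct and follows essentially the same route as the paper: both apply Theorem~\ref{glm.main2} with $\psi_\beta(x,y)=\text{sign}(y-x^T\beta)x$, invoke Lemma~\ref{qr.ee} for the estimating equations, Lemma~\ref{qr.entropy} for the entropy bound, nodewise regression on $X^TX/n$ for the $\ell_1$-rate of $\hat\Theta_j$, and the key estimate $\mathbb E[(w_\beta-w_{\beta_0})^2\mid X]\lesssim |x^T(\beta-\beta_0)|$ to obtain $\mathbb E(w_\beta-w_{\beta_0})^2=\mathcal O(\sqrt{s}\lambda)$. Your observation that the stated condition~\ref{itm:other} is slightly weaker than what $s^5(\log p)^3(\log n)^4/n=o(1)$ actually delivers is accurate---the paper's own proof bypasses the literal condition~\ref{itm:other} and directly checks the equicontinuity hypothesis $R\sqrt{s\log p}\log n=o(1)$ of Theorem~\ref{main2} via the bound $Pf^2\le s\,\mathbb E(w_\beta-w_{\beta_0})^2=\mathcal O(s^{3/2}\lambda)$, which is precisely what you anticipated.
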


\subsection{Huber loss}
\label{subsec:ex.huber}
We again consider the linear model.
The loss function is given by $\rho_\beta(x,y):= \rho(y-x^T\beta)$, where 
$$\rho(z) = [z^2 1_{|z|\leq K} + K (2|z| - K)1_{|z|>K}]/(2K),$$
for some constant $K>0.$
We  note that the first derivative satisfies the Lipschitz condition
$$|\dot \rho (u,y) - \dot \rho(u',y)| \leq |u'-u|, \text{ for all }u,u',y.$$
Hence we may apply the first part of  Theorem \ref{glm.main}. Define the $\ell_1$-penalized Huber estimator
\begin{equation}\label{hub}
\hat\beta_{\text{Huber}} := \text{arg}\min_{\beta\in\mathbb R^p} \frac{1}{n}\sum_{i=1}^n \rho_{\text{Huber}}(Y_i-X_i^T \beta) + \lambda \|\beta\|_1 .
\end{equation}
Define the function
\[
\psi_\beta(x_i,y_i)=\psi(y_i-x_i\beta):=\psi(z)= 
\begin{cases}
z/K &\text{ if } |z|\leq K, \\
\text{sign}(z) & \text{ if }|z|  > K.
\end{cases}
\]
For Huber loss, we have 
$$(\mathbb E w_{\text{Huber},j}(y,x^T\beta))'_{\beta=\beta_0} = (F_{\epsilon}(K)-F_\epsilon(-K))/K,$$
and hence
$$\Sigma=(\mathbb E \psi(y,x^T\beta))'_{\beta=\beta_0} =({F_{\epsilon}(K)-F_\epsilon(-K)})/K\mathbb E xx^T.$$ 
Furthermore,
$$\mathbb E \psi_{\beta_0}\psi_{\beta_0}^T = \mathbb E w_{\text{Huber},j}^2(y,x^T\beta_0) xx^T=
\left\{
\frac{1}{K}\mathbb E_\epsilon 1_{|\epsilon|<K}\epsilon^2 
+ [F_\epsilon(-K) + 1-F_\epsilon(K)]\right\}\mathbb Exx^T.$$
Hence the asymptotic variance  per entry of the de-sparsified estimator is 
$$\sigma_{\text{Huber},j}^2:= 
\Theta_j^T \mathbb E \psi_\beta\psi_\beta^T\Theta_j 
= K^2\frac{\mathbb E_\epsilon 1_{|\epsilon|<K}\epsilon^2/K 
+ [F_\epsilon(-K) + 1-F_\epsilon(K)]}{(F_{\epsilon}(K)-F_\epsilon(-K))^2}\Theta_{jj}^0.$$
One could then define an estimator of $\sigma_{\text{Huber},j}^2$ as follows
$$\hat\sigma_{\text{Huber},j}^2 := 
\frac{1}{K^2}\frac{\frac{1}{K}\s 1_{|\hat\epsilon_i|<K}\hat\epsilon_i^2 
+ \s 1_{|\hat\epsilon_i| > -K} }{(\s 1_{|\hat\epsilon_i| < -K})^2}\hat\Theta_{jj},$$
where $\hat \epsilon_i:=Y_i-X_i^T\hat\beta_{\text{Huber}}$ and  $\hat \Theta_{jj}$ was obtained using nodewise regression with matrix $X$. 
\begin{theorem}
Assume the model \eqref{lr2} and suppose that conditions \ref{bd.qr}, \ref{n1} are satisfied. Let 
$\hat\beta_{\emph{Huber}}$ be defined in \eqref{hub}.
Let an estimate $\hat{\Theta}'$ of $ \Theta'$ be constructed using nodewise regression with the input  matrix $\hat\Sigma:=X^TX/n,$
and let $\hat\Theta_j:= \hat{\Theta}'_j/( F(K)- F(-K))$.
Let $\|\Theta_j\|_0 \leq s$ and $s^3 (\log p)^2 (\log n)^2 /n = o(1).$ 
Then for
$$\hat b_{\emph{Huber},j}:= \hat\beta_{\emph{Huber},j} + \frac{1}{n}\sum_{i=1}^n\hat\Theta_j^T \psi_{\hat\beta}(X_i,Y_i)
$$
it holds
$$\sqrt{n}(\hat b_{\emph{Huber},j}-\beta^0_j)/\sigma_{\emph{Huber},j} \rightsquigarrow \mathcal N(0,1).$$
\end{theorem}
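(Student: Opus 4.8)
The strategy is to deduce the result from the first part of Theorem~\ref{glm.main} combined with the standard (unweighted) nodewise--Lasso bound of Theorem~\ref{my}, gluing the two together by H\"older's inequality, the exact estimating equations of the differentiable penalized Huber estimator, and Slutsky's lemma. First I would record that $\rho_{\mathrm{Huber}}$ is $C^1$ with $w(y,u)=\partial\rho_{\mathrm{Huber}}(y-u)/\partial u=-\psi(y-u)$, $\psi$ as in the statement, which is Lipschitz in $u$ with constant $1/K=\mathcal O(1)$; hence condition \ref{itm:w} holds (whereas \ref{itm:wdot} fails, which is precisely why the \emph{weighted} nodewise regression of the second part of Theorem~\ref{glm.main} is unavailable and $\Theta$ must be estimated by another route). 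Condition \ref{itm:bounded} is then checked from \ref{bd.qr} and \ref{n1}: $\|X_i\|_\infty\le K$ is \ref{bd.qr}; since $\Sigma=(P\psi_\beta)'_{\beta_0}=\{(F_\epsilon(K)-F_\epsilon(-K))/K\}\,\mathbb E X_iX_i^T$ and $\Theta_j$ is a fixed multiple of $\Theta'_j$, the bounds $\mathbb E|\Theta_j^TX_i|^4=\mathcal O(1)$ and $1/\Lambda_{\min}(\Sigma)=\mathcal O(1)$ follow from \ref{n1} together with $F_\epsilon(K)-F_\epsilon(-K)$ bounded away from zero (a mild condition on the error mass in $[-K,K]$, cf. \ref{F}); and the asymptotic variance $\Theta_j^TP\psi_{\beta_0}\psi_{\beta_0}^T\Theta_j=\mathbb E[\psi(\epsilon)^2]\,\Theta'_{jj}\,(K/(F_\epsilon(K)-F_\epsilon(-K)))^2=\sigma_{\mathrm{Huber},j}^2$ is bounded above because $|\psi|\le1$, and bounded below under mild non-degeneracy of $\epsilon$ together with $\Theta'_{jj}\ge\Lambda_{\min}(\Theta')>0$, so $1/(\Theta_j^TP\psi_{\beta_0}\psi_{\beta_0}^T\Theta_j)=\mathcal O(1)$. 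Condition \ref{itm:sparsity.glm} is exactly the assumed $\|\Theta_j\|_0\le s$ and $s^3(\log p)^2(\log n)^2/n\to0$.

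The step that requires genuine work --- and the main obstacle --- is establishing the remaining part of \ref{itm:initial.glm} for $\hat\beta_{\mathrm{Huber}}$: the oracle bounds $\|\hat\beta_{\mathrm{Huber}}-\beta_0\|_1=\mathcal O_P(s\lambda)$, $\|X(\hat\beta_{\mathrm{Huber}}-\beta_0)\|_n^2=\mathcal O_P(s\lambda^2)$, and the sparsity $\|\hat\beta_{\mathrm{Huber}}\|_0=\mathcal O_P(s)$. The Huber loss is Lipschitz but not uniformly strongly convex, so the oracle inequalities rest on a margin/curvature condition, which here is supplied by the fact that the Hessian of the expected loss at $\beta_0$ equals $\{(F_\epsilon(K)-F_\epsilon(-K))/K\}\,\mathbb E X_iX_i^T\succ0$; combined with the compatibility constant of $\mathbb E X_iX_i^T$ and of its empirical version (close in sup--norm under \ref{bd.qr}, \ref{n1}) this yields the quadratic margin and hence the two oracle rates by the results collected in \cite{hds}, exactly as for the LAD estimator (cf.\ Lemma~\ref{qr.ee} and the discussion there). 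The $\ell_0$ bound is then obtained by the argument of (the proof of) Lemma~\ref{sparsity.glm}: the entropy hypothesis \eqref{ent.glm} holds for the class $\{\Theta_j^T(\psi_\beta-\psi_{\beta_0})\}$ since $z\mapsto\psi(z)$ is monotone and bounded, making this class VC--subgraph with covering numbers as in Lemma~\ref{qr.entropy}, and the curvature $G'(u)=\mathbb E_\epsilon\dot\psi(\epsilon+x^T\beta_0-u)$ is bounded and (under a mild regularity condition on the error density near $\pm K$) Lipschitz and bounded away from zero --- the remaining ingredients that the argument needs.

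Granting \ref{itm:bounded}, \ref{itm:initial.glm}, \ref{itm:sparsity.glm} and \ref{itm:w}, the first part of Theorem~\ref{glm.main}, applied with $\psi_\beta=\dot\rho_{\mathrm{Huber},\beta}=-\psi(y-x^T\beta)x$ so that $-\Theta_j^T\mathbb P_n\dot\rho_{\hat\beta}=\frac1n\sum_i\Theta_j^T\psi_{\hat\beta}(X_i,Y_i)$, gives
$$\sqrt n\bigl(\tilde b_{\mathrm{Huber},j}-\beta_j^0\bigr)/\sigma_{\mathrm{Huber},j}\rightsquigarrow\mathcal N(0,1),\qquad \tilde b_{\mathrm{Huber},j}:=\hat\beta_{\mathrm{Huber},j}+\tfrac1n\textstyle\sum_i\Theta_j^T\psi_{\hat\beta}(X_i,Y_i),$$
with the identification $\Theta_j^TP\dot\rho_{\beta_0}\dot\rho_{\beta_0}^T\Theta_j=\Theta_j^TP\psi_{\beta_0}\psi_{\beta_0}^T\Theta_j=\sigma_{\mathrm{Huber},j}^2$ computed before the theorem. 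For $\hat\Theta_j$ I would apply Theorem~\ref{my} with the trivial weight $v_\beta\equiv1$ --- so \ref{A3} is automatic, \ref{eig} reduces to \ref{n1} and $\|\mathbb E X_iX_i^T\|_\infty\le K^2$, and \ref{sp} follows from the assumed sparsity --- obtaining $\|\hat\Theta'_j-\Theta'_j\|_1=\mathcal O_P(s^{3/2}\sqrt{\log p/n})$; since $\hat\Theta_j$ is $\hat\Theta'_j$ rescaled by a fixed constant proportional to $1/(F_\epsilon(K)-F_\epsilon(-K))$ (which, if not taken as known, is replaced by $\frac1n\sum_i1_{|Y_i-X_i^T\hat\beta_{\mathrm{Huber}}|\le K}$, whose estimation error is negligible after multiplication by $\lambda$ under the assumed sparsity), the same rate holds for $\|\hat\Theta_j-\Theta_j\|_1$. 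Finally, differentiability of $\rho_{\mathrm{Huber}}$ makes the estimating equations $\mathbb P_n\psi_{\hat\beta}+\lambda\hat Z=0$ hold exactly, so $\|\mathbb P_n\psi_{\hat\beta}\|_\infty\le\lambda$, and H\"older's inequality gives
$$\bigl|\hat b_{\mathrm{Huber},j}-\tilde b_{\mathrm{Huber},j}\bigr|\le\|\hat\Theta_j-\Theta_j\|_1\,\|\mathbb P_n\psi_{\hat\beta}\|_\infty=\mathcal O_P\!\bigl(s^{3/2}(\log p)/n\bigr)=o_P(n^{-1/2}),$$
the last equality using $s^3(\log p)^2(\log n)^2/n\to0$. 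Slutsky's lemma then yields $\sqrt n(\hat b_{\mathrm{Huber},j}-\beta_j^0)/\sigma_{\mathrm{Huber},j}\rightsquigarrow\mathcal N(0,1)$. (For a feasible confidence interval one additionally replaces $\sigma_{\mathrm{Huber},j}$ by $\hat\sigma_{\mathrm{Huber},j}$, whose consistency follows from a direct law--of--large--numbers argument using $\|\hat\beta_{\mathrm{Huber}}-\beta_0\|_1=o_P(1)$ and $\|\hat\Theta_j-\Theta_j\|_1=o_P(1)$, but this is not needed for the statement as formulated.)
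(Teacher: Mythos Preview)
Your proposal is correct and follows essentially the same route the paper indicates: the paper does not give a separate written proof of the Huber theorem, but Section~\ref{subsec:ex.huber} explicitly says to apply the first part of Theorem~\ref{glm.main} (since $\dot\rho_{\mathrm{Huber}}$ is Lipschitz while $\ddot\rho_{\mathrm{Huber}}$ is not), and then to estimate $\Theta$ by unweighted nodewise regression on $X^TX/n$ rescaled by the constant $F_\epsilon(K)-F_\epsilon(-K)$, which is exactly your plan. One small difference: for the nodewise rate the paper (in the parallel LAD proof, Theorem~\ref{qr.thm}) cites Theorem~2.4 of \cite{vdgeer13} to get $\|\hat\Theta'_j-\Theta'_j\|_1=\mathcal O_P(s\lambda)$, whereas you invoke Theorem~\ref{my} with $v_\beta\equiv1$ and obtain the slightly weaker $\mathcal O_P(s^{3/2}\lambda)$; either suffices under $s^3(\log p)^2(\log n)^2/n\to0$, and your choice has the virtue of staying entirely within the present paper.
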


\section{Conclusions}
\label{sec:conc}
\par
A first message of our analysis is that to obtain asymptotically normal estimator in high-dimensional settings, we do not require the loss to be twice differentiable. 
Instead however, one must assume that the expected loss is sufficiently smooth and that the class of the ``score'' functions indexed by the unknown parameter satisfies a certain entropy bound. To this end, our analysis needed sparsity in the Lasso estimator, which was shown in Section \ref{sec:glm.main}.
A second message is that we need to estimate the score of the nuisance parameter for the  methodology to work. 
There is  a certain price we pay compared to the results in \cite{vdgeer13}:
our analysis leads to somewhat stronger sparsity assumptions. The analysis in \cite{vdgeer13} requires a sparsity condition
 $s=o(\sqrt{n}/\log p)$ for generalized linear models, where $s=\max\{\|\beta_0\|_0,\|\Theta_j\|_0\}$. We need the somewhat stronger condition $s^{3/2}=o(\sqrt{n}/\log p)$. This results from considering non-differentiable loss functions on one hand, and from estimation of the score for the nuisance parameter on the other.

\section{Simulation study}
\label{sec:sims}
We confirm our theoretical results with numerical experiments on synthetic data  and compare the performance of our approach with other plausible procedures, such as the maximum likelihood estimator.
\subsection{Models}
We consider the linear model with a continuous random variable 
and logistic  regression
with a binary response variable. In both settings, the design matrix $X\in\mathbb R^{n\times p}$ has independent normally distributed rows with $\mathbb EX=0$ and with covariance matrix $\Sigma_0:=\Theta_0^{-1}$ where the precision matrix is given by
$$\Theta_{ij}^0 = 0.3 \text{ if } |i-j|=1, \Theta_{ii}^0=1 \text{ and }\Theta_{ij}^0 = 0\text{ otherwise}.$$
The vector of regression coefficients is sparse and is given by $\beta_0=(1,1,1,0,\dots,0)\in\mathbb R^p.$ 

\subsection{Confidence intervals and hypothesis testing}
Our proposed methodology gives us tools to construct confidence intervals and test hypothesis about the regression parameters. 
We can also apply multiple testing procedures to test hypotheses about sets of regression coefficients.
\par 
We construct confidence intervals using the asymptotic normality of the de-sparsified Lasso $\hat b$. 
In particular, an asymptotic  $(1-\alpha)\%$ confidence interval for $\beta_j^0$ ($j=1,\dots,p$) can be constructed by
$$\hat b_j \pm \Phi^{-1}(1-\alpha/2)\hat\sigma_j/\sqrt{n},$$
where $\hat\sigma_j$ is an estimate of the asymptotic variance of the de-sparsified estimator (see Lemma \ref{var} and Section \ref{sec:examples}).
To calculate the de-sparsified estimator, we first need to compute the initial Lasso estimator which is done using the function {\small\texttt{glmnet()}} (or {\small\texttt{cv.glmnet()}}) from the R package {\small\texttt{glmnet}}. 
The matrix $\Theta$ is estimated using the nodewise Lasso regression.
The de-sparsified estimator is calculated as in \eqref{desp}.
Asymptotic variance of the de-sparsified estimator is estimated as in Lemma \ref{var}. 
\par
For the confidence intervals, we report average coverages and averages lengths from $N$ independently generated samples.
We give the average coverage over the ``active'' set $S_0=\{j:\beta_j^0\not = 0\}$ and average coverage over the ``non-active'' set $S_0^c$.
Similarly, we report average lengths of the confidence intervals over $S_0$ and $S_0^c$.
\par
For testing multiple hypothesis such as $H_0: \beta_j^0 = 0$ among all $j=1,\dots,p,$ we will use Bonferroni-Holm multiple testing adjustment to control the family-wise error rate (FWER) or the Benjamini-Hochberg correction for controlling the false discovery rate (FDR) (see \cite{BH}). 

\subsection{Linear regression}
In this section, we investigate the performance of the de-sparsified Lasso estimator with different loss functions (square loss, absolute loss and Huber loss) on simulated data. 
We consider the linear regression setting 
$$Y=X\beta_0+\epsilon,$$
with independent errors, which are independent of the design matrix and
 have \\
(1) a Gaussian distribution with zero mean and variance one, or\\
(2) Student $t_5$-distribution (scaled to have variance equal to one), or\\
(3) Student $t_3$-distribution (scaled to have variance equal to one).
\par
The construction of the estimators is the same as in Section \ref{sec:examples}. The tuning parameters were selected by cross-validation for square loss and Huber loss (using R packages \texttt{glmnet} and \texttt{hqreg}) and for absolute loss, the tuning parameter was selected by the method \texttt{qr.fit.lasso} from the R package \texttt{quantreg}. 
We assume that the weights needed to calculate the de-sparsified LAD estimator and de-sparsified Huber estimator (and their asymptotic variances) are known. Their estimation would involve e.g. density estimation and deeper analysis is omitted in this paper.  
We report the results on confidence intervals in Table \ref{tab:cover} and the histograms for the three methods in Figure \ref{fig:hist} (for the case of Gaussian error). 

\begin{table}
\caption{\label{tab:cover}
\footnotesize A table showing a comparison 
of the de-sparsified Lasso (D-S Lasso), the de-sparsified LAD estimator (D-S $\ell_1$-LAD) and the de-sparsified Huber estimator with $K=0.5$ (D-S Huber ($K=0.5$)).
Here, $\beta_0=(1,1,1,0,\dots,0).$
 The number of generated random samples was $N=100.$ The nominal coverage is 0.95.
}
\centering
\fbox{
\begin{tabular}{cclcccc}
   \multicolumn{3}{c}{\multirow{2}{*}{\bf Gaussian-distributed errors}}        &  &  &  &  \\
	\multicolumn{2}{c}{}     &  &\multicolumn{2}{c}{Coverage} & \multicolumn{2}{c}{Length}   \\  
	$p$& {  $n$ }      &  & $S_0$ & $S_0^c$ & $S_0$ & $S_0^c$  \\
\hline\\[-2ex]
\multirow{3}{*}{$100$} &\multirow{3}{*}{ $500$} 
& D-S Lasso & 92.67 & 95.88 & 0.17 & 0.17\\
&& D-S $\ell_1$LAD &  90.00 & 91.37 & 0.22 & 0.22 \\ 
&& D-S Huber ($K=0.5$) & 94.33 & 95.62 & 0.19 & 0.19
\\[0.07cm]\specialrule{.1em}{.05em}{.05em}  \\[-2ex]
   \multicolumn{3}{c}{\multirow{2}{*}{\bf Student $t_3$-distributed errors}}        &  &  &  &  \\
	\multicolumn{2}{c}{}     &  &\multicolumn{2}{c}{Coverage} & \multicolumn{2}{c}{Length}   \\  
	$p$& {  $n$ }      &  & $S_0$ & $S_0^c$ & $S_0$ & $S_0^c$  \\
\hline\\[-2ex]
\multirow{3}{*}{$100$} &\multirow{3}{*}{ $500$} 
& D-S Lasso & 95.00 & 95.94 & 0.17 & 0.17 \\
&& D-S $\ell_1$LAD&  91.33 & 88.19 & 0.13 & 0.13  \\ 
&& D-S Huber ($K=0.5$) & 94.67 & 95.51 & 0.11 & 0.11
\\[0.07cm]\specialrule{.1em}{.05em}{.05em}  \\[-2ex]
   \multicolumn{3}{c}{\multirow{2}{*}{\bf Student $t_5$-distributed errors}}        &  &  &  &  \\
	\multicolumn{2}{c}{}     &  &\multicolumn{2}{c}{Coverage} & \multicolumn{2}{c}{Length}   \\  
	$p$& {  $n$ }      &  & $S_0$ & $S_0^c$ & $S_0$ & $S_0^c$  \\
\hline\\[-2ex]
\multirow{3}{*}{$100$} &\multirow{3}{*}{ $500$} 
& D-S Lasso & 92.00 & 95.79 & 0.17 & 0.17 \\
&& D-S $\ell_1$LAD&  89.33 & 89.04 & 0.18 & 0.17  \\ 
&& D-S Huber ($K=0.5$) & 91.00   & 95.62  & 0.15 & 0.15
\end{tabular}
}
\end{table}

\begin{figure}
\centering
\begin{center}
\bf \small 
Asymptotic normality of regression parameters
\end{center}
\vskip 0.3cm

De-sparsified Lasso

\includegraphics[width=0.5\textwidth]{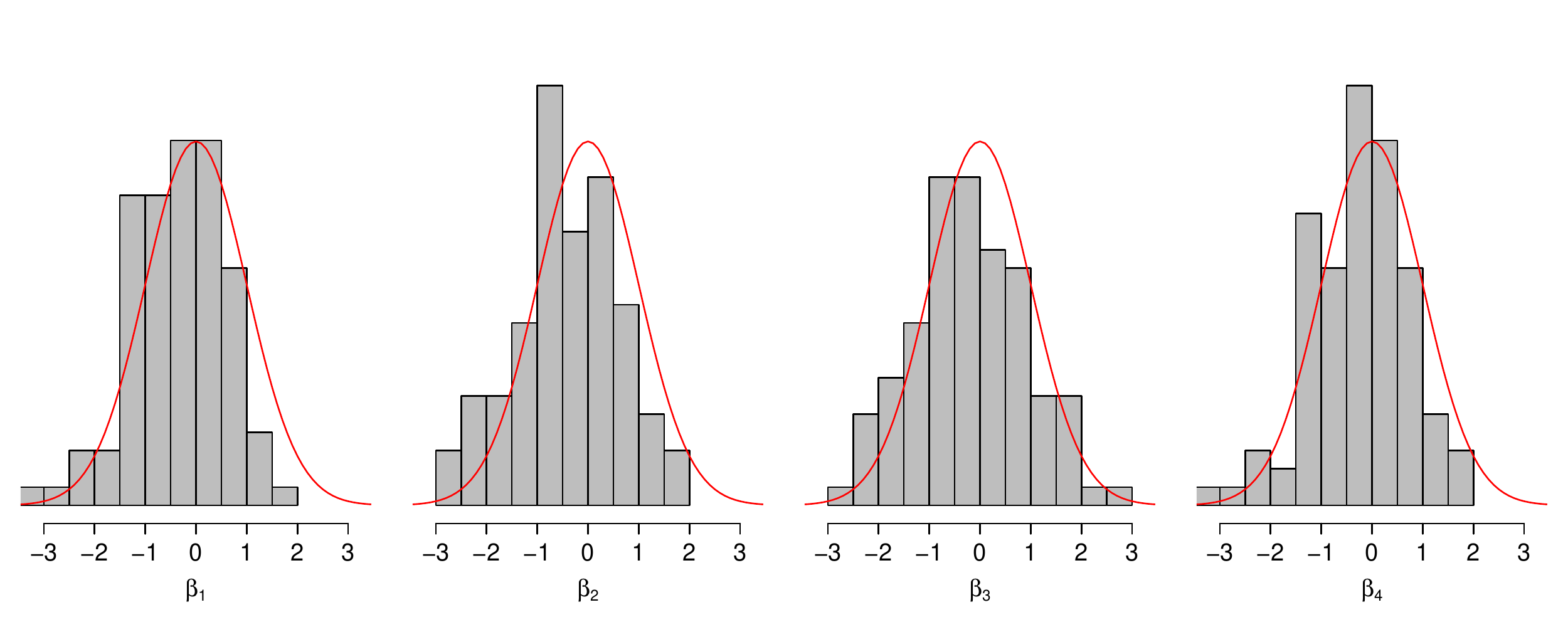}

De-sparsified LAD 

\includegraphics[width=0.5\textwidth]{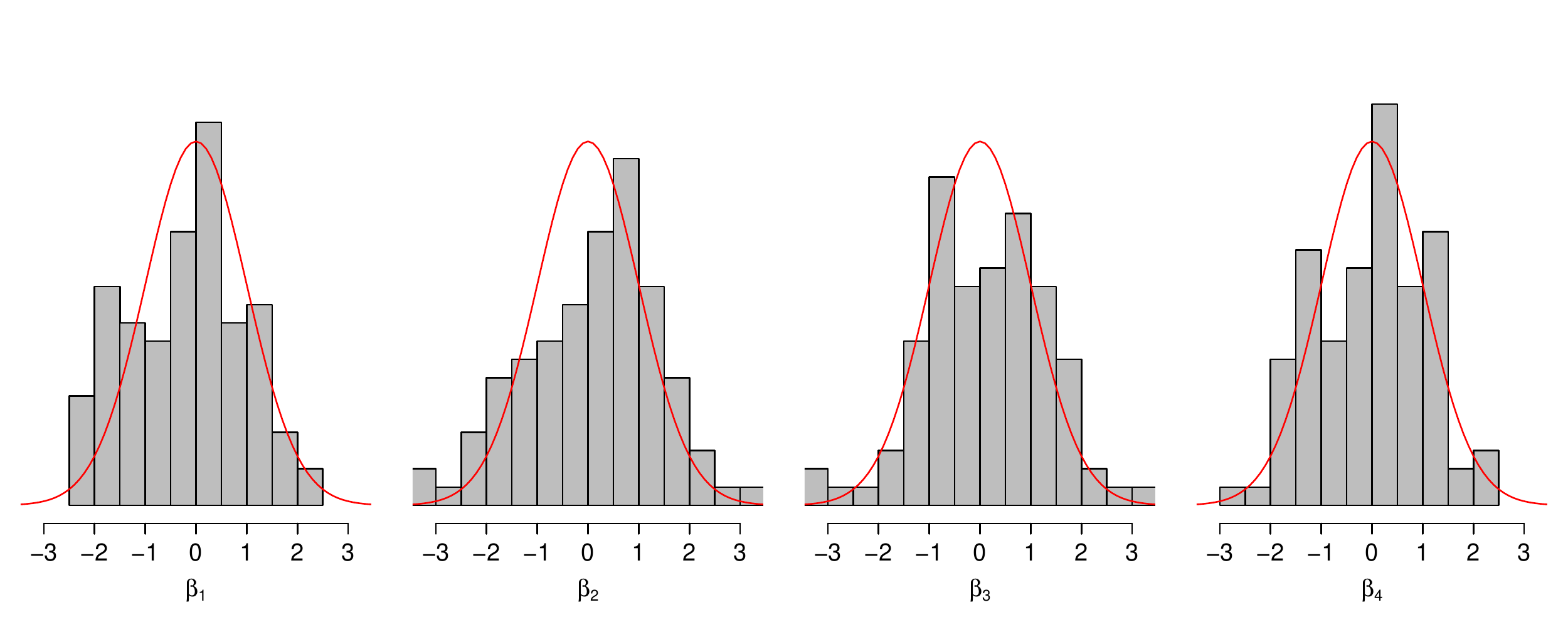}

De-sparsified Huber estimator

\includegraphics[width=0.5\textwidth]{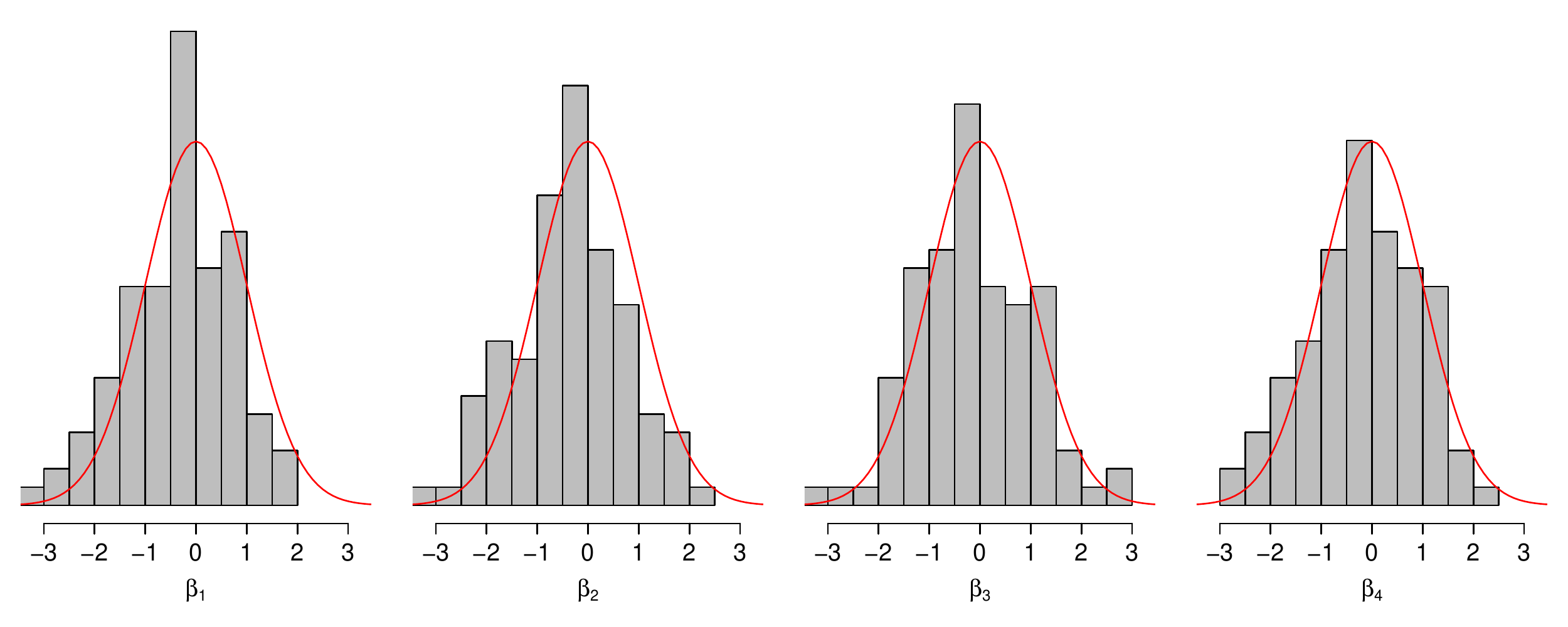}
\caption{Histograms of $\sqrt{n}(\hat b_j- \beta_j^0)/\hat\sigma_j,j=1,\dots,4$ for the de-sparisfied LAD, de-sparsified Lasso, and de-sparsified Huber estimator ($K=0.5$). Here, $n=500,p=100,\beta_0=(1,1,1,0,\dots,0).$
The error distribution is $\mathcal N(0,1)$.
Superimposed is the density of $\mathcal N(0,1)$ (red curve).
}
\label{fig:hist}
\end{figure}

\subsection{Logistic regression}

We analyze the performance of the de-sparsified Lasso for the case of logistic regression,
$$\log \left(\frac{\pi (x)}{1-\pi (x)}\right)=x^T\beta_0, \quad \pi(x) = P(Y=1|X=x).$$
 The $\ell_1$-penalized estimator is defined via the logistic loss function 
$$\rho_\beta(x,y) = -yx^T\beta + \log (1+e^{x^T\beta}).$$
\par
In the first part of the simulation experiment, we construct confidence intervals using the de-sparsified logistic Lasso as defined in the general formula \eqref{desp}. We consider a setting with $p=100$ and $n=400$. This also allows us to compare our approach to confidence intervals based on a maximum likelihood estimator. The maximum likelihood estimator is fitted with the function {\small\texttt{glm()}} in R. The confidence intervals are then calculated using the function {\small\texttt{confint.default()}}, which bases the confidence intervals on the standard error.
 The initial logistic Lasso estimator is fitted with {\small\texttt{cv.glmnet()}} with tuning parameter chosen by cross-validation. The matrix $\Theta$ is estimated by nodewise regression and the tuning parameters chosen by cross-validation.
\par
We plot histograms for the individual entries of the de-sparsified Lasso estimator in Figure \ref{fig:hist_logit}. For comparison, we also display histograms for the initial Lasso estimator and the maximum likelihood estimator. 
This demonstrates that the ``de-sparsifying'' is useful even in the setting where $p<n.$ For the confidence intervals, we report  average coverages and lengths over the active and non-active set in Table \ref{tab:cov}. The de-sparsified estimator performs significantly  better than the maximum likelihood estimator.
\par
In the second part of the experiment, we look at multiple testing. 
We consider testing the hypothesis $H_0:\beta_j^0=0$ among all $j=1,\dots,p.$ We use the Bonferroni-Holm procedure for controlling FWER. From 200 generated samples, 
the testing procedure had 100\% true positive rate and FWER value 0.015. 
\par 

\noindent
\begin{figure}
\small\centering
\textbf{\small \bf Histograms for coefficients in logistic regression}
\vskip 0.3cm
\begin{tabular}{ccc}
{\text{De-sparsified logistic Lasso} } &
 { \text{Logistic Lasso}} & \text{MLE} \\\cmidrule(lr){1-1}\cmidrule(rl){2-2}\cmidrule(rl){3-3}
\\[-2px]
\includegraphics[width=0.31\textwidth]{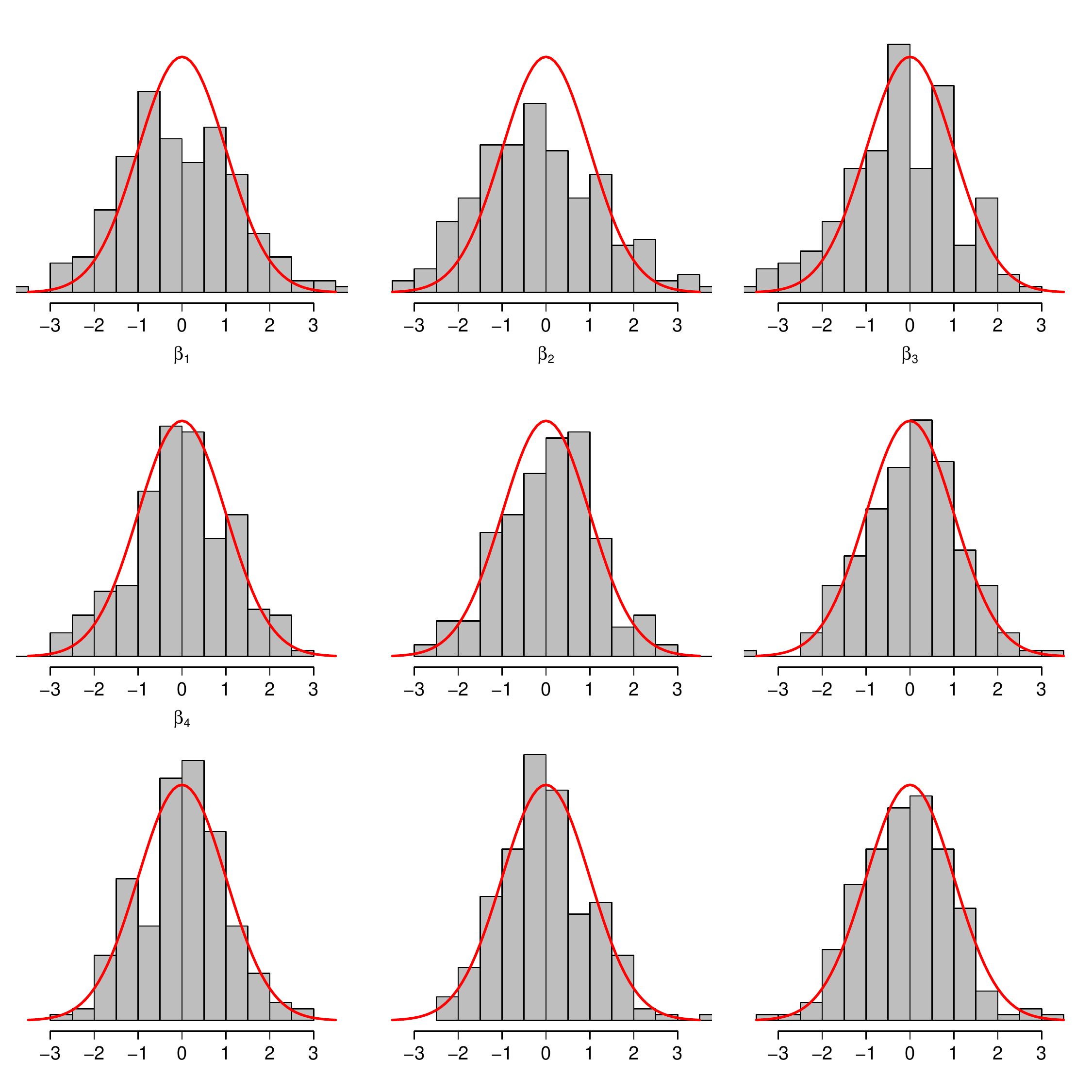} &
\includegraphics[width=0.31\textwidth]{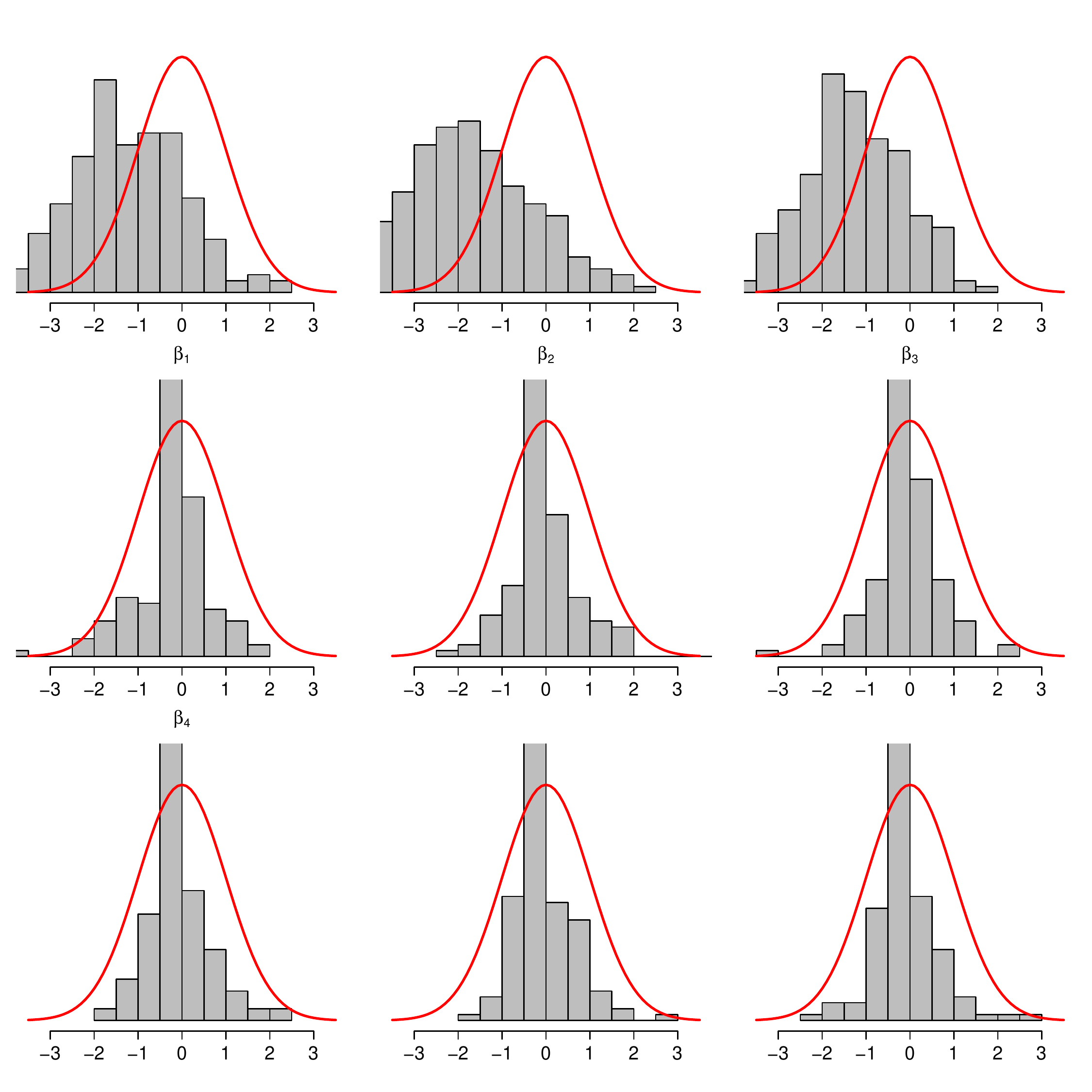} &
\includegraphics[width=0.31\textwidth]{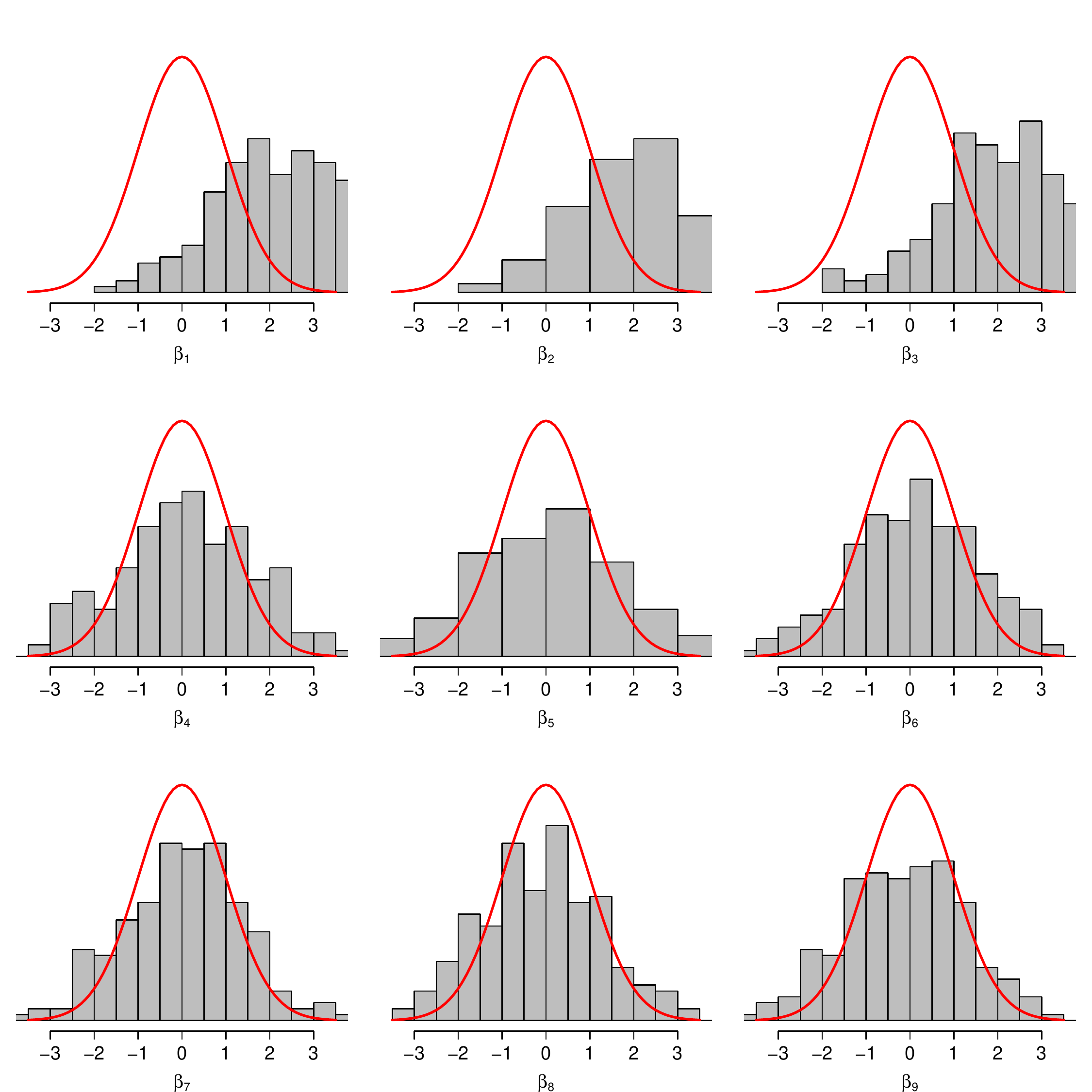}
\end{tabular}
\caption{Histograms of the de-sparsified logistic Lasso (left panel) for $\beta_1,\dots,\beta_9$. For comparison, histograms of the logistic  Lasso (right panel) are also displayed. Here, $n=800,p=100.$ Even for low-dimensional settings, the de-sparsifying step turns out to be useful.}
\label{fig:hist_logit}
\end{figure}

\begin{table} 
\caption{\label{tab:cov} 
\footnotesize
A table showing the average coverages and lengths over the active and non-active set for the de-sparsified logistic Lasso
(D-S Logistic Lasso) and the maximum likelihood estimator (MLE).
Here, $\beta_0=(1,1,1,0,\dots,0),p=100.$
}

\centering
\fbox{
\footnotesize
\begin{tabular}{clcccc}
 \multicolumn{2}{c}{\multirow{2}{*}{Logistic regression}} & \multicolumn{2}{c}{Coverage} & \multicolumn{2}{c}{Length} \\[1ex]
  & & $S_0$  & $S_0^c$     & $S_0$ & $S_0^c$  \\
\hline\\[0ex]
\multirow{2}{*}{$n=400$} & D-S Logistic Lasso &   0.817 & 0.919 & 0.423  &0.402 \\
  & MLE &   0.320 & 0.891 & 0.730  &0.638 \\[1ex]
\multirow{2}{*}{$n=800$}& D-S Logistic Lasso &   0.872 & 0.932 & 0.464  &0.374 \\
& MLE &   0.657 & 0.929 & 0.433  &0.348 \\
\\[0.07cm]
\end{tabular}
}

\end{table}

\newpage

\section{Real data experiments}
\label{sec:real}

In this section we investigate the practical usefulness of our methodology for gene expression studies which involve high-dimensional data.

\subsection{Linear regression: Riboflavin (vitamin B2) production}
\label{subsec:ribo}
The dataset {\small\texttt{riboflavin}} from the R package {\small \texttt{hdi}}  contains gene expression levels of 4088 genes and the  response variable represents riboflavin (vitamin B2) production. Our goal is to identify genes that significantly effect the production of riboflavin.
This dataset is ultra-high-dimensional given that it contains 4088 variables and only 72 observations, but we will reduce it to a moderate high-dimensional data set as for testing such a large number of hypotheses simultaneously turns out to be very conservative.
This was also demonstrated  in the papers \cite{vdgeer13} and \cite{jm}, which previously studied this data set. The paper 
\cite{vdgeer13} did not select any gene using the de-sparsified Lasso and the procedure suggested in \cite{jm} selected only two genes: genes YXLD\_at and YXLE\_at. 
The works apply (a version) of the de-sparsified Lasso with square loss to select significant variables using a multiple testing adjustment. We also aim to apply the de-sparsified Lasso estimator but in addition we apply the de-sparsified LAD estimator which is expected to be more robust to outliers and to the violation of the normality assumption.
\par
To do initial variable screening, we calculate 
$$\omega_i := |Y^TX_i|,\;\;i=1,\dots,4088,$$
where $X_i$ is the $i$-th row of the design matrix and $Y$ is the response. 
We select the first 300 variables which have the highest $\omega_i$'s.
To calculate the de-sparsified estimator, we fit the initial Lasso estimators with square loss and absolute loss to the data using cross-validation to choose the tuning parameters. To calculate an estimate of $\Theta$, we use nodewise square-root Lasso from Remark \ref{sqrtnode}, which avoids the need to do  cross-validation to choose the tuning parameters. 
 We then test the hypotheses: $H_0:\beta_j^0=0,$ among all $j=1,\dots,300.$ For multiple testing adjustment, we use two different procedures: the Bonferroni-Holm procedure and the Benjamini-Hochberg procedure.
The results are reported in Table \ref{tab:riboscreen}. 

\begin{center}
\begin{table}
\begin{center}
	Bonferroni-Holm	adjustment\\
	\fbox{
\begin{tabular}{llllll}
  \multicolumn{2}{l}{D-S LAD}  & \multicolumn{2}{l}{D-S LASSO} \\ 
	\hline
 RPSB\_at & 0.05 & RPLX\_at & 0.12 \\ 
   YCEI\_at & 0.14 & YCEI\_at & 0.41 \\ 
    & & YHCL\_at & 0.51 \\ 
    &  & YNEF\_at & 0.57 \\ 
    &  & YFKN\_at & 0.61 \\ 
\end{tabular}
}
\\[0.3cm]
Benjamini-Hochberg adjustment
\\
	\fbox{
\begin{tabular}{lllllllll}
  \multicolumn{2}{l}{D-S LAD}  & \multicolumn{2}{l}{D-S LASSO} \\ 
  \hline
 RPSB\_at & 0.05 & IOLI\_at & 0.10 \\ 
 YCEI\_at & 0.07 & RPLX\_at & 0.10 \\ 
 RPLX\_at & 0.36 & YCEI\_at & 0.10 \\ 
 YHCL\_at & 0.36 & YFKN\_at & 0.10 \\ 
 NARI\_at & 0.49 & YHCL\_at & 0.10 \\ 
\end{tabular}
}
\end{center}

\caption{\label{tab:riboscreen}
  Variables with smallest p-values among all genes selected by initial screening are reported. D-S LAD corresponds to the de-sparsified Lasso and D-S LASSO corresponds to the de-sparsified LAD estimator. Corresponding $p$-values are reported next to the genes. }
\end{table}

\end{center}

\subsection{Logistic regression: Genome-wide studies in cancer
}
We apply our methodology to a real data set on genome-wide association studies in cancer. The response variable indicating presence or absence of the illness (prostate cancer) is binary, therefore we model the relationship using logistic regression. 
The dataset contains 102 observations (52 positive, 50 control) on 6033 genes and is available from the R package {\small \texttt{spls}}.
\par
We do variable screening as in Section \ref{subsec:ribo} to reduce the ultra-high-dimensional data to a more feasible size of 200 genes.
The initial logistic Lasso estimator is computed using {\small\texttt{cv.glmnet()}} with cross-validation to determine the tuning parameter. The nodewise regression estimator of $\Theta$ is computer using the square-root Lasso as in Remark \ref{sqrtnode}.
\par 
%

 Using the de-sparsified logistic Lasso and multiple testing adjustment (both Bonferroni-Holm and Benjamini-Hochberg yield the same result), we identify gene number 515 as significant, with a coefficient estimate $\hat b_{515}=-2.4677139.$
By thresholding the de-sparsified logistic Lasso at the level $2\hat\sigma_j\sqrt{\log p/n},$ $j=1,\dots,p$ we find genes 515, 4639, 5503 significant with coefficients -2.4677139, -1.3019043,  -0.7832844, respectively.
For a comparison, logistic Lasso identifies 32 genes with non-zero coefficients (including genes 515, 4639, 5503). 
\par For an illustration of the confidence intervals for individual coefficients (without adjustment), see Figure \ref{fig:cipro}.

\begin{figure}
\centering
\includegraphics[width=0.7\textwidth]{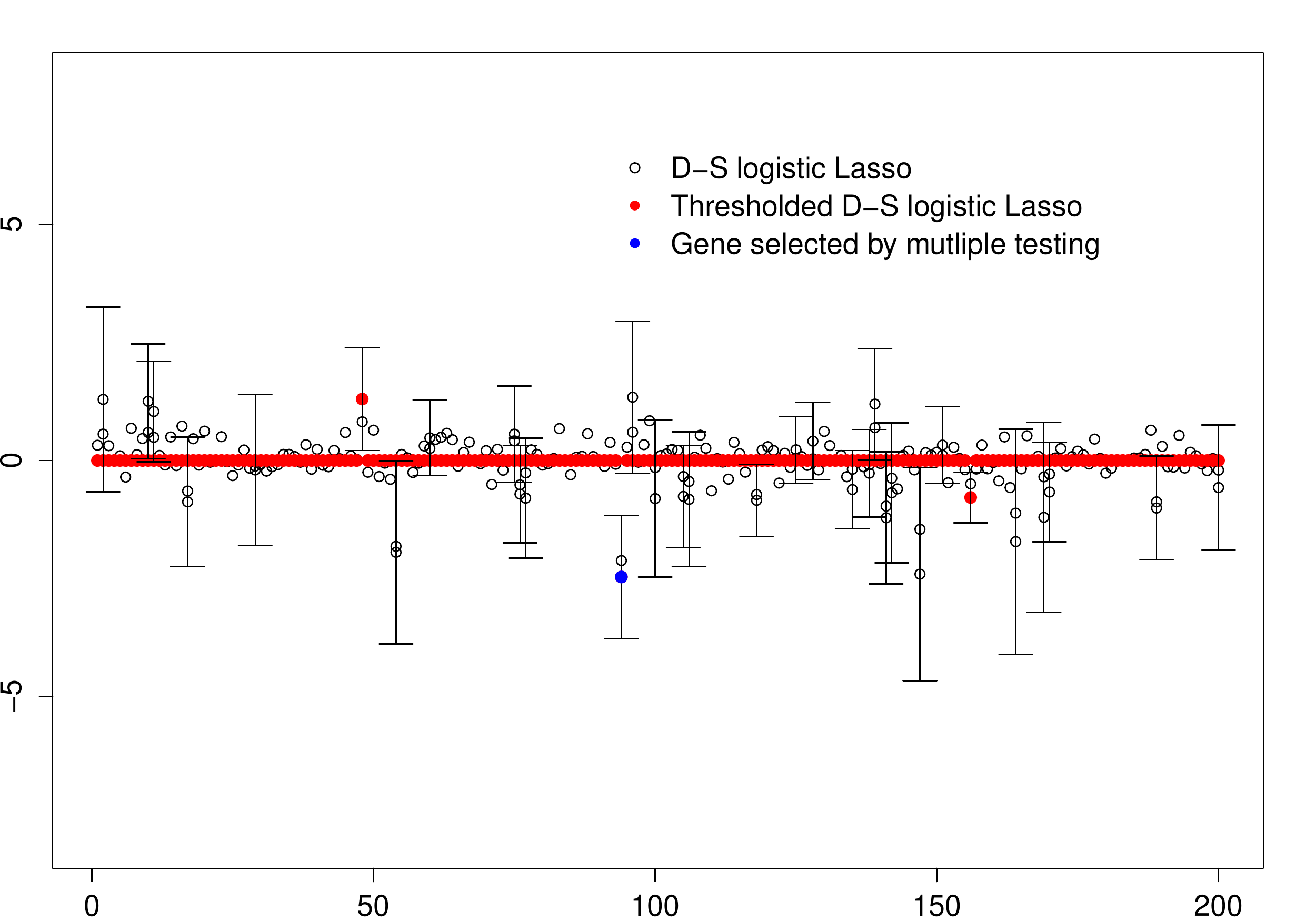} 
\caption{Prostate cancer dataset: The vertical segments represent individual confidence intervals (without adjustment). For clarity of presentation, only a fraction of all the confidence intervals is displayed. The red dots represent variables that were selected by thresholding and the blue dot represents a variable selected by multiple testing. }
\label{fig:cipro}
\end{figure}

\subsection{Discussion}
The simulation study demonstrated that the de-sparsified estimator performs well in a variety of settings for the linear regression and logistic regression, in the setting when $p$ is moderately large. In these settings, the de-sparsified estimator proves to be useful as it clearly outperforms the maximum likelihood estimator.  
We further observed that multiple testing with the de-sparsified estimator turned out to be conservative and lead to only a few variables selected. However, this is to be expected in the presence of many variables.

\bibliography{../../../gminf}

\begin{thebibliography}{}

\bibitem[{Belloni} et~al., 2015]{vch}
{Belloni}, A., {Chernozhukov}, V., and {Kato}, K. (2015).
\newblock {Uniform Post Selection Inference for LAD Regression and Other
  Z-estimation problems}.
\newblock {\em Biometrika}, 102(1):77--94.

\bibitem[{Belloni} et~al., 2011]{sqrtlasso}
{Belloni}, A., {Chernozhukov}, V., and {Wang}, L. (2011).
\newblock {Square-Root Lasso: Pivotal Recovery of Sparse Signals via Conic
  Programming}.
\newblock {\em Biometrika}, 98(4):791--806.

\bibitem[{Benjamini} and {Hochberg}, 1995]{BH}
{Benjamini}, Y. and {Hochberg}, Y. (1995).
\newblock Controlling the false discovery rate: a practical and powerful
  approach to multiple testing.
\newblock {\em Journal of the Royal Statistical Society, Series B},
  57(1):289--300.

\bibitem[{B\"uhlmann} and {van de Geer}, 2011]{hds}
{B\"uhlmann}, P. and {van de Geer}, S. (2011).
\newblock Statistics for high-dimensional data.
\newblock {\em Springer}.

\bibitem[{Chatterjee} and {Lahiri}, 2013]{bootstrap1}
{Chatterjee}, A. and {Lahiri}, S.~N. (2013).
\newblock {Rates of convergence of the adaptive LASSO estimators to the oracle
  distribution and higher order refinements by the bootstrap}.
\newblock {\em Annals of Statistics}, 41(3).

\bibitem[{Chernozhukov} et~al., 2015]{vch1}
{Chernozhukov}, V., {Hansen}, C., and {Spindler}, M. (2015).
\newblock {\em Valid Post-Selection and Post-Regularization Inference: An
  Elementary, General Approach}.
\newblock Annual Review of Economics.

\bibitem[Friedman et~al., 2008]{glasso}
Friedman, J., Hastie, T., and Tibshirani, R. (2008).
\newblock Sparse inverse covariance estimation with the graphical lasso.
\newblock {\em Biostatistics}, 9:432--441.

\bibitem[{Jankov\'a} and {van de Geer}, 2014]{jvdgeer14}
{Jankov\'a}, J. and {van de Geer}, S. (2014).
\newblock Confidence intervals for high-dimensional inverse covariance
  estimation.
\newblock {\em Electronic Journal of Statistics}, 9(1):1205 --1229.

\bibitem[{Jankov\'a} and {van de Geer}, 2016]{jvdgeer15}
{Jankov\'a}, J. and {van de Geer}, S. (2016).
\newblock Honest confidence regions and optimality for high-dimensional
  precision matrix estimation.
\newblock {\em TEST}.

\bibitem[{Javanmard} and {Montanari}, 2014]{stanford1}
{Javanmard}, A. and {Montanari}, A. (2014).
\newblock {Confidence intervals and hypothesis testing for high-dimensional
  regression}.
\newblock {\em The Journal of Machine Learning Research}, 15(1):2869--2909.

\bibitem[Javanmard and Montanari, 2014]{jm}
Javanmard, A. and Montanari, A. (2014).
\newblock Confidence intervals and hypothesis testing for high-dimensional
  regression.
\newblock {\em Journal of Machine Learning Research}, 15:2869--2909.

\bibitem[Knight and Fu, 2000]{knight2000}
Knight, K. and Fu, W. (2000).
\newblock Asymptotics for lasso-type estimators.
\newblock {\em Annals of Statistics}, 28(5):1356--1378.

\bibitem[McCullagh and Nelder, 1989]{mccullagh}
McCullagh, P. and Nelder, J.~A. (1989).
\newblock {\em Generalized linear models}, volume~37.
\newblock CRC press.

\bibitem[Meinshausen and B\"uhlmann, 2006]{buhlmann}
Meinshausen, N. and B\"uhlmann, P. (2006).
\newblock High-dimensional graphs and variable selection with the lasso.
\newblock {\em Annals of Statistics}, 34(3):1436--1462.

\bibitem[{Nickl} and {van de Geer}, 2012]{vdgeer12}
{Nickl}, R. and {van de Geer}, S. (2012).
\newblock Confidence sets in sparse regression.
\newblock {\em Annals of Statistics}, 41(6):2852--2876.

\bibitem[van~de Geer, 2014]{vdgeer14}
van~de Geer, S. (2014).
\newblock Worst possible sub-directions in high-dimensional models.
\newblock {\em ArXiv: 1403.7023}.

\bibitem[{van de Geer} et~al., 2014]{vdgeer13}
{van de Geer}, S., {B{\"u}hlmann}, P., {Ritov}, Y., and {Dezeure}, R. (2014).
\newblock {On asymptotically optimal confidence regions and tests for
  high-dimensional models}.
\newblock {\em Annals of Statistics}, 42(3):1166--1202.

\bibitem[{van de Geer} and {Stucky}, 2016]{chisq}
{van de Geer}, S. and {Stucky}, B. (2016).
\newblock {$\chi^2$-Confidence Sets in High-Dimensional Regression}.
\newblock {\em Volume 11 of the series Abel Symposia}, pages 279--306.

\bibitem[{van der Vaart}, 2000]{vdv}
{van der Vaart}, A. (2000).
\newblock Asymptotic statistics.
\newblock {\em Cambridge University Press}.

\bibitem[van~der Vaart and Wellner, 1996]{weak}
van~der Vaart, A.~W. and Wellner, J.~A. (1996).
\newblock {\em Weak Convergence and Empirical Processes}.
\newblock Springer New York.

\bibitem[Yuan and Lin, 2007]{yuan}
Yuan, M. and Lin, Y. (2007).
\newblock Model selection and estimation in the gaussian graphical model.
\newblock {\em Biometrika}, 94(1):19--35.

\bibitem[{Zhang} and {Zhang}, 2014]{zhang}
{Zhang}, C.-H. and {Zhang}, S.~S. (2014).
\newblock {Confidence intervals for low-dimensional parameters in
  high-dimensional linear models}.
\newblock {\em Journal of the Royal Statistical Society: Series B},
  76:217--242.

\end{thebibliography}

\appendix
\newpage

\section{Appendix}
\label{sec:proofs}
The section is organized as follows.
\begin{enumerate}
\item
Several preliminary results are stated in Section \ref{subsec:rates.ep}.
\item
Proofs for Section \ref{sec:glm.main} (generalized linear models) can be found in Section \ref{sec:glm.proof}.
\item
Proofs for Section \ref{sec:main.all} (general models) are contained in Section \ref{sec:main.all.proof}.
\item
Proofs for Section \ref{sec:nw} (nodewise regression) are given in Section \ref{sec:nw.proof}.
\item
Proofs for Section \ref{sec:examples} (examples) are contained in Section \ref{sec:examples.proof}.
\end{enumerate}
For two sequences, $f_n,g_n$, we write $f_n \lesssim g_n$ if there exists a constant $C>0$ such that $f_n\leq Cg_n$ for all $n.$
\subsection{Preliminary material}
\label{subsec:rates.ep}

\label{subsec:suff}
We define
$$Z(\mathcal F) := \sup_{f\in\mathcal F} |(\mathbb P_n-P)f|,$$
and
$$Z^\epsilon(F) := \sup_{f\in\mathcal F} |\sum_{i=1}^n \epsilon_i f(X_i)|.$$


\begin{theorem}[see e.g. \cite{vdv}]
\label{sym}
$$\mathbb EZ(\mathcal F) \leq 2 \mathbb EZ^\epsilon (\mathcal F).$$
\end{theorem}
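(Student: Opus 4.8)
The plan is to prove the standard symmetrization inequality $\mathbb EZ(\mathcal F)\leq 2\mathbb EZ^\epsilon(\mathcal F)$ by the classical ghost-sample argument. First I would introduce an independent copy $X_1',\dots,X_n'$ of the sample (the ``ghost sample''), independent of $X_1,\dots,X_n$ and of the Rademacher variables $\epsilon_1,\dots,\epsilon_n$. The key observation is that $Pf = \mathbb E'\bigl[\tfrac1n\sum_{i=1}^n f(X_i')\bigr]$, where $\mathbb E'$ denotes expectation over the ghost sample only, so that
$$Z(\mathcal F)=\sup_{f\in\mathcal F}\Bigl|\tfrac1n\textstyle\sum_{i=1}^n f(X_i)-Pf\Bigr|=\sup_{f\in\mathcal F}\Bigl|\mathbb E'\tfrac1n\textstyle\sum_{i=1}^n\bigl(f(X_i)-f(X_i')\bigr)\Bigr|.$$

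Next I would pull the supremum inside the ghost-sample expectation and apply Jensen's inequality (the supremum of an absolute value of a conditional expectation is at most the conditional expectation of the supremum of the absolute value), giving
$$Z(\mathcal F)\leq \mathbb E'\sup_{f\in\mathcal F}\Bigl|\tfrac1n\textstyle\sum_{i=1}^n\bigl(f(X_i)-f(X_i')\bigr)\Bigr|,$$
and then take expectation over the original sample so that $\mathbb EZ(\mathcal F)\leq \mathbb E\sup_{f\in\mathcal F}\bigl|\tfrac1n\sum_{i=1}^n(f(X_i)-f(X_i'))\bigr|$, where now the expectation is over both samples. The crucial step is the symmetrization itself: since $X_i$ and $X_i'$ are i.i.d., the random vector $\bigl(f(X_i)-f(X_i')\bigr)_i$ has the same distribution as $\bigl(\epsilon_i(f(X_i)-f(X_i'))\bigr)_i$ for any fixed sign pattern, hence also after averaging over independent Rademacher signs $\epsilon_i$; this is where independence and the i.i.d.\ assumption are essential. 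Therefore
$$\mathbb E\sup_{f\in\mathcal F}\Bigl|\tfrac1n\textstyle\sum_{i=1}^n\bigl(f(X_i)-f(X_i')\bigr)\Bigr|=\mathbb E\sup_{f\in\mathcal F}\Bigl|\tfrac1n\textstyle\sum_{i=1}^n\epsilon_i\bigl(f(X_i)-f(X_i')\bigr)\Bigr|.$$

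Finally I would use the triangle inequality to split the Rademacher sum, bounding $\sup_f|\sum_i\epsilon_i(f(X_i)-f(X_i'))|\leq \sup_f|\sum_i\epsilon_i f(X_i)|+\sup_f|\sum_i\epsilon_i f(X_i')|$, and observe that both terms have the same expectation, each equal to $\tfrac1n\mathbb EZ^\epsilon(\mathcal F)$ by definition of $Z^\epsilon$ (noting the normalization convention in the statement). Collecting the factor of $2$ from the two terms yields $\mathbb EZ(\mathcal F)\leq 2\,\mathbb EZ^\epsilon(\mathcal F)$. The only mild subtlety — and the step I would be most careful about — is keeping track of the $1/n$ normalization consistently between the definitions of $Z(\mathcal F)$ (which involves the empirical \emph{mean}) and $Z^\epsilon(\mathcal F)$ (written as a \emph{sum} $\sum_i\epsilon_i f(X_i)$ in the excerpt); depending on the precise convention this may introduce an $n$ factor, so the cleanest route is to carry the $1/n$ explicitly throughout and match it to the stated form of $Z^\epsilon$ at the very end. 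Measurability of the suprema is handled by the usual convention (or by restricting to countable subclasses), and no further structure on $\mathcal F$ is needed.
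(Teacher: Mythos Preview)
Your argument is the standard ghost-sample proof of the symmetrization inequality and is correct. Note, however, that the paper does not actually prove this statement: it is listed among the preliminary material with a citation to \cite{vdv} and no proof is given, so there is nothing to compare your approach against. Your caution about the normalization is well placed: as written in the paper, $Z(\mathcal F)=\sup_{f}|(\mathbb P_n-P)f|$ carries a $1/n$ while $Z^\epsilon(\mathcal F)=\sup_f|\sum_i\epsilon_i f(X_i)|$ does not, so the stated inequality is only consistent with the subsequent use of Dudley's bound if $Z^\epsilon$ is tacitly read with a $1/n$ in front; your plan to carry the $1/n$ explicitly and match conventions at the end is exactly the right way to handle this.
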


\begin{theorem}[Dudley's inequality]\label{dudley}
$$\mathbb EZ^\epsilon(\mathcal F) \leq C_0 
\inf_{\delta>0} \mathbb E\left[ \hat R \int_\delta^1 \sqrt{\log N(u\hat R, \mathcal F, \|\cdot\|_n)}du /\sqrt{n} + \delta \hat R\right],$$
where $\hat R:= \sup_{f\in\mathcal F} \|f\|_n.$
\end{theorem}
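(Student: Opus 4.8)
The plan is to condition on the data $X_1,\dots,X_n$ and run the classical chaining argument on the Rademacher process $f\mapsto \tfrac1n\sum_{i=1}^n\epsilon_i f(X_i)$ (with the normalization implicit in the symmetrization bound of Theorem~\ref{sym}). The key observation is that, conditionally on $X_1,\dots,X_n$, this process is sub-Gaussian for the rescaled empirical metric $d(f,g)=\|f-g\|_n/\sqrt n$: by Hoeffding's inequality, for fixed $f,g\in\mathcal F$ and $t>0$,
$$P^\epsilon\Big(\Big|\tfrac1n\textstyle\sum_{i=1}^n\epsilon_i(f-g)(X_i)\Big|>t\Big)\le 2\exp\big(-nt^2/(2\|f-g\|_n^2)\big).$$
All covering numbers and the radius $\hat R=\sup_{f\in\mathcal F}\|f\|_n$ appearing below are computed for this conditional law, and are measurable functions of the data.

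Fix $\delta\in(0,1)$. First I would build a nested family of minimal $\|\cdot\|_n$-nets $\mathcal F_k\subset\mathcal F$ at scales $r_k:=2^{-k}\hat R$ for $k=0,1,\dots,k_0$, where $k_0$ is the largest index with $r_{k_0}\ge \delta\hat R$; here $|\mathcal F_k|=N(r_k,\mathcal F,\|\cdot\|_n)$, and since every $f\in\mathcal F$ has $\|f\|_n\le\hat R$ we may take $\mathcal F_0=\{f_0\}$ for a fixed $f_0\in\mathcal F$ (or $f_0=0$ if $0\in\mathcal F$). Writing $\pi_kf\in\mathcal F_k$ for a nearest net point, I would use the telescoping identity
$$\tfrac1n\textstyle\sum_i\epsilon_i f(X_i)=\tfrac1n\sum_i\epsilon_i f_0(X_i)+\sum_{k=1}^{k_0}\tfrac1n\sum_i\epsilon_i(\pi_kf-\pi_{k-1}f)(X_i)+\tfrac1n\sum_i\epsilon_i(f-\pi_{k_0}f)(X_i).$$
The $k$-th link ranges over at most $|\mathcal F_k|\,|\mathcal F_{k-1}|\le N(r_k,\mathcal F,\|\cdot\|_n)^2$ values, each with conditional sub-Gaussian parameter $\lesssim r_k/\sqrt n$ because $\|\pi_kf-\pi_{k-1}f\|_n\le 3r_k$, so the sub-Gaussian maximal inequality gives $\mathbb E^\epsilon\sup_f|\text{link }k|\lesssim (r_k/\sqrt n)\sqrt{\log N(r_k,\mathcal F,\|\cdot\|_n)}$. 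Summing over $k$ and comparing the sum with the integral (using that $N(\cdot,\mathcal F,\|\cdot\|_n)$ is nonincreasing and $r_{k-1}-r_k=r_k$) yields $\sum_{k=1}^{k_0}\lesssim n^{-1/2}\int_{\delta\hat R}^{\hat R}\sqrt{\log N(u,\mathcal F,\|\cdot\|_n)}\,du=n^{-1/2}\hat R\int_\delta^1\sqrt{\log N(u\hat R,\mathcal F,\|\cdot\|_n)}\,du$ after the substitution $u\mapsto u\hat R$; the base term contributes $\le\|f_0\|_n/\sqrt n\le\hat R/\sqrt n$, absorbed into the integral (or into the $\delta\hat R$ term). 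For the residual link I would apply the crude Cauchy--Schwarz bound $|\tfrac1n\sum_i\epsilon_i h(X_i)|\le n^{-1}\|\epsilon\|_2(\sum_i h(X_i)^2)^{1/2}=\|h\|_n$ with $h=f-\pi_{k_0}f$, so $\|h\|_n\le r_{k_0}\le\delta\hat R$, which is exactly the $\delta\hat R$ term. Collecting the three pieces gives, pointwise in the data, $\mathbb E^\epsilon Z^\epsilon(\mathcal F)\le C_0\big(n^{-1/2}\hat R\int_\delta^1\sqrt{\log N(u\hat R,\mathcal F,\|\cdot\|_n)}\,du+\delta\hat R\big)$; taking expectation over $X_1,\dots,X_n$ and then the infimum over $\delta>0$ yields the claim.

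I expect the main obstacle to be the bookkeeping in the conditional chaining: making the sub-Gaussian maximal inequality apply \emph{uniformly} over the (possibly uncountable) class $\mathcal F$ requires a separability/measurability remark so that the supremum reduces to suprema over the finite nets plus the controllable residual, and the constants and scale cutoffs must be lined up so that the discrete sum is genuinely dominated by the stated entropy integral from $\delta$ to $1$ (in particular the factor $n^{-1/2}$ must track correctly through the metric rescaling $d=\|\cdot\|_n/\sqrt n$). Beyond that, everything is the standard Dudley machinery.
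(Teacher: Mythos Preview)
The paper does not prove this statement: Theorem~\ref{dudley} sits in the ``Preliminary material'' section alongside the symmetrization inequality and is quoted as a known result (in the spirit of \cite{weak}, \cite{vdv}), then simply applied in the proof of Theorem~\ref{main2}. So there is no paper proof to compare against; your chaining sketch is exactly the standard argument one would supply if asked.

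Your outline is sound, but two bookkeeping points are slightly off. First, with $k_0$ defined as the \emph{largest} index satisfying $r_{k_0}\ge\delta\hat R$, you get $\delta\hat R\le r_{k_0}<2\delta\hat R$, so the residual bound should read $\|f-\pi_{k_0}f\|_n\le r_{k_0}\le 2\delta\hat R$, not $\le\delta\hat R$; the extra factor of $2$ is absorbed into $C_0$. Second, the base term $\mathbb E^\epsilon|\tfrac1n\sum_i\epsilon_i f_0(X_i)|\le\hat R/\sqrt n$ is not automatically dominated by the entropy integral plus $\delta\hat R$ for \emph{every} $\delta>0$ unless $0\in\mathcal F$ (so one can take $f_0=0$) or one bounds $\sup_{f,g}|G_f-G_g|$ instead of $\sup_f|G_f|$. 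In the paper's applications the classes are of the form $\{\Theta_j^Tx(w_\beta-w_{\beta_0}):\beta\in\mathcal B\}$ with $\beta_0\in\mathcal B$, so $0\in\mathcal F$ and the issue disappears; you already flagged this case. Neither point is a genuine gap, just the constant-tracking you anticipated.
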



For reader's convenience, we recall the Nemirovski inequality.

\begin{theorem}[Nemirovski inequality, see e.g. \cite{hds}] \label{nemirovski}
For $m\geq  1$ and $p \geq e^{m-1}$, we have 
$$\mathbb E \max_{1\leq j \leq p} |\sum_{i=1}^n \gamma_j(Z_i) - \mathbb E\gamma_j(Z_i)|^m \leq {(8\log (2p))^{m/2}} 
\mathbb E \left(\max_{1 \leq j \leq p} \sum_{i=1}^n \gamma_j^2(Z_i)\right)^{m/2}.$$
\end{theorem}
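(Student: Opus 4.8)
The statement is the classical Nemirovski moment inequality, and the plan is to reduce it to a conditionally sub-Gaussian maximal inequality via symmetrization. Write $\bar\gamma_{j}(Z_{i}):=\gamma_{j}(Z_{i})-\mathbb E\gamma_{j}(Z_{i})$, so the left-hand side is $\mathbb E\max_{j}|\sum_{i}\bar\gamma_{j}(Z_{i})|^{m}$. First I would symmetrize --- the $m$-th moment version of Theorem \ref{sym}: introducing an independent copy $(Z_{i}')$ and Rademacher signs $(\epsilon_{i})$, and using Jensen's inequality together with the sign-symmetry of $\gamma_{j}(Z_{i})-\gamma_{j}(Z_{i}')$, one gets
$$\mathbb E\max_{j}\Big|\sum_{i}\bar\gamma_{j}(Z_{i})\Big|^{m}\le 2^{m}\,\mathbb E\max_{j}\Big|\sum_{i}\epsilon_{i}\gamma_{j}(Z_{i})\Big|^{m}.$$
The point of symmetrizing is that the quadratic variation of the symmetrized sums is exactly $\sum_{i}\gamma_{j}^{2}(Z_{i})$ --- with no centering, and requiring no boundedness of the $\gamma_{j}$ --- which is why the bound can be stated with $\gamma_{j}^{2}(Z_{i})$ on the right.

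Next I would condition on $(Z_{i})_{i=1}^{n}$. Conditionally, each $\sum_{i}\epsilon_{i}\gamma_{j}(Z_{i})$ is a Rademacher sum with fixed coefficients, hence sub-Gaussian with variance proxy $\sigma_{j}^{2}:=\sum_{i}\gamma_{j}^{2}(Z_{i})$ by Hoeffding's bound $\mathbb E_{\epsilon}\exp\{t\sum_{i}\epsilon_{i}a_{i}\}\le\exp\{t^{2}\sum_{i}a_{i}^{2}/2\}$. Setting $V:=\max_{j}\sigma_{j}^{2}=\max_{j}\sum_{i}\gamma_{j}^{2}(Z_{i})$, a union bound gives the conditional tail estimate $\mathbb P_{\epsilon}(\max_{j}|\sum_{i}\epsilon_{i}\gamma_{j}(Z_{i})|>t)\le 2p\,e^{-t^{2}/(2V)}$, and integrating $\int_{0}^{\infty}mt^{m-1}\min\{1,\,2p\,e^{-t^{2}/(2V)}\}\,dt$ (splitting the range at $t_{0}=\sqrt{2V\log(2p)}$) yields $\mathbb E_{\epsilon}\max_{j}|\sum_{i}\epsilon_{i}\gamma_{j}(Z_{i})|^{m}\le(c\,(\log p+m))^{m/2}\,V^{m/2}$ for an absolute constant $c$. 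This is exactly where the hypothesis $p\ge e^{m-1}$, equivalently $m\le 1+\log p$, is used: it lets $m$ be absorbed into $\log p$, so that $c(\log p+m)$ becomes a constant multiple of $\log(2p)$.

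Finally I would take expectation over $(Z_{i})$ of $(c(\log p+m))^{m/2}V^{m/2}$, note that $V^{m/2}=(\max_{j}\sum_{i}\gamma_{j}^{2}(Z_{i}))^{m/2}$, and fold in the factor $2^{m}$ from symmetrization; tracking these constants produces the factor $(8\log(2p))^{m/2}$ claimed. No special care is needed for non-integer or odd $m$, since the identity $\mathbb E W^{m}=\int_{0}^{\infty}mt^{m-1}\mathbb P(W>t)\,dt$ holds for any $m\ge 1$ and nonnegative $W$.

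The main obstacle is not conceptual but is the constant bookkeeping in the last two steps: one must carry the sub-Gaussian maximal inequality through with sharp enough constants that, after the symmetrization factor $2^{m}$ and after using $p\ge e^{m-1}$ to replace $\log p+m$ by $\log(2p)$, the surviving absolute constant is no larger than $8$. An alternative to symmetrizing is to dominate $\|\cdot\|_{\infty}$ by $\|\cdot\|_{q}$ for $q$ the least even integer of order $\log p$, use the uniform $2$-smoothness of $\|\cdot\|_{q}^{2}$ (or a coordinatewise Burkholder--Davis--Gundy bound) to control $\mathbb E\|\cdot\|_{q}^{q}$, and pass back; but there the delicate point becomes matching the moment order $m$ to $q$ while keeping $p^{1/q}$ bounded, so I would prefer the route above.
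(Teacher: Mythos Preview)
The paper does not give its own proof of this result; it is simply recalled from \cite{hds} as a preliminary tool (it is invoked once, in the proof of Lemma~\ref{node.rates}). So there is no paper proof to compare your argument against.

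Your sketch is a valid route. The symmetrization step is correct and, as you point out, has the virtue that the conditional variance of the Rademacher sum is exactly $\sum_i \gamma_j^2(Z_i)$, so the right-hand side emerges in the stated form $\max_j \sum_i \gamma_j^2(Z_i)$ rather than the weaker $\sum_i \max_j \gamma_j^2(Z_i)$. The conditional sub-Gaussian tail integration is standard, and you correctly locate the role of the hypothesis $p\ge e^{m-1}$. Your caveat about the constant is well placed: a naive union bound followed by tail integration after symmetrization typically yields a constant somewhat larger than $8$ (already for $m=2$ one picks up an additive $2V$ on top of $2V\log(2p)$ before multiplying by $2^m$), so matching the stated constant exactly by this route would require a sharper conditional maximal inequality than the crude one. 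The argument in the cited reference is in fact closer to your ``alternative'', using the uniform $2$-smoothness of $\|\cdot\|_q^2$ for $q$ of order $\log p$; that approach avoids the $2^m$ symmetrization loss and is where the constant $8$ comes from. For the purposes of this paper only the order $(\log p)^{m/2}$ matters, so either argument suffices.
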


\subsection{Proofs for Section \ref{sec:main.all} (General high-dimensional models) }
\label{sec:main.all.proof}
\begin{proof}[Proof of Theorem \ref{mai}]
Consider the decomposition 
\begin{eqnarray*}
\hat\beta_j-\beta_j^0 -\Theta_j^T \mathbb P_n \psi_{\hat\beta} 
&=&
- \Theta_j^T\mathbb  P_n \psi_{\beta_0}\\
&&
-\underbrace{\Theta_j^T(\mathbb P_n-P)(\psi_{\hat\beta} - \psi_{\beta_0} )}_{i} +
\underbrace{\hat\beta_j-\beta_j^0 -\Theta_j^T P(\psi_{\hat\beta} - \psi_{\beta_0}) }_{ii}.
\end{eqnarray*}
By assumption, we have $i=\Theta_j^T(\mathbb P_n-P)(\psi_{\hat\beta} - \psi_{\beta_0} )=o_P(1/\sqrt{n})$.
Next we treat the term $ii.$
By condition \ref{itm:eig0}, we have $\|\Theta_j\|_1 \leq \sqrt{s} \mathcal O(1).$
Condition \ref{taylor} and assumption $d^2(\hat\beta,\beta_0)=o_P\left(\frac{1}{\sqrt{ns}}\right)$ then
 yield
$$ii =\hat\beta_j-\beta_j^0 - \Theta_j^T P(\psi_{\hat\beta} - \psi_{\beta_0}) = \mathcal O(\|\Theta_j\|_1 d^2(\hat\beta,\beta_0)) 
= \mathcal O_P(\sqrt{s} d^2(\hat\beta,\beta_0)) = o_P(1/\sqrt{n}).$$
Hence 
we conclude
\begin{equation}\label{betahat}
\hat\beta_j - \beta^0_j -\Theta_j^T\mathbb P_n\psi_{\hat\beta}= 
- \Theta_j^T \mathbb P_n\psi_{\beta_0}  + o_P(1/\sqrt{n}).
\end{equation}
The sparsity condition \eqref{spa.prop} and condition \ref{itm:initial} imply
$$\|(\hat\Theta_j - \Theta_j)^T \mathbb P_n\psi_{\hat\beta}\|_\infty \leq \|\hat\Theta_j - \Theta_j\|_1 \| \mathbb P_n\psi_{\hat\beta} \|_\infty =\Op (s\lambda^2)=o_P(1/\sqrt{n}).$$
Hence \eqref{betahat} holds with $\Theta_j^T\mathbb P_n\psi_{\hat\beta}$ replaced by $\hat\Theta_j^T\mathbb P_n\psi_{\hat\beta}$.
That means
\begin{equation}
\hat\beta_j - \beta^0_j -\hat\Theta_j^T\mathbb P_n\psi_{\hat\beta}= 
- \Theta_j^T \mathbb P_n\psi_{\beta_0}  + o_P(1/\sqrt{n}).
\end{equation}
By Condition \ref{itm:norm}, the term $ \Theta_j^T \mathbb P_n\psi_{\beta_0}$
is asymptotically normal $\mathcal N(0,1)$ when normalized by the square-root of variance 
\begin{eqnarray*}
\text{Var}(\Theta_j^T \mathbb P_n\psi_{\beta_0})
=
 \Theta_j^T P\psi_{\beta_0}\psi_{\beta_0}^T \Theta_j/n.
\end{eqnarray*}
The result follows.
\end{proof}

\begin{proof}[Proof of Theorem \ref{main2}]
By Theorems \ref{sym} and \ref{dudley} (upper-bounding the entropy  integral by taking its lower bound equal to zero), it follows that
\begin{eqnarray*}
\mathbb E \sup_{f\in\mathcal F} |(\mathbb P_n-P)f| 
&\leq&
 2C_0 
 \mathbb E\left[  \int_0^{\hat R} \sqrt{\log N(u, \mathcal F, \|\cdot\|_n)}du /\sqrt{n} \right],
\end{eqnarray*}
where $\hat R:=\sup_{f\in\mathcal F} \|f\|_n.$
Let $F(x)=\sup_{f\in\mathcal F} |f(x)|$ be the envelope function of $\mathcal F$. Note that $\hat R \leq \|F\|_n$ by the definition of the envelope function $F$. Moreover, note that $N(u, \mathcal F, \|\cdot\|_n) \leq N(u/4, \mathcal F, \|\cdot\|_n)$.
Thus, and by  the assumed entropy condition, we obtain
\begin{eqnarray*}
\mathbb E \sup_{f\in\mathcal F} |(\mathbb P_n-P)f| 
&\leq &
 2C_0 
\mathbb E\left[  \int_0^{\hat R} \sqrt{\log N(u/4, \mathcal F, \|\cdot\|_n)}du /\sqrt{n} \right]
\\
&\leq & 2C_0
 \mathbb E\int_{0}^{\hat R}\sqrt{s\log p + s\log (4\|F\|_n/u)}du /\sqrt{n}  
\\
&\leq &
 2C_0 [\mathbb E {\hat R}\sqrt{s\log p/n} + \mathbb E\int_{0}^{\hat R}\sqrt{ s\log (4\|F\|_n/u)}du /\sqrt{n}  ],
\end{eqnarray*}
where we also used that for any $x,y>0$ it holds that $\sqrt{x+y} \leq \sqrt{x}+\sqrt{y}.$
One can show the following upper bound using integration per partes:
$$\int_0^\delta \log (1/u)du \leq \delta\log \frac{1}{\delta} \frac{1}{1-\frac{1}{\log \frac{1}{\delta}}}.$$
Then for $\delta \leq 1/4$ it holds that $\frac{1}{1-\frac{1}{\log \frac{1}{\delta}}} \leq \frac{1}{1-\frac{1}{\log 4}}=:C_1,$ whence
$\int_0^\delta \log (1/u)du \leq C_1\delta\log \frac{1}{\delta}.$
We then have 
\begin{eqnarray*}
\int_{0}^{\hat R}\sqrt{ \log (4\|F\|_n/u)}du &=&
4\|F\|_n \int_{0}^{\hat R/(4\|F\|_n)}\sqrt{ \log (1/u)}du 
\\
&\leq & 4C_1\|F\|_n \hat R/(4\|F\|_n) \log \left( \frac{4\|F\|_n}{\hat R} \right)
\\
&=& C_1\hat R \log \left( \frac{4\|F\|_n}{\hat R} \right).
\end{eqnarray*}
We now show that $\frac{\|F\|_n}{\hat R} \leq \sqrt{n}.$ 
This follows since
$$\|F\|_n^2 = \frac{1}{n}\sum_{i=1}^n F(X_i)^2 \leq \max_{i=1,\dots,n} \sup_{f\in \mathcal F} f(X_i)^2,$$
and
$$\hat R^2 = \sup_{f\in \mathcal F} \frac{1}{n}\sum_{i=1}^n f(X_i)^2 \geq \sup_{f\in \mathcal F} \max_{i=1,\dots,n} f(X_i)^2/n.$$
Hence we obtain
$$\int_{0}^{\hat R}\sqrt{ \log (4\|F\|_n/u)}du \leq
 C_1\hat R \log (4\sqrt{n}) .$$
Then it follows
\begin{eqnarray*}
\mathbb E \sup_{f\in\mathcal F} |(\mathbb P_n-P)f| 
&\leq &
 2C_0\mathbb E {\hat R}\left( \sqrt{s\log p/n} + C_1\log (4\sqrt{n}) \sqrt{s/n}  \right)
\\
&\leq& C_3  \mathbb E {\hat R}\log (4\sqrt{n}) \sqrt{s\log p/n}
\end{eqnarray*}
We next apply the Dudley's inequality  to the class $\mathcal F^2:= \{f^2:f\in\mathcal F\}$.
First observe that
\begin{equation}\label{eb}
\log \mathcal N(\epsilon, \mathcal F^2,\|\cdot\|_n) \leq \log \mathcal N\left(\epsilon/\|F\|_{n,\infty}, \mathcal F,\|\cdot\|_n\right)
\leq
s\log p + s\log \left(\frac{\|F\|_n \|F\|_{n,\infty}}{\epsilon}\right).
\end{equation}
Let $\hat R_2^2 := \sup_{f\in\mathcal F^2} \|f\|_n^2 = \sup_{f\in\mathcal F^2} \frac{1}{n}\sum_{i=1}^n f(X_i)^2 = 
\sup_{f\in\mathcal F} \frac{1}{n}\sum_{i=1}^n f(X_i)^4.
$
Then since 
$$\mathbb E \sup_{f\in \mathcal F} |(\mathbb P_n-P)f^2| = \mathbb E \sup_{f\in \mathcal F^2} |(\mathbb P_n-P)f|,
$$
we obtain by Dudley's inequality and the entropy bound \eqref{eb} for $\mathcal F^2$ that
\begin{eqnarray*}
\mathbb E \sup_{f\in\mathcal F^2} |(\mathbb P_n-P)f| 
&\leq &
 \mathbb E\int_{0}^{\hat R_2}\sqrt{s\log p + s\log (4\|F\|_n \|F\|_{n,\infty}/u)}du /\sqrt{n}  
\\
&\lesssim &
 \mathbb E {\hat R_2}\sqrt{s\log p/n} + 
\underbrace{\mathbb E\int_{0}^{\hat R_2}\sqrt{ s\log (4\|F\|_n\|F\|_{n,\infty}/u)}du /\sqrt{n} }_{i} 
\end{eqnarray*}
We then have 
\begin{eqnarray*}
i &=&\int_{0}^{\hat R_2}\sqrt{ \log (\|F\|_n\|F\|_{n,\infty}/u)}du \\
&=&
4\|F\|_n \|F\|_{n,\infty} \int_{0}^{\hat R_2/(4\|F\|_n\|F\|_{n,\infty})}\sqrt{ \log (1/u)}du \\
&\leq &
 \|F\|_n\|F\|_{n,\infty} \hat R_2/\|F\|_n/\|F\|_{n,\infty} \log \left( \frac{4\|F\|_n\|F\|_{n,\infty}}{\hat R_2} \right)
\\
&=& \hat R_2 \log \left( \frac{4\|F\|_n\|F\|_{n,\infty}}{\hat R_2} \right).
\end{eqnarray*}
We now show that $\frac{\|F\|_n\|F\|_{n,\infty}}{\hat R_2} \leq \sqrt{n}.$ 
To this end observe that
$\|F\|_{n} \leq \|F\|_{n,\infty},$ and
$$\|F\|_{n,\infty}^2 
\leq \max_{i=1,\dots,n} \sup_{f\in \mathcal F} f(X_i)^2,$$
and moreover
$$\hat R^2_2 = \sup_{f\in \mathcal F} \frac{1}{n}\sum_{i=1}^n f(X_i)^4 \geq \sup_{f\in \mathcal F} \max_{i=1,\dots,n} f(X_i)^4/n.$$
Then it follows that
$$\frac{\|F\|_n\|F\|_{n,\infty}}{\hat R_2} \leq \frac{\|F\|_{n,\infty}^2}{\hat R_2} \leq \sqrt{n}.$$
Hence we obtain
$$i = \int_{0}^{\hat R_2}\sqrt{ \log (4\|F\|_n\|F\|_{n,\infty}/u)}du \leq
\hat R_2 \log (4\sqrt{n}).$$
Thus
\begin{eqnarray*}
\mathbb E \sup_{f\in\mathcal F} |(\mathbb P_n-P)f^2| 
&\lesssim &
 \mathbb E {\hat R_2} \log (4\sqrt{n}) \sqrt{s\log p/n} 
\end{eqnarray*}
By the triangle inequality, we obtain
$$\hat R^2 \leq \sup_{f\in\mathcal F} |(\mathbb P_n-P)f^2| + \sup_{f\in\mathcal F} Pf^2. $$
Then
$$\mathbb E\hat R \leq \mathbb E \sqrt{\sup_{f\in\mathcal F} |(\mathbb P_n-P)f^2| + \sup_{f\in\mathcal F} Pf^2} \leq
\mathbb E\sqrt{\sup_{f\in\mathcal F} |(\mathbb P_n-P)f^2|} +  \sqrt{\sup_{f\in\mathcal F} Pf^2}. $$
Using Jensen's inequality, we have 
$$\mathbb E (\sup_{f\in\mathcal F} |(\mathbb P_n-P)f^2|)^{1/2} \leq (
\mathbb E \sup_{f\in\mathcal F}
|(\mathbb P_n-P)f^2|)^{1/2}.$$
Hence
\begin{eqnarray*}
\mathbb E\sup_{f\in\mathcal F} |(\mathbb P_n-P)f| 
&\lesssim &
 \mathbb E {\hat R} \log n \sqrt{s\log p/n} 
\\
&\leq &
\left[(\mathbb E\sup_{f\in\mathcal F} |(\mathbb P_n-P)f^2|)^{1/2} + \sup_{f\in\mathcal F} {(\mathbb Ef^2)}^{1/2}\right]\log n \sqrt{s\log p/n} 
\\
&\leq& 
\left[ \mathbb E {\hat R_2}^{1/2} (\log n \sqrt{s\log p/n})^{1/2} + \sup_{f\in\mathcal F} \sqrt{\mathbb Ef^2}
\right]\log n \sqrt{s\log p/n} 
\\
&\leq &
 \mathbb E {\hat R_2}^{1/2} (\log n)^{3/2} [s\log p/n]^{1/4} \sqrt{s\log p} /\sqrt{n}\\[10px]
&&   + \sup_{f\in\mathcal F} \sqrt{\mathbb Ef^2} \log n \sqrt{s\log p/n} 
.
\end{eqnarray*}
Next observe that by Jensen's inequality
\begin{eqnarray*}
\mathbb E\sup_{f\in\mathcal F}\|f\|_{n,4} 
&=&
\mathbb E \left(\sup_{f\in\mathcal F}\s  f(X_i)^4\right)^{1/4}
\\
&\leq &
\left(\mathbb E \sup_{f\in\mathcal F}\s  f(X_i)^4\right)^{1/4}
\\
&=&\left(\s \mathbb E\sup_{f\in\mathcal F} f(X_i)^4\right)^{1/4}.
\\
&\leq &\left(\max_{i=1,\dots,n} \mathbb E\sup_{f\in\mathcal F} f(X_i)^4\right)^{1/4}.
\end{eqnarray*}
By assumption, we have $\max_{i=1,\dots,n}(\mathbb E\sup_{f\in\mathcal F}f(X_i)^4)^{1/4} (\log n\sqrt{s\log p/n})^{3/2} =o_P(1/\sqrt{n})$ 
 and 
$$\sqrt{\mathbb Ef^2}\log n \sqrt{ s\log p/n} =o(1/\sqrt{n}).$$ Hence
\begin{eqnarray*}
\mathbb E\sup_{f\in\mathcal F} |(\mathbb P_n-P)f| 
&=& o(1/\sqrt{n})
.
\end{eqnarray*}
Then
\begin{eqnarray*}
\mathbb E\sup_{f\in\mathcal F} |\mathbb G_nf| =o(1).
\end{eqnarray*}
By Markov's inequality it follows that $\sup_{f\in\mathcal F} |\mathbb G_nf| = o_P(1).$

\end{proof}

\subsection{Proofs for Section \ref{sec:glm.main} (High-dimensional generalized linear models) }
We first need the following preliminary lemmas before proving the statement of Theorem \ref{glm.main}.
\label{sec:glm.proof}
\subsubsection{Preliminary lemmas}

\begin{lemma}\label{glm.Pf2}
Consider the class 
$$\mathcal F:=\{\Theta_j^Tx(w_{\beta}-w_{\beta_0}): \|\beta-\beta_0\|_1 \leq s\lambda, 
\mathbb E(x^T(\beta-\beta_0))^2 \leq s\lambda^2
\}.$$
Assume  conditions \ref{itm:bounded}, \ref{itm:w}.
Then
$$\sup_{f\in\mathcal F}\|f\| =\mathcal O\left( s^{3/4}\lambda\right).$$
\end{lemma}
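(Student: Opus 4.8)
The plan is to exploit the Lipschitz property of the weight function from condition \ref{itm:w} to reduce the bound on $\|f\|$ to a fourth-moment computation, and then apply the Cauchy--Schwarz inequality, splitting the resulting expectation so that one factor is controlled by the moment condition $\mathbb E|\Theta_j^T X|^4 = \mathcal O(1)$ of \ref{itm:bounded} and the other by the combination of the $\ell_\infty$-bound on the design and the variance-type constraint $\mathbb E(x^T(\beta-\beta_0))^2 \le s\lambda^2$ built into the definition of $\mathcal F$.

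First I would fix $\beta$ with $\|\beta-\beta_0\|_1 \le s\lambda$ and $\mathbb E(x^T(\beta-\beta_0))^2 \le s\lambda^2$. Since $s\lambda \to 0$ under the standing sparsity assumptions, $\|\beta-\beta_0\|_1 \le \delta$ eventually, so condition \ref{itm:w} applies and gives the pointwise bound
$$|w_\beta(y,x) - w_{\beta_0}(y,x)| = |w(y,x^T\beta) - w(y,x^T\beta_0)| \le L\,|x^T(\beta-\beta_0)|,$$
with $L = \mathcal O(1)$. Hence for $f = \Theta_j^T x (w_\beta - w_{\beta_0}) \in \mathcal F$,
$$\|f\|^2 = \mathbb E\big[(\Theta_j^T x)^2 (w_\beta - w_{\beta_0})^2\big] \le L^2\, \mathbb E\big[(\Theta_j^T x)^2 (x^T(\beta-\beta_0))^2\big].$$

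Next I would apply Cauchy--Schwarz to separate the two factors:
$$\mathbb E\big[(\Theta_j^T x)^2 (x^T(\beta-\beta_0))^2\big] \le \big(\mathbb E (\Theta_j^T x)^4\big)^{1/2}\big(\mathbb E (x^T(\beta-\beta_0))^4\big)^{1/2}.$$
The first factor is $\mathcal O(1)$ by \ref{itm:bounded}. For the second, I would bound one power of $(x^T(\beta-\beta_0))^2$ by its deterministic supremum: since $\|X_i\|_\infty \le K_X$, Hölder's inequality gives $|x^T(\beta-\beta_0)| \le K_X s\lambda$, so
$$\mathbb E (x^T(\beta-\beta_0))^4 \le K_X^2 s^2\lambda^2\,\mathbb E (x^T(\beta-\beta_0))^2 \le K_X^2 s^3\lambda^4,$$
using the variance constraint in the last step. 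Combining the three bounds yields $\|f\|^2 = \mathcal O(s^{3/2}\lambda^2)$ uniformly over $f \in \mathcal F$, and taking square roots gives $\sup_{f\in\mathcal F}\|f\| = \mathcal O(s^{3/4}\lambda)$.

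The argument is essentially routine; the one point that requires care — and the only place the precise rate is decided — is the fourth-moment estimate. Bounding $\mathbb E(x^T(\beta-\beta_0))^4$ purely by the $\ell_\infty/\ell_1$ estimate would give the crude $\mathcal O(s^4\lambda^4)$ and hence only $\|f\| = \mathcal O(s\lambda)$; trading one of the four factors for the sharper second-moment bound $\mathbb E(x^T(\beta-\beta_0))^2 \le s\lambda^2$ in the definition of $\mathcal F$ is exactly what produces the improved $s^{3/4}$ exponent.
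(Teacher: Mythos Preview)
Your proof is correct and follows essentially the same approach as the paper: apply the Lipschitz property from \ref{itm:w}, use Cauchy--Schwarz to separate out $(\mathbb E(\Theta_j^Tx)^4)^{1/2}$, and bound the remaining fourth moment by trading two of the four factors for the deterministic $\ell_\infty/\ell_1$ bound $K_X s\lambda$ while keeping the other two controlled by the variance constraint $\mathbb E(x^T(\beta-\beta_0))^2 \le s\lambda^2$. The only cosmetic difference is that the paper applies Cauchy--Schwarz before invoking Lipschitz (working with $\mathbb E(w_\beta-w_{\beta_0})^4$ rather than $\mathbb E(x^T(\beta-\beta_0))^4$), but the arithmetic and the resulting $s^{3/2}\lambda^2$ bound on $Pf^2$ are identical.
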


\begin{proof}[Proof of Lemma \ref{glm.Pf2}]
By the Lipschitz property in condition \ref{itm:w}, we have $|w_\beta-w_{\beta_0}| \leq L |x^T(\beta-\beta_0)|.$ By condition \ref{itm:bounded} and by H\"older's inequality, we obtain $|x(\beta-\beta_0)|\leq K_X \|\beta-\beta_0\|_1$. By condition \ref{itm:bounded}, we further have $\mathbb E|\Theta_j^Tx|^4 = O(1)$. Hence 
\begin{eqnarray*}
Pf^2 & = & \mathbb E\left(\Theta_j^T x (w_\beta-w_{\beta_0})\right)^2 
\\
&\leq & 
(\mathbb E|\Theta_j^Tx|^4 )^{1/2} (\mathbb E(w_\beta-w_{\beta_0})^{4})^{1/2} 
\\
&\leq  &
(\mathbb E|\Theta_j^Tx|^4 )^{1/2} (\mathbb E(w_\beta-w_{\beta_0})^{4})^{1/2} 
\\
&\leq  &
(\mathbb E|\Theta_j^Tx|^4 )^{1/2} K_X\|\beta-\beta_0\|_1(\mathbb E(w_\beta-w_{\beta_0})^{2})^{1/2} 
\\
& \leq & \mathcal O(1) s\lambda (\mathbb E (x^T (\beta-\beta_0))^2)^{1/2}
=\mathcal O\left( s^{3/2}\lambda^2 \right).
\end{eqnarray*}

\end{proof}

\begin{lemma}\label{f4}
Consider the class 
$$\mathcal F:=\{\Theta_j^Tx(w_{\beta}-w_{\beta_0}): \|\beta-\beta_0\|_1 \leq s\lambda, 
\mathbb E(x^T(\beta-\beta_0))^2 \leq s\lambda^2
\},$$
where $\lambda\asymp \sqrt{\log p/n}.$
Suppose that condition  \ref{itm:w} is satisfied, assume that $P|\Theta_j^Tx|^4 = O(1)$ and $s^{3}(\log p)^2(\log n)^2 / n = o(1).$
Then 
the condition 
$$P\sup_{f\in\mathcal F} f^4 s^3 (\log p)^3 (\log n)^6=o(1)$$
 is satisfied.
\end{lemma}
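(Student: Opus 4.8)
The plan is to bound the envelope function of $\mathcal F$ pointwise and then integrate its fourth power. Write a typical element of $\mathcal F$ as $f_\beta(x,y) = \Theta_j^T x\,\big(w(y,x^T\beta) - w(y,x^T\beta_0)\big)$. First I would invoke condition \ref{itm:w}: the map $u\mapsto w(y,u)$ is Lipschitz with a constant $L = \mathcal O(1)$ uniform in $y$ and over the relevant $\beta$'s, so $\big|w(y,x^T\beta) - w(y,x^T\beta_0)\big| \leq L\,|x^T(\beta-\beta_0)|$ for every $(x,y)$. Combining this with $\|x\|_\infty \leq K_X$ from condition \ref{itm:bounded}, H\"older's inequality, and the defining constraint $\|\beta-\beta_0\|_1 \leq s\lambda$ of $\mathcal F$ gives $|x^T(\beta-\beta_0)| \leq K_X\|\beta-\beta_0\|_1 \leq K_X s\lambda$. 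Taking the supremum over $\beta$ at a fixed $(x,y)$ therefore yields the envelope bound $F(x,y) := \sup_{f\in\mathcal F}|f(x,y)| \leq L K_X s\lambda\,|\Theta_j^T x|$, so that $\sup_{f\in\mathcal F} f(x,y)^4 \leq (L K_X s\lambda)^4\,|\Theta_j^T x|^4$ pointwise in $(x,y)$.

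Next I would take expectations and use the moment assumption $P|\Theta_j^T x|^4 = \mathcal O(1)$: this gives $P\sup_{f\in\mathcal F}f^4 \leq (L K_X s\lambda)^4\,P|\Theta_j^T x|^4 = \mathcal O\big((s\lambda)^4\big)$. Substituting $\lambda \asymp \sqrt{\log p/n}$ turns this into $\mathcal O\big(s^4(\log p)^2/n^2\big)$, so the product appearing in the statement has order $s^7(\log p)^5(\log n)^6/n^2$; in the form actually used in condition \eqref{condi} of Theorem \ref{main2} there is one further factor $n^{-1}$, leaving $s^7(\log p)^5(\log n)^6/n^3$. Finally I would close the estimate by writing this as $\big(s^3(\log p)^2(\log n)^2/n\big)^3$ divided by $s^2\log p \geq 1$; the assumed sparsity condition $s^3(\log p)^2(\log n)^2/n \to 0$ then makes it $o(1)$.

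I do not expect a real obstacle: the lemma is essentially bookkeeping for the moment envelope required by Theorem \ref{main2}. The only point that needs a little care is the order of operations in the envelope step — one must take the supremum over $\beta$ before any integration, using that the Lipschitz constant is uniform in $y$, so that the $|\Theta_j^T x|^4$-moment factors out cleanly. It is also worth noting, by contrast with Lemma \ref{glm.Pf2}, that the second defining constraint $\mathbb E\,(x^T(\beta-\beta_0))^2 \leq s\lambda^2$ cannot be used to sharpen this envelope, since for a fixed $x$ the scalar $x^T(\beta-\beta_0)$ may be as large as $K_X s\lambda$ irrespective of its $L_2(P)$-norm; only the $\ell_1$-ball and the boundedness of the design enter here.
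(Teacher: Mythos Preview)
Your argument is correct and follows exactly the paper's route: bound $|w_\beta-w_{\beta_0}|\le L\,|x^T(\beta-\beta_0)|\le LK_X\,s\lambda$ pointwise via the Lipschitz condition and bounded design, pull out $|\Theta_j^Tx|^4$, and conclude $P\sup_{f\in\mathcal F}f^4=\mathcal O\big((s\lambda)^4\big)$. You are in fact more careful than the paper in two respects: you write out the final sparsity bookkeeping explicitly (the paper simply asserts the conclusion), and you correctly notice that the displayed condition in the lemma is missing a factor $1/n$ compared to its use in the proof of Theorem~\ref{glm.main} and to condition~\eqref{condi}; with that factor restored your cube-and-divide argument closes the estimate, whereas the condition exactly as printed in the lemma would not follow from the stated sparsity assumption.
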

\begin{proof}[Proof of Lemma \ref{f4}]
By Lipschitz property of $w_\beta$ in condition \ref{itm:w}, we have $|w_\beta-w_{\beta_0}| \leq L |x(\beta-\beta_0)|,$ and hence
\begin{eqnarray*}
P\sup _{f\in\mathcal F}f^4 & \leq  &
 P|\Theta_j^T x|^4 \sup _{f\in\mathcal F} |w_\beta-w_{\beta_0}|^4
\\
&\leq &
 P|\Theta_j^T x|^4 L \|\beta-\beta_0\|_1^4
\\
&\leq &
\mathcal O(1) (s\lambda)^4.
\end{eqnarray*}
Then under $s^{3}(\log p)^2(\log n)^2 / n = o(1),$ the claim follows.
\end{proof}

\begin{lemma}\label{glm.cond1}
Assume 
conditions \ref{itm:bounded} and \ref{itm:ew}. 
Then the Condition \ref{taylor} is satisfied with  $d^2(\beta,\beta_0)= \mathbb E|x^T(\beta-\beta_0)|^2.$
\end{lemma}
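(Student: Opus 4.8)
The plan is to reduce Condition \ref{taylor} to a one–term Taylor expansion of the conditional mean of the weight function, with an explicit remainder controlled by the Lipschitz property of $G'$ assumed in \ref{itm:ew}. First I would rewrite $P\psi_\beta$ using the form $\psi_\beta(y,x)=w(y,x^T\beta)x$ from \ref{itm:ew} together with the tower property: conditioning on $X$ gives
\[
P\psi_\beta=\mathbb{E}_X\big[G(X^T\beta)\,X\big],\qquad G(u):=\int w(u,y)\,dP_{Y|X},
\]
the $G$ of \ref{itm:ew}. Differentiating under the expectation at $\beta=\beta_0$ — legitimate because $G'$ is bounded on the (bounded) range of arguments $X^T\beta_0+t X^T(\beta-\beta_0)$ encountered for $\|\beta-\beta_0\|_1\le\delta$, using $\|X\|_\infty\le K_X$ from \ref{itm:bounded} — identifies $\Sigma=(P\psi_\beta)'|_{\beta=\beta_0}=\mathbb{E}_X\big[G'(X^T\beta_0)\,XX^T\big]$.

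Next I would examine the $k$-th coordinate of the vector appearing in \ref{taylor}, which equals
\[
\mathbb{E}\Big[\big(G(X^T\beta)-G(X^T\beta_0)-G'(X^T\beta_0)\,X^T(\beta-\beta_0)\big)X_k\Big].
\]
Applying Taylor's formula with integral remainder to the scalar map $t\mapsto G\big(X^T\beta_0+t\,X^T(\beta-\beta_0)\big)$ and using that $G'$ is Lipschitz with a constant $L'=\mathcal{O}(1)$ (uniformly in $n$, as in \ref{itm:ew}) gives the pointwise bound
\[
\big|G(X^T\beta)-G(X^T\beta_0)-G'(X^T\beta_0)\,X^T(\beta-\beta_0)\big|\le \tfrac{L'}{2}\,\big(X^T(\beta-\beta_0)\big)^2 .
\]
Combining the last two displays, bounding $|X_k|\le K_X$ by \ref{itm:bounded}, and taking the maximum over $k=1,\dots,p$ yields
\[
\big\|(P\psi_\beta-P\psi_{\beta_0})-\Sigma(\beta-\beta_0)\big\|_\infty\le \tfrac{L'K_X}{2}\,\mathbb{E}\big|X^T(\beta-\beta_0)\big|^2=\mathcal{O}\big(d^2(\beta,\beta_0)\big),
\]
which is exactly Condition \ref{taylor} with $d^2(\beta,\beta_0)=\mathbb{E}|x^T(\beta-\beta_0)|^2$, valid for all $\beta$ in the $\ell_1$-ball of radius $\delta$ around $\beta_0$ on which the hypotheses of \ref{itm:ew} hold. (Note that this $d$ is indeed a semimetric, since $d^2(\beta,\beta_0)=(\beta-\beta_0)^T\mathbb{E}[XX^T](\beta-\beta_0)$.)

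The argument is essentially routine, and I do not foresee a genuine obstacle; the only two points needing a line of care are (i) the interchange of differentiation and expectation in identifying $\Sigma$, handled by the boundedness of $X$ and of $G'$ on the relevant compact range, and (ii) ensuring the Lipschitz constant of $G'$ is uniform in $x$ — if $G$ actually depends on $x$ through $P_{Y|X}$, one needs the Lipschitz constant of $u\mapsto G_x'(u)$ to be bounded uniformly in $x$, which is how \ref{itm:ew} should be read. Everything else is Hölder's and the triangle inequality.
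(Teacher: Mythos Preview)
Your proof is correct and follows essentially the same route as the paper: write $P\psi_\beta=\mathbb{E}_X[G(X^T\beta)X]$, expand $G$ to first order around $X^T\beta_0$, and control the remainder via the Lipschitz property of $G'$ together with $\|X\|_\infty\le K_X$. The paper uses the mean value theorem in place of your Taylor formula with integral remainder, yielding a constant $L$ rather than $L'/2$, but otherwise the arguments are identical.
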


\begin{proof}[Proof of Lemma \ref{glm.cond1}]
First note that 
$P\psi_\beta  = \int x \int w(y,x\beta)dP_{\epsilon|X} dP_X$ and $(P\psi_\beta)'_{\beta} =$ \linebreak $ \int x G'(u)|_{u=x\beta} dP_X.$
\begin{eqnarray*}
P(\psi_{\beta} (y,x) - \psi_{\beta_0}(y,x)) &=&
P (w(y,x^T\beta) - w(y,x^T\beta_0))x \\
&=&
\int (w(y,x^T\beta) - w(y,x^T\beta_0))x dP\\
&=& 
\int\int  (w (y,x^T\beta) - w(y,x^T\beta_0))x dP_{\epsilon|X} dP_X\\
&=& 
\int x\int  (w (y,x^T\beta) - w(y,x^T\beta_0)) dP_{\epsilon|X} dP_X\\
&=&
\int x G'(x^T\beta_0)x^T(\beta-\beta_0)dP_X \\
&&+\underbrace{
\int x (G'(\tilde z) -G'(x^T\beta_0))x^T(\beta-\beta_0) dP_X
}_{rem},
\end{eqnarray*}
where $\tilde z \in [x^T\beta_0,x^T\beta].$
Then for the remainder we have
\begin{eqnarray*}
\|\text{rem}\|_\infty &=&\| \int x (G'(\tilde z) -G'(x^T\beta_0) )x^T(\beta-\beta_0) dP_X \|_\infty
\\
& \leq & 
\int \|x\|_\infty |G'(\tilde z) -G'(x^T\beta_0)| |x^T(\beta-\beta_0)|  dP_X 
\\
& \leq & 
\int K_X L|x^T(\beta-\beta_0)|^2  dP_X 
\end{eqnarray*}
\end{proof}

\begin{lemma}\label{lipschitz}
Denote $h(\beta) := P\psi_\beta.$ Assume that $\beta\mapsto \frac{\partial ^2 (P\psi_\beta)_{j}}{\partial \beta_k\partial \beta_i}$ is bounded, i.e. $\frac{\partial ^2 h_j}{\partial \beta_k\partial \beta_i}$ is bounded for all $j=1,\dots,p$.
Then Condition \ref{itm:taylor} is satisfied with $d(\beta,\beta_0) = \|\beta-\beta_0\|_1.$
\end{lemma}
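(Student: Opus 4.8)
The plan is to recognize Condition \ref{itm:taylor} (with this choice of metric) as nothing more than the second-order Taylor expansion of the vector-valued map $h(\beta)=P\psi_\beta$ around $\beta_0$, carried out componentwise, with the remainder controlled entirely by the assumed uniform bound on the second partials. First I would fix an index $j\in\{1,\dots,p\}$ and apply Taylor's theorem to the scalar function $\beta\mapsto h_j(\beta)$ on the line segment joining $\beta_0$ and $\beta$: there exists $\tilde\beta=\tilde\beta(j,\beta)$ on that segment with
$$
h_j(\beta)-h_j(\beta_0)=\sum_{k=1}^p\frac{\partial h_j}{\partial\beta_k}(\beta_0)\,(\beta-\beta_0)_k+\frac12\sum_{k=1}^p\sum_{i=1}^p\frac{\partial^2 h_j}{\partial\beta_k\partial\beta_i}(\tilde\beta)\,(\beta-\beta_0)_k(\beta-\beta_0)_i .
$$
The existence and boundedness of the second partials guarantee $h_j$ is twice differentiable with this remainder form, so the expansion is legitimate on the relevant neighbourhood.

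Next I would identify the linear term: since $\Sigma=(P\psi_\beta)'|_{\beta=\beta_0}$, we have $\Sigma_{jk}=\tfrac{\partial h_j}{\partial\beta_k}(\beta_0)$, hence $\sum_k \tfrac{\partial h_j}{\partial\beta_k}(\beta_0)(\beta-\beta_0)_k=[\Sigma(\beta-\beta_0)]_j$. Therefore the $j$-th coordinate of $(P\psi_\beta-P\psi_{\beta_0})-\Sigma(\beta-\beta_0)$ equals the quadratic remainder $\tfrac12\sum_{k,i}\tfrac{\partial^2 h_j}{\partial\beta_k\partial\beta_i}(\tilde\beta)(\beta-\beta_0)_k(\beta-\beta_0)_i$. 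Let $M$ be a uniform bound with $\bigl|\tfrac{\partial^2 h_j}{\partial\beta_k\partial\beta_i}\bigr|\le M$ for all $i,j,k$ (on the neighbourhood in question). Then
$$
\Bigl|[(P\psi_\beta-P\psi_{\beta_0})-\Sigma(\beta-\beta_0)]_j\Bigr|\le\frac{M}{2}\sum_{k=1}^p\sum_{i=1}^p|(\beta-\beta_0)_k|\,|(\beta-\beta_0)_i|=\frac{M}{2}\,\|\beta-\beta_0\|_1^2 ,
$$
using that $\sum_{k,i}|a_k||a_i|=\bigl(\sum_k|a_k|\bigr)^2=\|a\|_1^2$. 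Taking the maximum over $j$ yields $\|(P\psi_\beta-P\psi_{\beta_0})-\Sigma(\beta-\beta_0)\|_\infty\le\frac{M}{2}\|\beta-\beta_0\|_1^2=\mathcal O(\|\beta-\beta_0\|_1^2)$, which is exactly Condition \ref{itm:taylor} with $d(\beta,\beta_0)=\|\beta-\beta_0\|_1$.

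There is essentially no hard step here; the only point requiring minor care is that Taylor's theorem with the Lagrange (or integral) form of the remainder is applied to each coordinate $h_j$ separately, so that the intermediate point $\tilde\beta$ is allowed to depend on $j$ — this is harmless since the final bound $M/2\cdot\|\beta-\beta_0\|_1^2$ is uniform in $j$. One should also note that the bound on the second partials is only needed on a convex neighbourhood of $\beta_0$ containing $\beta$, which is the regime in which Condition \ref{itm:taylor} is used.
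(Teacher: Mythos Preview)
Your proof is correct and follows essentially the same route as the paper: a Taylor expansion of each coordinate $h_j$ around $\beta_0$, with the second-order remainder controlled by the uniform bound on the second partials together with $\sum_{k,i}|a_k||a_i|=\|a\|_1^2$. The paper phrases it as two applications of the mean-value theorem (first for $h_j$, then for each $\partial h_j/\partial\beta_k$) rather than a single second-order Taylor expansion with Lagrange remainder, but the argument and the resulting bound are the same.
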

\begin{proof}[Proof of Lemma \ref{lipschitz}]
By the mean-value theorem,
\begin{eqnarray*}
(P\psi_\beta - P\psi_{\beta_0})_j = ((P\psi_\beta)_j)'|_{\beta=\beta_0} (\beta-\beta_0) + 
(((P\psi_\beta)_j)'|_{\beta=\tilde \beta} - ((P\psi_\beta)_j)'|_{\beta=\beta_0})(\beta-\beta_0) 
\end{eqnarray*}
Then for the remainder, we obtain
\begin{eqnarray*}
&&|((P\psi_\beta)_j)'|_{\beta=\tilde \beta} - ((P\psi_\beta)_j)'|_{\beta=\beta_0})(\beta-\beta_0) |\\
&\leq &
\| (P\psi_\beta)_j)'|_{\beta=\tilde \beta} - ((P\psi_\beta)_j)'|_{\beta=\beta_0} \|_\infty
\|\beta-\beta_0\|_1.
\end{eqnarray*}
Finally,
\begin{eqnarray*}
|((P\psi_\beta)_j)'|_{\beta=\tilde \beta} - ((P\psi_\beta)_j)'|_{\beta=\beta_0})_k|
& = & 
|((P\psi_\beta)_j)_k''|_{\beta=\bar \beta} (\beta-\beta_0) |
\\
&\ \leq & 
\|((P\psi_\beta)_j)_k''|_{\beta=\bar \beta} \|_\infty \|\beta-\beta_0\|_1
\\
&\ \leq & 
L \|\beta-\beta_0\|_1
\end{eqnarray*}

\end{proof}

\begin{proof}[Proof of Lemma \ref{glm.entropy.sparse}]
By the  Lipschitz property \ref{itm:w} it follows that
$$|w_\beta-w_{\beta_0}| \leq L |x^T(\beta-\beta_0)|, $$
hence
 $$N(\mathcal F, \|\cdot\|_{n},\epsilon) \leq 
N(\mathcal H, \|\cdot\|_{n},\epsilon/L)
,$$
where 
$\mathcal H=\{x\mapsto \Theta_j^T x x^T(\beta-\beta_0):\|\beta\|_0 \leq Cs\}$.
Let $V_i\subset \{1,\dots,p\}$ for $i=1,\dots,{p\choose s}$ be all subsets of $\{1,\dots,p\}$ of size $s$. 
We can rewrite
 $$\mathcal H = \bigcup_{i=1}^{{p\choose s}} \mathcal H_i ,$$
where $\mathcal H_i :=
\{(x, y)\mapsto \Theta_j^Tx x^T(\beta-\beta_0)):
 \beta\in\mathbb R^p, \beta_{V_i} = \mathbf 0 
 \}.$
The collection $\mathcal H_i$ has the same VC-index as an $s-$dimensional real vector space, which is $s+2$ by Lemma 2.6.15 in \cite{weak}.
\\
Then by Theorem 2.6.7 in \cite{weak} and since the covering number of a union of sets is upper bounded by sum of the covering numbers, we obtain
\begin{eqnarray*}
N(\epsilon\|H\|_n, \mathcal H, \|\cdot\|_n)
&\leq &
{p\choose s } N(\epsilon\|H\|_n, \mathcal H_i, \|\cdot\|_n)\\
& \leq & 
{p\choose s }  K V(\mathcal H_i) (16e)^{V(\mathcal H_i)} \left(\frac{1}{\epsilon}\right)^{2(V(\mathcal H_i)-1)}\\
&=&
{p\choose s }  Ks (16e)^{s} \left(\frac{1}{\epsilon}\right)^{2(s-1)},
\end{eqnarray*}
where $K$ is a universal constant and $0<\epsilon<1$. Then
\begin{eqnarray*}
{p\choose s }  Ks (16e)^{s} \left(\frac{1}{\epsilon}\right)^{2(s-1)}
&\leq &
\frac{p^s}{s!} Ks (16e)^{s} \left(\frac{1}{\epsilon}\right)^{2s}\\
&\leq &
\frac{p^s}{(s-1)!} K \left(\frac{16 e}{\epsilon}\right)^{2s}\\
&\leq&
K p^s  \left(\frac{16 e}{\epsilon}\right)^{2s}.
\end{eqnarray*}
Hence
$$N(\mathcal H , \|\cdot\|_{n},\epsilon\|H\|_n) 
 \lesssim \left(\frac{p}{\epsilon}\right)^s.
$$
Since $\|F\|_n \leq \|H\|_n$ and by the Lipschitz property of $w_\beta$ we have
\begin{eqnarray*} 
N(\mathcal F , \|\cdot\|_{n},\epsilon\|F\|_n) 
&\leq  &
N(\mathcal F , \|\cdot\|_{n},\epsilon\|H\|_n) \\
&\leq &
N(\mathcal H , \|\cdot\|_{n},\epsilon\|H\|_n/L) \\
&\lesssim &
\left(\frac{p}{\epsilon}\right)^s
.
\end{eqnarray*}

\end{proof}

\begin{proof}[Proof of Theorem \ref{glm.main}]
We apply Theorems \ref{mai} and \ref{main2}. 
By condition \ref{itm:w}, the function $u\mapsto \rho(u,y)$ is differentiable, and hence
the Karush-Kuhn-Tucker (KKT) conditions for the optimization problem defining $\hat\beta$ read
$$\frac{1}{n}\sum_{i=1}^n \dot \rho(y,x\hat\beta)x+\lambda \hat Z=0,$$
where $\hat Z$ is the sub-differential of the $\ell_1$ norm evaluated at $\hat\beta.$ Then taking $\psi_{\beta}(x,y):=\dot \rho(y,x\beta)x,$
it follows by the KKT conditions  that
$\|\frac{1}{n}\sum_{i=1}^n \psi_{\hat\beta}(x_i,y_i)\|_\infty = \|\lambda \hat Z\|_\infty = \mathcal O_P(\lambda).$
Hence the estimating equations are approximately satisfied.\\
Now we check conditions \ref{itm:eig0} - \ref{itm:norm}.\\
Condition \ref{eig0} follows by condition \ref{itm:bounded}.
\\
Condition \ref{itm:taylor}:
Under \ref{itm:w}, the condition of Lemma \ref{glm.cond1} is satisfied and thus the lemma yields that condition \ref{itm:taylor} of Theorem \ref{mai} is satisfied, with $d^2(\beta,\beta_0)  = \mathbb E |x^T(\beta-\beta_0)|^2.$ Then $d(\hat\beta,\beta_0) 
= 
(\hat\beta-\beta_0)^T \Sigma (\hat\beta-\beta_0) =\mathcal O(s\lambda^2) = o(1/\sqrt{ns}),
$
under the condition $s^3(\log p)^2 /n =o(1).$
\\
Condition \ref{itm:initial}: is satisfied under condition \ref{itm:initial.glm}.
\\
Condition \ref{itm:empp}:
We now show that the entropy condition of Theorem \ref{main2} is satisfied. Consider the class  of functions
$$\mathcal F=\{\Theta_j^T x(w_{\beta}-w_{\beta_0}): \|\beta-\beta_0\|_{1} \leq s\lambda, \|\beta\|_0 \leq s, \mathbb E|x^T(\beta-\beta_0)|^2 \leq s\lambda^2\},$$
where $\lambda\asymp \sqrt{\log p/n}$.
Under the condition \ref{itm:initial.glm}, it follows that $f_{\hat\beta}\in\mathcal F$ with high probability.\\
We proceed to check the entropy condition of  Theorem \ref{main2}.
Under condition \ref{itm:w}, Lemma \ref{glm.entropy.sparse} implies that the entropy bound is satisfied for the class $\mathcal F$.
Finally, we check condition \eqref{condi} of Theorem \ref{main2}. By Lemma \ref{glm.Pf2} it follows that
$R=\mathcal O(s^{3/2}\sqrt{\log p /n})$. Hence under $s^{5/2}(\log p)^2(\log n)^2 /n =o(1)$ it holds that $R\sqrt{s\log p}\log n=o(1).$
The condition 
$$P\sup_{f\in\mathcal F}f^4 s^3 (\log p)^3(\log n)^6/n=o(1)$$ is satisfied under the sparsity condition $s^{3}(\log p)^2(\log n)^2 /n =o(1)$  by Lemma \ref{f4}.
\\
Condition \ref{itm:norm}: by the assumption $\mathbb E(\Theta_j^T X_i)^4 = \mathcal O(1)$ and by \ref{itm:norm}, we can apply the central limit theorem to conclude the asymptotic normality as required in condition \ref{itm:norm}.
\\
The above implies that 
$$\sqrt{n}(\tilde b_j - \beta^0_j)/
\sqrt{\Theta_j^T P\psi_{\beta_0}\psi_{\beta_0}^T \Theta_j} \rightsquigarrow \mathcal N(0,1),$$
which shows the first claim of the theorem.\\
Next by Theorem \ref{my} it follows that $$\|\hat\Theta_j-\Theta_j\|_1 = \Op (s^{3/2}\sqrt{\log p/n}).$$
Furthermore, by Lemma \ref{var}, we have 
$$|\hat\Theta_j^T \mathbb P_n\psi_{\hat\beta}\psi_{\hat\beta}^T \hat\Theta_j - \Theta_j^T P\psi_{\beta_0}\psi_{\beta_0}^T \Theta_j| = o_P(1).$$
But then 
\begin{eqnarray*}
 (\hat\beta-\beta_0 -\hat\Theta_j^T \mathbb P_n \psi_{\hat\beta})/\sqrt{\hat\Theta_j^T \mathbb P_n\psi_{\hat\beta}\psi_{\hat\beta}^T \hat\Theta_j}
&=& 
(\hat\beta-\beta_0 -\Theta_j^T \mathbb P_n \psi_{\hat\beta})/\sqrt{\hat\Theta_j^T \mathbb P_n\psi_{\hat\beta}\psi_{\hat\beta}^T \hat\Theta_j}
\\
&&-\;(\hat\Theta_j-\Theta_j)^T \mathbb P_n \psi_{\hat\beta} /\sqrt{\hat\Theta_j^T \mathbb P_n\psi_{\hat\beta}\psi_{\hat\beta}^T \hat\Theta_j}.
\end{eqnarray*}

\end{proof}

\begin{proof}[Proof of Theorem \ref{glm.main2}]
We apply Theorems \ref{mai} and \ref{main2}. Under condition \ref{itm:ew}, by Lemma \ref{glm.cond1} it follows that Condition \ref{itm:taylor} is satisfied with $d^2(\beta,\beta_0)=\mathbb E|x^T(\beta-\beta_0)|^2$. Then $d^2(\hat\beta,\beta_0) 
= 
(\hat\beta-\beta_0)^T \Sigma (\hat\beta-\beta_0) =\mathcal O(s\lambda^2) = o(1/\sqrt{ns}),
$
under the condition $s^3(\log p)^2 /n =o(1).$\\
By inspection of proof of Lemma \ref{glm.Pf2}, we have
 by boundedness of $w_\beta$ (condition \ref{itm:ew}) that $|w_\beta-w_{\beta_0}| \leq |w_\beta|+|w_{\beta_0}| =\mathcal O(1)$ and hence
\begin{eqnarray*}
Pf^2 & = & \mathbb E\left(\Theta_j^T x (w_\beta-w_{\beta_0})\right)^2 
\\
&\leq & 
(\mathbb E|\Theta_j^Tx|^4 )^{1/2} (\mathbb E(w_\beta-w_{\beta_0})^{4})^{1/2} 
\\
&\leq  &
(\mathbb E|\Theta_j^Tx|^4 )^{1/2} (\mathbb E(w_\beta-w_{\beta_0})^{4})^{1/2} 
\\
& \leq & \mathcal O(1) (\mathbb E (w_\beta-w_{\beta_0})^2)^{1/2}
.
\end{eqnarray*}
 By condition \ref{itm:other} we have $\mathbb E \left(w_\beta-w_{\beta_0}\right)^2=\mathcal O\left(\frac{1}{s^2 (\log p)^2 (\log n)^4}\right)$  which implies that $R\log n \sqrt{s\log p}=o(1).$
\\
We have 
\begin{eqnarray*}
\max_{i=1,\dots,n}\mathbb E\sup _{f\in\mathcal F}f^4(X_i,Y_i) & \leq  &
\max_{i=1,\dots,n}\mathbb E |\Theta_j^T X_i|^4 \sup _{f\in\mathcal F} |w_\beta(X_i,Y_i)-w_{\beta_0}(X_i,Y_i)|^4
\\
&\leq &
 \max_{i=1,\dots,n}\mathbb E{|\Theta_j^T X_i|^4}
\\
&\leq &
\mathcal O(1) ,
\end{eqnarray*}
where we used the boundedness of $w_\beta$ from condition \ref{itm:ew} and the assumption 
$$\max_{i=1,\dots,n}\mathbb E|\Theta_j^T x|^4=\mathcal O(1)$$ from condition \ref{itm:bounded}.
The condition 
$$\max_{i=1,\dots,n}\mathbb E\sup_{f\in\mathcal F} f(X_i,Y_i)^4 s^3 (\log p)^3 (\log n)^6/n=o(1)$$ is then satisfied under 
$s^3 (\log p)^3 (\log n)^6/n=o(1).$
\end{proof}

\subsubsection{Proofs for Section \ref{subsec:spa}: Sparsity of the Lasso}

\begin{proof}[Proof of Lemma \ref{sparsity.glm}]
The KKT conditions for $\hat\beta$ give
\begin{equation*}
\mathbb P_n \psi_{\hat\beta} + \lambda\hat Z=0,
\end{equation*}
where $\psi_\beta(y_i,x_i)=w(y_i,x_i^T\beta)$.
This can be rewritten as
\begin{equation}\label{spa.beta}
\mathbb P_n (\psi_{\hat\beta}-\psi_{\beta_0}) =- \lambda\hat Z-\mathbb P_n \psi_{\beta_0}.
\end{equation}
Then we further separate the empirical process part
\begin{equation}\label{spa2}
P (\psi_{\hat\beta}-\psi_{\beta_0}) =- \lambda\hat Z-\mathbb P_n \psi_{\beta_0} - (\mathbb P_n-P) (\psi_{\hat\beta}-\psi_{\beta_0}).
\end{equation}
Taking the $\ell_2$-norm of the left-hand side of \eqref{spa.beta} and by the mean-value theorem we obtain 
\begin{eqnarray*}
\|P (\psi_{\hat\beta}-\psi_{\beta_0})\|_2^2
 &=&
\| P x_i(w(y_i,x_i^T\hat\beta) - w(y_i,x_i^T\beta_0))\|_2^2 
\\
&=&
\|P_xG'(y_i,x_i^T\tilde\beta) x_ix_i^T(\hat \beta - \beta_0) \|_2^2\\
&=&
(\hat\beta-\beta_0)^T(P_xG'(y_i,x_i^T\tilde\beta) x_ix_i^T)^2 (\hat \beta - \beta_0)
\end{eqnarray*}
For all $u\in\mathbb R^p$ we have (since $G'(y_i,x_i^T\tilde\beta) \geq 0$ by assumption of convexity of the loss function)
$$u^T P_xG'(y_i,x_i^T\tilde\beta) x_ix_i^T u \leq C u^T P_x x_ix_i^T u \leq C\Lambda_{\max}(P_x x_ix_i^T) u^Tu \leq C_2 u^Tu,$$
for some constant $C_2>0.$ Thus it must necessarily hold that $\Lambda_{\max}(P_xG'(y_i,x_i^T\tilde\beta) x_ix_i^T)\leq C_2.$
But then we obtain
\begin{eqnarray*}
\|P (\psi_{\hat\beta}-\psi_{\beta_0})\|_2^2
&\leq &
\Lambda_{\max}^2(P_xG'(y_i,x_i^T\tilde\beta) x_ix_i^T)
\|\hat \beta - \beta_0\|_2^2
\\
&=&C_2
\|\hat \beta - \beta_0\|_2^2
\\
&=&\op_P (\|W_{\beta_0} X(\hat\beta-\beta_0)\|_2^2),
\end{eqnarray*}
where $W_{\beta_0}:=\text{diag}(w_{\beta_0}(x_1,y_1),\dots,w_{\beta_0}(x_n,y_n)).$
Next we consider the right-hand side of \eqref{spa.beta}. 
First, by equation \eqref{ent.glm} which is assumed in the conditions, we have for the empirical process part
$$(\mathbb P_n-P) (\psi_{\hat\beta}-\psi_{\beta_0}) = \Op(\lambda_0),$$
for $\lambda_0 \asymp \sqrt{\log p/n}.$ This follows analogously as in the proof of Theorem \ref{glm.main2}.
We further have,
$$\mathbb P_n \psi_{\beta_0} = \Op(\lambda_0).$$
Denote $\hat s:=\|\hat\beta\|_0.$ Then taking the $\ell_2$-norm of the right-hand side of \eqref{spa.beta}
$$\|- \lambda\hat Z-\mathbb P_n \psi_{\beta_0} -(\mathbb P_n-P) (\psi_{\hat\beta}-\psi_{\beta_0}) \|_2^2 
 \geq (\lambda - \lambda_0)^2\hat s.$$
Hence we obtain
$$\hat s \leq \frac{\op_P (\|W_{\beta_0} X(\hat\beta-\beta_0)\|_2^2) }{\lambda^2} \leq \op_P (s).$$
\end{proof}

\subsubsection{Proofs for Section \ref{subsec:var}: Estimation of asymptotic variance}

\begin{proof}[Proof of Lemma \ref{var}]
\begin{eqnarray*}
&&
|\hat\Theta_j^T \mathbb P_n\psi_{\hat\beta}\psi_{\hat\beta}^T \hat\Theta_j - \Theta_j^T P\psi_{\beta_0}\psi_{\beta_0}^T \Theta_j|
\\
&
\leq
&
\underbrace{
|\hat\Theta_j^T (\mathbb P_n-P)\psi_{\beta_0}\psi_{\beta_0}^T \hat\Theta_j|
}_{i}
+
\underbrace{
|\hat\Theta_j^T P\psi_{\beta_0}\psi_{\beta_0}^T \hat\Theta_j  - \hat\Theta_j^T P\psi_{\hat\beta}\psi_{\hat\beta}^T \hat\Theta_j |
}_{ii }
\end{eqnarray*}
For the first term, we have by H\"older's inequality
$$|i| \leq \|\hat\Theta_j\|_1^2 \|(P_n-P)\psi_{\beta_0}\psi_{\beta_0}^T\|_\infty = \Op (s \sqrt{\log p/n}).$$
For the second term, we have
\begin{eqnarray*}
|P(\psi_{\beta_0}\psi_{\beta_0}^T-\psi_{\hat\beta}\psi_{\hat\beta}^T)| 
&=&
|P(w_{\hat\beta}^2 - w_{\beta_0}^2)xx^T|
\\
& \leq &
\|x\|_\infty^2 |P(w_{\hat\beta}^2-w_{\beta_0}^2)|
\\
&=& \mathcal O (\mathbb E x^T(\hat\beta-\beta_0))\\
&=& \Op ( (\mathbb E (x^T(\hat\beta-\beta_0))^2)^{1/2} )=\Op (\sqrt{s}\lambda).
\end{eqnarray*}
Then
$$|ii|\leq \|\hat\Theta_j\|_1^2\|x\|_\infty^2  |P(w_{\hat\beta}^2-w_{\beta_0}^2)| = \Op (s \sqrt{s}\lambda) = \Op (s^{3/2}\lambda)=o_P(1).$$
\end{proof}

\subsection{Proofs for section \ref{sec:nw} (Nodewise regression for estimation of precision matrices) }
\label{sec:nw.proof}

\begin{lemma}\label{node.rates}
Suppose that conditions \ref{itm:bounded}, \ref{itm:initial.glm}, \ref{eig}, \ref{A3}, \ref{sp} are satisfied.
Let $\lambda \asymp \sqrt{\log p/n} $.
Then it holds that
$$\|\hat\gamma_{\hat\beta,j} -  \gamma_{0,j}\|_1 = \Op (s^{3/2} \sqrt{\log p/n}).$$
\end{lemma}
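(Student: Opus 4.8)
The plan is to treat $\hat\gamma_{\hat\beta,j}$ as a weighted Lasso and run a Lasso oracle-inequality argument, while tracking two extra effects: the weight matrix $W_{\hat\beta}$ is estimated rather than $W_{\beta_0}$, and the residual of the population weighted node-wise regression is only $\mathcal O(\sqrt s)$-bounded (since we do not assume $\|X\Theta_j\|_\infty=\mathcal O(1)$ as in \cite{vdgeer13}). Together these inflate the effective noise level of the node-wise Lasso from $\sqrt{\log p/n}$ to $\sqrt{s\log p/n}$, which is the source of the extra $\sqrt s$ in the rate.

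\emph{Set-up.} Write $X_j=X_{-j}\gamma_{0,j}+\eta$, so $\gamma_{0,j}=-\Theta_{-j,j}/\Theta_{jj}$ and, with $v_{\beta_0}:=v(\cdot,x^T\beta_0)$ and $\Sigma:=\mathbb E[v_{\beta_0}^2xx^T]$, one has $\mathbb E[v_{\beta_0}^2x_{-j}\eta]=0$. By \ref{eig}, $\gamma_{0,j}$ is $s$-sparse with $\|\gamma_{0,j}\|_2=\mathcal O(1)$, hence $\|\gamma_{0,j}\|_1=\mathcal O(\sqrt s)$; by \ref{eig} and \ref{A3}, $\mathbb E\eta^2=\mathcal O(1)$ (so $\|\eta\|_n=\mathcal O_P(1)$) while only $\|\eta\|_\infty\le K_X(1+\|\gamma_{0,j}\|_1)=\mathcal O(\sqrt s)$. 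Put $\delta:=\hat\gamma_{\hat\beta,j}-\gamma_{0,j}$, $S:=\mathrm{supp}(\gamma_{0,j})$, $\hat\Sigma_{\hat\beta}:=X^TW_{\hat\beta}^2X/n$. The KKT conditions for \eqref{parcor}, together with the noise bound (i) below, give the basic inequality
$$\|W_{\hat\beta}X_{-j}\delta\|_n^2+\lambda\|\delta_{S^c}\|_1\ \le\ 3\lambda\|\delta_S\|_1+2\bigl|\delta^T X_{-j}^T W_{\hat\beta}^2\eta/n\bigr|.$$

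\emph{Three quantitative inputs.} (i) Ideal noise: $\|X_{-j}^T W_{\beta_0}^2\eta/n\|_\infty=\mathcal O_P(\sqrt{\log p/n})$ by the Nemirovski inequality (Theorem \ref{nemirovski}), using $\|X_i\|_\infty\le K_X$, $v_{\beta_0}=\mathcal O(1)$ and $\mathbb E\eta^2=\mathcal O(1)$; fix the constant in $\lambda\asymp\sqrt{\log p/n}$ so that this is $\le\lambda/8$ w.h.p. (ii) Weight perturbation: by the Lipschitz and boundedness properties of $v_\beta$ in \ref{A3}, $|v_{\hat\beta,i}^2-v_{\beta_0,i}^2|\le C|x_i^T(\hat\beta-\beta_0)|$, so $\max_i|v_{\hat\beta,i}^2-v_{\beta_0,i}^2|=\mathcal O_P(s\lambda)$ and $\bigl(\tfrac1n\sum_i(v_{\hat\beta,i}^2-v_{\beta_0,i}^2)^2\eta_i^2\bigr)^{1/2}\le\max_i|v_{\hat\beta,i}^2-v_{\beta_0,i}^2|\,\|\eta\|_n=\mathcal O_P(s\lambda)$; Cauchy--Schwarz then yields $|\delta^T X_{-j}^T(W_{\hat\beta}^2-W_{\beta_0}^2)\eta/n|\le\mathcal O_P(s\lambda)\,\|W_{\hat\beta}X_{-j}\delta\|_n$, using $\|X(\hat\beta-\beta_0)\|_n=\mathcal O_P(\sqrt s\lambda)$ from \ref{itm:initial.glm} and $v_\beta$ bounded away from $0$ and $\infty$. (iii) Curvature of $\hat\Sigma_{\hat\beta}$: splitting $\hat\Sigma_{\hat\beta}-\Sigma$ into its weight part ($\|\cdot\|_\infty=\mathcal O_P(\sqrt s\lambda)$ as above) and its empirical part ($\mathcal O_P(\sqrt{\log p/n})$, Theorem \ref{nemirovski}) gives $\|\hat\Sigma_{\hat\beta}-\Sigma\|_\infty=\mathcal O_P(\sqrt s\lambda)$; since $\Lambda_{\min}(\Sigma)\ge c$ by \ref{eig} and $\|v\|_1^2\le\mathcal O(s)\|v\|_2^2$ for any $\mathcal O(s)$-sparse or cone vector $v$, the sparsity condition $s^{3/2}\sqrt{\log p/n}=o(1)$ in \ref{sp} forces $v^T\hat\Sigma_{\hat\beta}v\ge(c/2)\|v\|_2^2$ w.h.p. on such $v$, i.e.\ a compatibility/restricted-eigenvalue bound with constant bounded away from zero.

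\emph{Combining.} Substituting (i)--(ii) into the basic inequality gives, schematically, $\|W_{\hat\beta}X_{-j}\delta\|_n^2\lesssim\lambda\|\delta\|_1+\mathcal O_P(s\lambda)\,\|W_{\hat\beta}X_{-j}\delta\|_n$. One then passes from the prediction seminorm to $\|\delta\|_1$ using (iii): either via the standard cone argument off the basic inequality, or by first establishing that $\hat\gamma_{\hat\beta,j}$ is $\mathcal O_P(s)$-sparse (a node-wise analogue of Lemma \ref{sparsity.glm}), which gives $\|\delta\|_1\le\mathcal O_P(\sqrt s)\,\|W_{\hat\beta}X_{-j}\delta\|_n$; feeding this back leaves $A^2\lesssim(\sqrt s+s)\lambda A$ with $A:=\|W_{\hat\beta}X_{-j}\delta\|_n$, hence $A=\mathcal O_P(s\lambda)$ and $\|\delta\|_1=\mathcal O_P(s^{3/2}\lambda)=\mathcal O_P(s^{3/2}\sqrt{\log p/n})$, as claimed. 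The main obstacle is precisely this last step: because the estimated weights raise the effective noise to order $\sqrt s\,\lambda$, exceeding the nominal penalty $\lambda$, the weight-perturbation term cannot be absorbed as additive $\ell_1$-noise but must be carried as a multiple of the prediction seminorm and combined with the restricted-eigenvalue bound in (iii); the two extra factors of $\sqrt s$ in the rate are the respective prices of the inflated noise and of converting a prediction-error bound into an $\ell_1$-error bound.
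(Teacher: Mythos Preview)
Your overall strategy---basic inequality for the weighted node-wise Lasso, Nemirovski for the ideal noise, Cauchy--Schwarz for the weight-perturbation term carried as a multiple of the prediction seminorm, and a compatibility/RE bound to convert prediction error into $\ell_1$-error---matches the paper's proof closely. The place where your argument and the paper's diverge, and where your version develops a gap, is the size of the weight-perturbation term and how the final step is closed.

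You bound $\bigl(\tfrac1n\sum_i(v_{\hat\beta,i}^2-v_{\beta_0,i}^2)^2\eta_i^2\bigr)^{1/2}$ by $\max_i|v_{\hat\beta,i}^2-v_{\beta_0,i}^2|\cdot\|\eta\|_n=\mathcal O_P(s\lambda)$. The paper instead applies Cauchy--Schwarz once more and uses the fourth-moment control $\mathbb E\eta_i^4=\mathcal O(1)$ (a direct consequence of $\mathbb E|\Theta_j^TX_i|^4=\mathcal O(1)$ in \ref{itm:bounded}, since $\eta_i=\Theta_j^TX_i/\Theta_{jj}$): this gives $\bigl(\tfrac1n\sum_i(v_{\hat\beta,i}^2-v_{\beta_0,i}^2)^4\bigr)^{1/2}=\mathcal O_P(s^{3/2}\lambda^2)$ via $\|X(\hat\beta-\beta_0)\|_n^2=\mathcal O_P(s\lambda^2)$, so the perturbation term is only $\mathcal O_P(s^{3/4}\lambda)\cdot A$. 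After decoupling by $2ab\le\delta a^2+b^2/\delta$, this becomes an \emph{additive} $\mathcal O_P(s^{3/2}\lambda^2)$, and the paper finishes with a clean two-case argument (Case i: $\lambda\|\delta_S\|_1\ge\mathcal O_P(s^{3/2}\lambda^2)$ gives the cone and the standard $\mathcal O_P(s\lambda)$ rate; Case ii: otherwise both $\|\delta_S\|_1$ and $\|\delta_{S^c}\|_1$ are $\mathcal O_P(s^{3/2}\lambda)$ directly). No appeal to sparsity of $\hat\gamma_{\hat\beta,j}$ is needed.

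With your looser $\mathcal O_P(s\lambda)\cdot A$ bound, the same two-case argument yields only an additive $\mathcal O_P(s^2\lambda^2)$ and hence $\|\delta\|_1=\mathcal O_P(s^2\lambda)$ in Case~ii---too weak. Your proposed remedies do not close this: the ``standard cone argument'' cannot be triggered when the dominant right-hand term is $\mathcal O_P(s\lambda)A$ rather than $c\lambda\|\delta\|_1$; and the alternative route through $\|\hat\gamma_{\hat\beta,j}\|_0=\mathcal O_P(s)$ is not available here without a separate proof (the usual KKT-based sparsity bound gives $\hat s\lesssim A^2/\lambda^2$, which is circular before $A$ is controlled). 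The simple fix is to adopt the paper's sharper $s^{3/4}\lambda$ perturbation bound via fourth moments; everything else in your outline then goes through exactly as you wrote.
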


\begin{proof}[Proof of Lemma \ref{node.rates}]
We denote $W_\beta:=\text{diag}({v}_{\beta}(y_1,x_1),\dots,{v}_{\beta}(y_n,x_n) )$.
Further denote by 
$$\eta_{\beta_0,j}:= W_{\beta_0}(X_j - X_{-j}\gamma_{0,j}).$$
We have the basic inequality
\begin{eqnarray*}
&&\|X_{\hat\beta,-j}(\hat\gamma_{\hat\beta,j}-\gamma_{\beta_0,j})\|_2^2/n + 2\lambda_j \|\hat\gamma_{\hat\beta,j}\|_1
\\
&&\leq 2 \eta_{\beta_0,j}^T W_{\beta_0}^{-1}W_{\hat\beta}X_{\hat\beta,-j}(\hat\gamma_{\hat\beta,j}-\gamma_{\beta_0,j})/n
+2\lambda_j \|\gamma_{\beta_0,j}\|_1.
\end{eqnarray*}
First we have by the Cauchy-Schwarz inequality
\begin{eqnarray*}
&&| \eta_{\beta_0,j}^T W_{\beta_0}^{-1}W_{\hat\beta}X_{\hat\beta,-j}(\hat\gamma_{\hat\beta,j}-\gamma_{\beta_0,j})/n - 
\eta_{\beta_0,j}^T X_{\beta_0,-j}(\hat\gamma_{\hat\beta,j}-\gamma_{\beta_0,j})/n
|\\
& \leq &
\|(W_{\hat\beta}^2W_{\beta_0}^{-2}-I)\eta_{\beta_0,j}\|_2/\sqrt{n} \|X_{\beta_0,-j}(\hat\gamma_{\hat\beta,j}-\gamma_{\beta_0,j})\|_2/\sqrt{n}
\end{eqnarray*}
We bound the term $\|(W_{\hat\beta}^2W_{\beta_0}^{-2}-I)\eta_{\beta_0,j}\|_2^2/n.$
We have by the Cauchy-Schwarz inequality
\begin{eqnarray*}
\|(W_{\hat\beta}^2W_{\beta_0}^{-2}-I)\eta_{\beta_0,j}\|_2^2/n  
&=&
\frac{1}{n}\sum_{i=1}^n ({v}_{\hat\beta,i}^2{v}_{\beta_0,i}^{-2}-1)^2 \eta_{\beta_0,j,i}^2
\\
&\leq & \sqrt{\frac{1}{n}\sum_{i=1}^n ({v}_{\hat\beta,i}^2{v}_{\beta_0,i}^{-2}-1)^4 }\sqrt{\frac{1}{n}\sum_{i=1}^n\eta_{\beta_0,j,i}^4}
\end{eqnarray*}
 By condition \ref{itm:bounded} and \ref{eig} we have $\mathbb E\eta_{\beta_0,j,i}^4 = \mathcal O(1).$
Then, and by the law of large numbers,
we have
$$\frac{1}{n}\sum_{i=1}^n\eta_{\beta_0,j,i}^4=\mathcal O_{P}(1).$$
For the other term we have since ${v}_{\beta_0,i}^{-2}=\mathcal O(1)$ by condition \ref{A3} that
$$\frac{1}{n}\sum_{i=1}^n ({v}_{\hat\beta,i}^2{v}_{\beta_0,i}^{-2}-1)^4
=\frac{1}{n} \sum_{i=1}^n {v}_{\beta_0,i}^{-2}({v}_{\hat\beta,i}^2-{v}_{\beta_0,i}^{2})^4
=\mathcal O\left(\frac{1}{n} \sum_{i=1}^n ({v}_{\hat\beta,i}^2-{v}_{\beta_0,i}^{2})^4\right).
$$
On this term we apply the Lipschitz property of ${v}_{\beta,i}$ to have 
\begin{eqnarray*}
|{v}_{\hat\beta,i}^2-{v}_{\beta_0,i}^2| &=& |({v}_{\hat\beta,i}-{v}_{\beta_0,i})^2 
+ 2{v}_{\beta_0,i}({v}_{\hat\beta,i}-{v}_{\beta_0,i})|
\\
&\leq & L |X_i^T(\hat\beta - \beta_0)| (L |X_i^T(\hat\beta - \beta_0)|+2).
\end{eqnarray*}
Now observe that 
$$|L |X_i^T(\hat\beta - \beta_0)|+2|\leq 2 + \|X_i\|_\infty\|\hat\beta - \beta_0\|_1=\op_P(1).$$
Then we obtain
\begin{eqnarray*}
\frac{1}{n} \sum_{i=1}^n ({v}_{\hat\beta,i}^2-{v}_{\beta_0,i}^{2})^4 
&=& 
\op_P(1)\frac{1}{n} \sum_{i=1}^n (L |X_i^T(\hat\beta - \beta_0)|)^4
\\
&\leq & 
\op_P(1)\max_{i=1,\dots,n}\|X_i\|_\infty^2 \|\hat\beta-\beta_0\|_1^2  
\frac{1}{n} \sum_{i=1}^n (L |X_i^T(\hat\beta - \beta_0)|)^2
,
\end{eqnarray*}
where in the last step we applied H\"older's inequality to one part of the term. This then gives by the result 
$\|X(\hat\beta-\beta_0)\|_2^2/n = \mathcal O_P(s\lambda^2)$, by $\max_{i=1,\dots,n}\|X_i\|_\infty^2=\mathcal O(K^2)$ and by the $\ell_1$ rates 
$\|\hat\beta-\beta_0\|_1=\mathcal O_P(s\lambda)$
that
\begin{eqnarray*}
\frac{1}{n} \sum_{i=1}^n ({v}_{\hat\beta,i}^2-{v}_{\beta_0,i}^{2})^4 
&\leq & 
\op_P(1) \max_{i=1,\dots,n}\|X_i\|_\infty^2 \|\hat\beta-\beta_0\|_1^2  
\frac{1}{n} \sum_{i=1}^n (L |X_i^T(\hat\beta - \beta_0)|)^2
\\
&=& \mathcal O_P(s^3\lambda^4).
\end{eqnarray*}
This is turn implies
$$\|(W_{\hat\beta}^2W_{\beta_0}^{-2}-I)\eta_{\beta_0,j}\|_2^2/n =\mathcal O_P(s^{3/2}\lambda^2).$$
Hence, and returning to the basic inequality, we obtain
\begin{eqnarray*}
\|X_{\hat\beta,-j}(\hat\gamma_{\hat\beta,j}-\gamma_{\beta_0,j})\|_2^2/n + 2\lambda_j \|\hat\gamma_{\hat\beta,j}\|_1
&\leq & 2\eta_{\beta_0,j}^T X_{\beta_0,-j}(\hat\gamma_{\hat\beta,j}-\gamma_{\beta_0,j})/n
\\
&&+
\mathcal O_P(s^{3/4}\lambda) \|X_{\beta_0,-j}(\hat\gamma_{\hat\beta,j}-\gamma_{\beta_0,j})\|_2/\sqrt{n}
\\
&&+2\lambda_j \|\gamma_{\beta_0,j}\|_1.
\end{eqnarray*}
For arbitrary $\delta>0$, we have
$$2ab\leq \delta a^2 + b^2/\delta.$$
Applying this claim we get 
$$2\mathcal O_P(s^{3/4}\lambda) \|X_{\beta_0,-j}(\hat\gamma_{\hat\beta,j}-\gamma_{\beta_0,j})\|_2/n \leq 
\mathcal O_P(s^{3/2}\lambda^2)/\delta + \delta\|X_{\beta_0,-j}(\hat\gamma_{\hat\beta,j}-\gamma_{\beta_0,j})\|_2^2/n.
$$
Now we use that, under conditions on ${v}_\beta$ in condition \ref{A3}, 
$$\|X_{\beta_0,-j}(\hat\gamma_{\hat\beta,j}-\gamma_{\beta_0,j})\|_2^2/n=\mathcal O_P(\|X_{\hat\beta,-j}(\hat\gamma_{\hat\beta,j}-\gamma_{\beta_0,j})\|_2^2/n).$$
Hence
 we get
\begin{eqnarray*}
(1-\delta)\|X_{\hat\beta,-j}(\hat\gamma_{\hat\beta,j}-\gamma_{\beta_0,j})\|_2^2/n + 2\lambda_j \|\hat\gamma_{\hat\beta,j}\|_1
&\leq & 2\eta_{\beta_0,j}^T X_{\beta_0,-j}(\hat\gamma_{\hat\beta,j}-\gamma_{\beta_0,j})/n
\\
&&+
\mathcal O_P(s^{3/2}\lambda^2)
+2\lambda_j \|\gamma_{\beta_0,j}\|_1.
\end{eqnarray*}
Now we have by H\"older's inequality
$$\eta_{\beta_0,j}^T X_{\beta_0,-j}(\hat\gamma_{\hat\beta,j}-\gamma_{\beta_0,j})/n 
\leq \|\eta_{\beta_0,j}^T X_{\beta_0,-j}\|_\infty/n \|\hat\gamma_{\hat\beta,j}-\gamma_{\beta_0,j}\|_1.$$
Under \ref{itm:bounded}, by Nemirovski's inequality (Theorem \ref{nemirovski}) it follows
$$\|\eta_{\beta_0,j}^T X_{\beta_0,-j}\|_\infty/n \leq \lambda_j,$$
for $\lambda_j\asymp \sqrt{\log p/n}$.
Hence
\begin{eqnarray*}
(1-\delta)\|X_{\hat\beta,-j}(\hat\gamma_{\hat\beta,j}-\gamma_{\beta_0,j})\|_2^2/n + 2\lambda_j \|\hat\gamma_{\hat\beta,j}\|_1
&\leq & \lambda_j \|\hat\gamma_{\hat\beta,j}-\gamma_{\beta_0,j}\|_1
\\
&&+\;
\mathcal O_P(s^{3/2}\lambda^2)
+2\lambda_j \|\gamma_{\beta_0,j}\|_1.
\end{eqnarray*}
Using triangle inequality we can get from the above
\begin{eqnarray*}
(1-\delta)\|X_{\hat\beta,-j}(\hat\gamma_{\hat\beta,j}-\gamma_{\beta_0,j})\|_2^2/n + \lambda_j \|\hat\gamma_{\hat\beta,j,S^c}\|_1 
\leq  3\lambda_j \|\hat\gamma_{\hat\beta,j,S}-\gamma_{\beta_0,j,S}\|_1 
+
\mathcal O_P(s^{3/2}\lambda^2).
\end{eqnarray*}
\textbf{Case i)}\\
If $\lambda \|\hat\gamma_{\hat\beta,j,S}-\gamma_{\beta_0,j,S}\|_1 \geq \mathcal O_P(s^{3/2}\lambda^2)$ then
\begin{eqnarray*}
(1-\delta)\|X_{\hat\beta,-j}(\hat\gamma_{\hat\beta,j}-\gamma_{\beta_0,j})\|_2^2/n + \lambda_j \|\hat\gamma_{\hat\beta,j,S^c}\|_1 
&\leq & 4\lambda_j \|\hat\gamma_{\hat\beta,j,S}-\gamma_{\beta_0,j,S}\|_1 
\end{eqnarray*}
Then we continue the chain of calculations
\begin{eqnarray*}
(1-\delta)\|X_{\hat\beta,-j}(\hat\gamma_{\hat\beta,j}-\gamma_{\beta_0,j})\|_2^2/n + \lambda_j \|\hat\gamma_{\hat\beta,j,S^c}\|_1 
&\leq & 4\lambda_j \|\hat\gamma_{\hat\beta,j,S}-\gamma_{\beta_0,j,S}\|_1 
\\
&\leq & 
4\lambda_j \sqrt{s} \|\hat\gamma_{\hat\beta,j,S}-\gamma_{\beta_0,j,S}\|_2
\\
&\leq &
4\lambda_j \sqrt{s} \|X_{\hat\beta,-j}(\hat\gamma_{\hat\beta,j}-\gamma_{\beta_0,j})\|_2/\sqrt{n}
\\
&\leq &
16s\lambda^2_j/\delta \\
&&+\; \delta\|X_{\hat\beta,-j}(\hat\gamma_{\hat\beta,j}-\gamma_{\beta_0,j})\|_2^2/n.
\end{eqnarray*}
That implies
\begin{eqnarray*}
(1-2\delta)\|X_{\hat\beta,-j}(\hat\gamma_{\hat\beta,j}-\gamma_{\beta_0,j})\|_2^2/n + \lambda_j \|\hat\gamma_{\hat\beta,j,S^c}\|_1 
\leq
16s\lambda^2_j/\delta.
\end{eqnarray*}
But then
$$\|\hat\gamma_{\hat\beta,j}-\gamma_{\beta_0,j}\|_1  = \|\hat\gamma_{\hat\beta,j,S}-\gamma_{\beta_0,j,S}\|_1  +
\|\hat\gamma_{\hat\beta,j,S^c}\|_1 =\mathcal O_P(s\lambda^2). $$
\textbf{Case ii)}\\
If $\lambda \|\hat\gamma_{\hat\beta,j,S}-\gamma_{\beta_0,j,S}\|_1 \leq \mathcal O_P(s^{3/2}\lambda^2)$ then
\begin{eqnarray*}
(1-\delta)\|X_{\hat\beta,-j}(\hat\gamma_{\hat\beta,j}-\gamma_{\beta_0,j})\|_2^2/n + \lambda_j \|\hat\gamma_{\hat\beta,j,S^c}\|_1 
&\leq & \mathcal O_P(s^{3/2}\lambda^2).
\end{eqnarray*}
But then
$$\|\hat\gamma_{\hat\beta,j}-\gamma_{\beta_0,j}\|_1  = \|\hat\gamma_{\hat\beta,j,S}-\gamma_{\beta_0,j,S}\|_1  +
\|\hat\gamma_{\hat\beta,j,S^c}\|_1 =\mathcal O_P(s^{3/2}\lambda^2). $$
\end{proof}

\begin{lemma}
\label{l2}
Suppose that conditions \ref{itm:bounded}, \ref{itm:initial.glm}, \ref{eig}, \ref{A3}, \ref{sp} are satisfied.
Let $\lambda \asymp \sqrt{\log p/n} $.
Then it holds that
$$|\hat\tau_j^2 - \tau_j^2| = O_P(K_X\sqrt{s\log p/n}).$$
\end{lemma}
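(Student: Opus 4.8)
The plan is to reduce $\hat\tau_j^2$ and its population counterpart $\tau_j^2:=\mathbb E\big[w_{\beta_0}^2(X_j-X_{-j}^T\gamma_{0,j})^2\big]$ (the population weighted partial variance, with $\gamma_{0,j}$ characterised by the orthogonality $\mathbb E\big[w_{\beta_0}^2(X_j-X_{-j}^T\gamma_{0,j})X_{-j}\big]=0$, which is exactly what makes the Nemirovski step in the proof of Lemma \ref{node.rates} legitimate) to penalty-free quadratic forms, and then to bound their difference by peeling off three sources of discrepancy: the estimation error in the nodewise coefficients, the estimation error in the weights, and an ordinary empirical-mean fluctuation. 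Only the second of these carries the factor $K_X$ in the final rate.

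First I would use the KKT conditions for \eqref{parcor}: dotting the stationarity condition with $\hat\gamma_{\hat\beta,j}$ gives $\hat\gamma_{\hat\beta,j}^T X_{-j}^T\hat W^2(X_j-X_{-j}\hat\gamma_{\hat\beta,j})/n=\lambda\|\hat\gamma_{\hat\beta,j}\|_1$, so the penalty in $\hat\tau_j^2$ cancels and $\hat\tau_j^2=\tfrac1n X_j^T\hat W^2(X_j-X_{-j}\hat\gamma_{\hat\beta,j})$; likewise the population orthogonality yields $\tau_j^2=\mathbb E\big[w_{\beta_0}^2 X_j(X_j-X_{-j}^T\gamma_{0,j})\big]$. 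Subtracting and inserting $w_{\beta_0,i}^2$ and $\gamma_{0,j}$ one factor at a time splits $\hat\tau_j^2-\tau_j^2$ into $A:=\tfrac1n\sum_i(\hat w_i^2-w_{\beta_0,i}^2)X_{ij}(X_{ij}-X_{i,-j}^T\hat\gamma_{\hat\beta,j})$, $B:=\tfrac1n\sum_i w_{\beta_0,i}^2 X_{ij}X_{i,-j}^T(\gamma_{0,j}-\hat\gamma_{\hat\beta,j})$, and $C:=(\mathbb P_n-P)\big[w_{\beta_0}^2 X_j(X_j-X_{-j}^T\gamma_{0,j})\big]$, where $\hat w_i,w_{\beta_0,i}$ denote the diagonal entries of $\hat W=W_{\hat\beta}$ and $W_{\beta_0}$.

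The term $C$ is a centred empirical average of a function with finite variance (from $\|X_i\|_\infty\le K_X$, $v_\beta=\mathcal O(1)$ by \ref{A3}, and $\mathbb E(X_j-X_{-j}^T\gamma_{0,j})^2=\mathcal O(1)$), so $C=\mathcal O_P(K_X/\sqrt n)$, negligible. For $B$, H\"older's inequality gives $|B|\le\|\tfrac1n\sum_i w_{\beta_0,i}^2 X_{ij}X_{i,-j}\|_\infty\,\|\hat\gamma_{\hat\beta,j}-\gamma_{0,j}\|_1$, the $\ell_\infty$-norm being $\mathcal O_P(1)$ (its mean is a row of $\Sigma$, bounded by $\|\Sigma\|_\infty=\mathcal O(1)$ from \ref{eig}, and the fluctuation is $\mathcal O_P(K_X^2\sqrt{\log p/n})$ by Nemirovski's inequality, Theorem \ref{nemirovski}); together with $\|\hat\gamma_{\hat\beta,j}-\gamma_{0,j}\|_1=\mathcal O_P(s^{3/2}\sqrt{\log p/n})$ from Lemma \ref{node.rates} and the sparsity condition \ref{sp}, $B$ is of the required order (one can do slightly sharper by using the orthogonality to replace $\tfrac1n\sum_i w_{\beta_0,i}^2 X_{ij}X_{i,-j}^T u$ by $(\Sigma_{-j,-j}\gamma_{0,j})^T u\le\sqrt{\gamma_{0,j}^T\Sigma_{-j,-j}\gamma_{0,j}}\,\sqrt{u^T\Sigma_{-j,-j}u}$, with $\gamma_{0,j}^T\Sigma_{-j,-j}\gamma_{0,j}=\Sigma_{jj}-\tau_j^2=\mathcal O(1)$ and $u^T\Sigma_{-j,-j}u$ controlled by the weighted-$\ell_2$ rate obtained inside the proof of Lemma \ref{node.rates}).

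The main obstacle is the weight-replacement term $A$, which I would handle exactly as in the proof of Lemma \ref{node.rates}: write $\hat w_i^2-w_{\beta_0,i}^2=(\hat w_i-w_{\beta_0,i})(\hat w_i+w_{\beta_0,i})$, bound $|\hat w_i-w_{\beta_0,i}|\le L|X_i^T(\hat\beta-\beta_0)|$ by the Lipschitz property of $v_\beta$ in \ref{A3}, use $\max_i|X_i^T(\hat\beta-\beta_0)|\le K_X\|\hat\beta-\beta_0\|_1=o_P(1)$ (from \ref{itm:initial.glm} and \ref{sp}) to bound $|\hat w_i+w_{\beta_0,i}|$ uniformly by $\mathcal O_P(1)$, and then apply Cauchy--Schwarz with $\tfrac1n\sum_i(\hat w_i^2-w_{\beta_0,i}^2)^2\lesssim\|X(\hat\beta-\beta_0)\|_2^2/n=\mathcal O_P(s\log p/n)$ from \ref{itm:initial.glm} and $\tfrac1n\sum_i X_{ij}^2(X_{ij}-X_{i,-j}^T\hat\gamma_{\hat\beta,j})^2=\mathcal O_P(K_X^2)$ (since $|X_{ij}|\le K_X$ and $\tfrac1n\sum_i(X_{ij}-X_{i,-j}^T\hat\gamma_{\hat\beta,j})^2=\mathcal O_P(1)$). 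This yields $A=\mathcal O_P(K_X\sqrt{s\log p/n})$, the factor $K_X$ entering precisely through $\|X_i\|_\infty\le K_X$; collecting $A$, $B$ and $C$ then gives $|\hat\tau_j^2-\tau_j^2|=\mathcal O_P(K_X\sqrt{s\log p/n})$.
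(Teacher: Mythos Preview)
Your proof is essentially the paper's proof. Both start from the KKT identity $\hat\tau_j^2=\tfrac1n X_j^T\hat W^2(X_j-X_{-j}\hat\gamma_{\hat\beta,j})$, then split off the weight-replacement piece (your $A$ is exactly the paper's term $I$) and handle it by Cauchy--Schwarz with $\tfrac1n\sum_i(\hat w_i^2-w_{\beta_0,i}^2)^2\lesssim\|X(\hat\beta-\beta_0)\|_2^2/n$ from the Lipschitz property in \ref{A3}, together with $\tfrac1n\sum_i X_{ij}^2\hat\eta_{ji}^2\le K_X^2\,\hat\eta_j^T\hat\eta_j/n=\mathcal O_P(K_X^2)$. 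The remaining part (your $B+C$, the paper's $II$) is decomposed identically; you merely merge the paper's first two sub-terms of $II$ into a single centred sum $C$ by invoking the population orthogonality $\mathbb E[w_{\beta_0}^2 X_{-j}(X_j-X_{-j}^T\gamma_{0,j})]=0$, which is a cosmetic simplification.

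One point to tighten: your direct H\"older bound on $B$ yields $\mathcal O_P(1)\cdot\mathcal O_P(s^{3/2}\sqrt{\log p/n})=\mathcal O_P(s^{3/2}\sqrt{\log p/n})$, and condition \ref{sp} alone does not reduce this to the stated $\mathcal O_P(\sqrt{s\log p/n})$. The paper records the sharper rate $o_P(s^{3/2}\lambda^2)$ for the identical term (its third sub-term of $II$), obtained by writing $X_j=X_{-j}\gamma_{0,j}+W_{\beta_0}^{-1}\eta_{\beta_0,j}$ so that the cross term $\eta_{\beta_0,j}^TX_{\beta_0,-j}(\hat\gamma-\gamma_0)/n$ picks up the extra factor $\lambda$ from $\|\eta_{\beta_0,j}^TX_{\beta_0,-j}/n\|_\infty\le\lambda_j$; your parenthetical Cauchy--Schwarz route is in the same spirit but still leaves the $\gamma_0^TX_{\beta_0,-j}^TX_{\beta_0,-j}(\hat\gamma-\gamma_0)/n$ piece at $\mathcal O_P(s^{3/4}\lambda)$. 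This residual slack does not affect the use of the lemma in Theorem \ref{my}, where only a rate sufficient for $\|\gamma_{0,j}\|_1\,|1/\hat\tau_j^2-1/\tau_j^2|=\mathcal O_P(s^{3/2}\sqrt{\log p/n})$ is needed.
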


\begin{proof}
By the definition of $\hat\tau_j^2$
$$\hat\tau_j^2 = X_{\hat\beta,j}^T( X_{\hat\beta,j} - X_{\hat\beta,-j}\hat\gamma_{\hat\beta,j}) /n=
X_j^T W_{\hat\beta}^2  ( X_{j} - X_{-j}\hat\gamma_{\hat\beta,j}) /n
.$$
We have
\begin{eqnarray*}
\hat\tau_j^2 - \tau_j^2 &=&
\underbrace{
X_j^T (W_{\hat\beta}^2 -W_{\beta_0}^2) ( X_{j} - X_{-j}\hat\gamma_{\hat\beta,j}) /n
}_{I}
+  
\underbrace{
X_j^T W_{\beta_0}^2  ( X_{j} - X_{-j}\hat\gamma_{\hat\beta,j}) /n - \tau_j^2}_{II}
.
\end{eqnarray*}
We treat the two terms separately.  We have 
\begin{eqnarray*}
|II| &=& 
X_j^T W_{\beta_0}^2  ( X_{j} - X_{-j}\hat\gamma_{\hat\beta,j}) /n - \tau_j^2 
\\
&=&
\underbrace{\|W_{\beta_0}(X_j-X_{-j}\gamma_{0,j})\|_2^2/n -\tau_j^2}_{=\op_P (1/\sqrt{n})} 
\\
&&+ 
\underbrace{\gamma_{0,j}^TX_{-j}^T W_{\beta_0}^2 (X_j - X_{-j} \gamma_{0,j})/n}_{=\op_P(\sqrt{s}\lambda)} 
\\
&&+
\underbrace{ X_j^T W_{\beta_0}^2X_{-j}(\hat\gamma_{\hat\beta,j} - \gamma_{0,j})/n }_{=\op_P(\lambda s^{3/2}\lambda)}
\end{eqnarray*}
Hence, 
$II= \op_P(\sqrt{s\log p/n}).$
For the first term, we have
\begin{eqnarray*}
|I| &=&
|X_j^T (W_{\hat\beta}^2 -W_{\beta_0}^2) ( X_{j} - X_{-j}\hat\gamma_{\hat\beta,j}) /n|
\\
&= &
\frac{1}{n}\sum_{i=1}^n X_{j,i} ({v}_{\hat\beta,i}^2 - {v}_{\beta_0,i}^2) \hat\eta_{j,i},
\end{eqnarray*}
where $\hat\eta_j := X_{j} - X_{-j}\hat\gamma_{\hat\beta,j}$.
Then
\begin{eqnarray*}
|I| &=&
\frac{1}{n}\sum_{i=1}^n X_{j,i} ({v}_{\hat\beta,i}^2 - {v}_{\beta_0,i}^2) \hat\eta_{j,i}
\\
&\leq &
\sqrt{ \frac{1}{n}\sum_{i=1}^n ({v}_{\hat\beta,i}^2 - {v}_{\beta_0,i}^2)^2 } \sqrt{\frac{1}{n}\sum_{i=1}^n X_{j,i}^2 \hat\eta_{j,i}^2}
\\
&\leq &
\sqrt{ \frac{1}{n}\sum_{i=1}^n ({v}_{\hat\beta,i}^2 - {v}_{\beta_0,i}^2)^2 } K_X \sqrt{\frac{1}{n}\sum_{i=1}^n  \hat\eta_{j,i}^2}.
\end{eqnarray*}
Then we have
\begin{eqnarray*}
 \frac{1}{n}\sum_{i=1}^n ({v}_{\hat\beta,i}^2 - {v}_{\beta_0,i}^2)^2 
&=& 
 \frac{1}{n}\sum_{i=1}^n (({v}_{\hat\beta,i} - {v}_{\beta_0,i})^2 + 2{v}_{\beta_0}({v}_{\hat\beta,i} - {v}_{\beta_0,i}) )^2 
\\
& \lesssim &
\frac{1}{n}\sum_{i=1}^n ({v}_{\hat\beta,i} - {v}_{\beta_0,i})^2
\\
& \leq &
\frac{1}{n}\sum_{i=1}^n (x_i^T(\hat\beta-\beta_0))^2
\\
&\lesssim &
s\log p/n.
\end{eqnarray*}
For the second term we have 
\begin{eqnarray*}
\hat\eta_j^T\hat\eta_j/n &=&
\|X_j- X_{-j}\hat\gamma_j\|_2^2/n
\\
&=&
\|X_j- X_{-j}\gamma^0_j\|_2^2/n
+
2(X_j- X_{-j}\gamma_j^0)^T  X_{-j}(\hat\gamma - \gamma^0_j)/n \\
&&+ \; 
\|X_{-j}(\hat\gamma - \gamma_j^0)\|_2^2/n.
\end{eqnarray*}
Let $\eta_j:=X_j-X_{-j}\gamma_j^0$. Then we have $\eta_j^T \eta_j/n = \mathcal O_P(\mathbb E\eta_j^T\eta_j)=\mathcal O_P(\tau_j^2)$.
Further we have
\begin{eqnarray*}
|(X_j- X_{-j}\gamma_0)^T  X_{-j}(\hat\gamma - \gamma_0)/n| 
&\leq &
 \|\eta_j^T\eta_j\|_2/\sqrt{n} \|X_{-j}(\hat\gamma - \gamma_0)\|_2 /\sqrt{n} \\
&=&\op_P(\tau_j^2) o_P({s}^{3/4}\lambda) = o_P(1).
\end{eqnarray*}
Hence 
$$\hat\eta_j^T\hat\eta_j/n =\mathcal O_P(\tau_j^2) = \mathcal O_P(1).$$
Therefore,
\begin{eqnarray*}
|I| 
&\leq &
\sqrt{ \frac{1}{n}\sum_{i=1}^n ({v}_{\hat\beta,i}^2 - {v}_{\beta_0,i}^2)^2 } K_X \sqrt{\frac{1}{n}\sum_{i=1}^n  \hat\eta_{j,i}^2}.
\\
& \lesssim &
K_X \sqrt{s\log p /n}. 
\end{eqnarray*}

\end{proof}


\begin{proof}[Proof of Theorem \ref{my}]
We use Lemmas \ref{node.rates} and \ref{l2} to obtain
\begin{eqnarray*}
\|\hat\Theta_j-\Theta_{j}^0\|_1 &=&
\|\hat\Gamma_j/\hat\tau_j^2 - \Gamma_j/\tau_j^2\|_1
\leq 
\underbrace{
\|\hat\gamma_j - \gamma^0_j\|_1 /\hat\tau_j^2}_{i} + 
\underbrace{
\|\gamma^0_j\|_1 (1/\hat\tau_j^2 - 1/\tau_j^2)}_{ii}.
\end{eqnarray*}
We have $i=\op_P(s^{3/2}\sqrt{\log p/n})$
and
$ii=\op_P(s\sqrt{\log p/n})$. 
\end{proof}

\begin{lemma}\label{nd.fi}
Suppose the generalized linear model setting from Example \ref{ex:glm}:  $Y_i=g(X_i^T\beta_0)+\epsilon_i$, where $X_i$ and $\epsilon_i$ are independent for $i=1,\dots,n$. Let 
 $\psi_{\beta}(x,y):=w(y-g(x^T\beta))x.$ Then
$$(\mathbb E\psi_\beta (x,y))'_{\beta}  = \mathbb E_x(\mathbb E_{Y}w(u,y))'_{u=x^T\beta_0}  x  x^T.$$
\end{lemma}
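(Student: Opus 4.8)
The plan is to reduce the computation to an elementary chain rule after conditioning on $X$. By Fubini's theorem,
$$
\mathbb{E}\psi_\beta(x,y)=\mathbb{E}_X\bigl[X\,\mathbb{E}_{Y\mid X}w(Y-g(X^T\beta))\bigr]=\mathbb{E}_X\bigl[X\,G_X(X^T\beta)\bigr],
$$
where $G_x(u):=\mathbb{E}_{Y\mid X=x}w(Y-g(u))$ is the conditional mean of the weight, viewed as a function of the fitted value $u$; in the notation of the excerpt this is exactly $G_x(u)=\mathbb{E}_Y w(u,y)$, matching $G$ from Condition \ref{ew}. Since $Y=g(X^T\beta_0)+\epsilon$ with $\epsilon\perp X$, one has the explicit form $G_x(u)=\phi\bigl(g(x^T\beta_0)-g(u)\bigr)$ with $\phi(t):=\mathbb{E}_\epsilon w(\epsilon+t)$. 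This makes transparent that $u\mapsto G_x(u)$ inherits the smoothness of $\phi$ (the convolution of $w$ with the density of $\epsilon$) and of the link $g$, hence is differentiable even when $w$ itself is not pointwise differentiable — which is precisely why the statement carries the conditional expectation on the outside.

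Next I would differentiate in $\beta$. For fixed $x$, the map $\beta\mapsto G_x(x^T\beta)$ is differentiable with gradient $G_x'(x^T\beta)\,x$ by the chain rule, so the $\beta$-derivative of the $p$-vector $X\,G_X(X^T\beta)$ is the matrix $G_X'(X^T\beta)\,XX^T$. Interchanging differentiation and expectation then gives
$$
(\mathbb{E}\psi_\beta(x,y))'_{\beta}=\mathbb{E}_X\bigl[G_X'(X^T\beta)\,XX^T\bigr].
$$
Specialising to $\beta=\beta_0$ — equivalently, reading the right-hand side at $u=x^T\beta_0$ — yields $\mathbb{E}_x\,(\mathbb{E}_Y w(u,y))'_{u=x^T\beta_0}\,xx^T$, since $G_x'(u)=(\mathbb{E}_Y w(u,y))'_u$. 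This is the asserted identity.

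The only real content beyond bookkeeping is justifying the two interchanges. Fubini is harmless under the integrability already implicit in the model. For differentiation under the expectation sign one invokes dominated convergence, using that $\|X\|_\infty$ is bounded (Condition \ref{itm:bounded}) and that $G_x'$ is locally bounded near $x^T\beta_0$ (from the differentiability and Lipschitz hypotheses on $G$ in Condition \ref{ew}, together with differentiability of $g$); these are the standing assumptions in the settings where the lemma is applied. The point worth emphasising in the write-up is that one never differentiates $w$ directly: all smoothness used is that of $G_x=\mathbb{E}_{Y\mid X}w$, which is produced by the error density, and this is exactly the mechanism that allows the nodewise-regression machinery of Section \ref{sec:nw} to be applied to non-differentiable losses such as absolute or hinge loss.
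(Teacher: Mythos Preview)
Your proof is correct and follows essentially the same line as the paper's: condition on $X$, reduce to $\mathbb{E}_X[X\,G(X^T\beta)]$ with $G(u)=\mathbb{E}_{Y\mid X}w(u,y)$, and differentiate. The only difference is cosmetic: the paper computes the derivative by writing out a second-order Taylor expansion of $G$ around $u_0=x^T\beta_0$, differentiating that expansion term by term, and then evaluating at $\beta=\beta_0$ so the higher-order terms drop, whereas you apply the chain rule directly and then interchange with the outer expectation. Your route is the more economical of the two and makes the regularity needed (dominated convergence via bounded $\|X\|_\infty$ and locally bounded $G'$) more explicit; otherwise the arguments coincide.
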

\begin{proof}[Proof of Lemma \ref{nd.fi}]
We have
\begin{eqnarray*}
\mathbb E\psi_\beta(y,x)
&=&\mathbb E_{(x,Y)} w(x^T\beta,y) x 
\\
&=& \mathbb E_x \mathbb E_{Y} (w(x^T\beta,y)|x) x.
\end{eqnarray*}
Let $G(u) := \mathbb E_{Y} w(u,y)$.
We use a Taylor expansion of $G$ around $u_0=x^T\beta_0$
$$G(u) = G(u_0) + G'(u_0)(u-u_0) + \frac{1}{2}G''(u)|_{u=\tilde u}(u-u_0)^2,$$
where $\tilde u =\alpha u_0 + (1-\alpha)u$ for some $0\leq \alpha \leq 1.$
This yields
\begin{eqnarray*}
\mathbb E_{Y} (w(x^T\beta,y)|x)  &=&
\mathbb E_{Y} (w(x^T\beta_0)|x) + G'(u)|_{u=x^T\beta_0} x^T(\beta -\beta_0)\\
&& + \frac{1}{2}G''(u)|_{u=\tilde u}(x^T(\beta -\beta_0))^2
.
\end{eqnarray*}
Differentiating this with respect to $\beta$ we obtain
\begin{eqnarray*}
(\mathbb E_{Y} (w(x^T\beta, y)|x))'_{\beta}  
&=&
-G'(u)|_{u=x^T\beta_0} x \\
&&+ \left(\frac{1}{2}G''(u)|_{u=\tilde u}\right)'_{\beta}(x^T(\beta -\beta_0))^2
\\
&&+ 
\frac{1}{2}G''(u)|_{u=\tilde u}2(x^T(\beta -\beta_0))x^T
.
\end{eqnarray*}
Plugging in $\beta=\beta_0$ we obtain
$$(\mathbb E_{Y} (w(x^T\beta, y)|x))'_{\beta=\beta_0}    =
-G'(u)|_{u=x^T\beta_0}x
.$$
Now we have 
$$(\mathbb E\psi_\beta(x,y))'_{\beta=\beta_0} = \mathbb E_x (\mathbb E_{Y} (w(x^T\beta, y)|x))'_{\beta=\beta_0} x.$$
Hence
$$(\mathbb E\psi_\beta(x,y))'_{\beta} = -\mathbb E_x G'(u)|_{u=x^T\beta_0} x x^T 
= -\mathbb E_x (\mathbb E_{Y}w(y,u)|x)'_{u=x^T\beta_0}  x  x^T.$$
\end{proof}

\subsection{Proofs for Section \ref{sec:examples} (Examples)}
\label{sec:examples.proof}
\subsubsection{Proofs for Section \ref{subsec:qre} ($\ell_1$-penalized LAD estimator)}
\begin{proof}[Proof of Lemma \ref{qr.ee}]
The necessary conditions for the problem \eqref{qrp} read
$$\mathbb P_n s_{\hat\beta} + \lambda \hat Z=0,$$
where $s_{\beta}$ is the subdifferential of $\beta\mapsto |y-x^T\beta|$ and $\hat Z$ is the subdifferential of the $\ell_1$ norm.
One can show that if $Y_i$ is absolutely continuous conditional on $X_i$, then with  probability one, 
$\|\{i:y_i = x_i^T\hat\beta\}\|_0= \|\hat \beta\|_0$ (i.e. exact interpolation happens exactly $\|\hat \beta\|_0$ times). 
The estimator $\hat\beta$ satisfies the condition of Lemma \ref{sparsity.glm}, therefore, $\|\hat\beta\|_0=\mathcal O_P(s).$
Hence,
\begin{eqnarray*}
\| \mathbb P_n \psi_{\hat\beta} \|_\infty & =& 
\| \mathbb P_n \psi_{\hat\beta} - \mathbb P_n s_{\hat\beta}\|_\infty  + 
\|\mathbb P_n s_{\hat\beta} \|_\infty
\\
& \leq & 
\|\frac{1}{n}\sum_{i: y_i\not = x_i^T\hat\beta} (\psi_{\hat\beta} - s_{\hat\beta})\|_\infty  + 
\|\frac{1}{n}\sum_{i: y_i = x_i^T\hat\beta} (\psi_{\hat\beta} - s_{\hat\beta})\|_\infty  + 
\mathcal O_P(\lambda)
\\
& \leq & 
 s/n +  \mathcal O_P(\lambda) = \mathcal O_P(\lambda),
\end{eqnarray*} 
where we used that $\|s_{\hat\beta}\|_\infty \leq \|\psi_{\hat\beta}\|_\infty =\mathcal O_P(1)$.
\end{proof}

 \begin{proof}[Proof of Lemma \ref{qr.entropy}]
Let $V_i\subset \{1,\dots,p\}$ for $i=1,\dots,{p\choose s}$ be all subsets of $\{1,\dots,p\}$ of size $s$. 
We can rewrite
 $$\mathcal F = \bigcup_{i=1}^{{p\choose s}} \mathcal F_i ,$$
where $\mathcal F_i :=
\{(x, y)\mapsto \Theta_j^Tx(1_{y \leq x^T \beta} - 1_{y \leq x^T \beta_0 }):
 \beta\in\mathbb R^p, \beta_{V_i} = \mathbf 0 
 \}.$
We now show that the class $\mathcal F_i$ has VC-index $V(\mathcal F_i) $ of order $s$. 
\\ 
The VC-index of a class of functions $\mathcal F_i$ is defined as the VC-index of the collection of  sets $\{(x,y,t): t < f(x,y)\}, $ where $t\in\mathbb R, f\in\mathcal F_i.$
The collection $\{(x,y)\mapsto y-x^T\beta: \beta_{V_i} =0\}$ has the same VC-index as an $s-$dimensional real vector space, which is $s+2$ by Lemma 2.6.15 in \cite{weak}.
Hence it follows that the VC-index of the collection $\{(x,y,t): t < f(x,y)\} $ is of order $s$.
\\
Then by Theorem 2.6.7 in \cite{weak} and since the covering number of a union of sets is upper bounded by sum of the covering numbers, we obtain
\begin{eqnarray*}
N(\epsilon\|F\|_n, \mathcal F, \|\cdot\|_n)
&\leq &
{p\choose s } N(\epsilon\|F\|_n, \mathcal F_i, \|\cdot\|_n)
\\
& \leq & 
{p\choose s }  K V(\mathcal F_i) (16e)^{V(\mathcal F_i)} \left(\frac{1}{\epsilon}\right)^{2(V(\mathcal F_i)-1)}\\
&=&
{p\choose s }  Ks (16e)^{s} \left(\frac{1}{\epsilon}\right)^{2(s-1)},
\end{eqnarray*}
where $K$ is a universal constant and $0<\epsilon<1$.
\begin{eqnarray*}
{p\choose s }  Ks (16e)^{s} \left(\frac{1}{\epsilon}\right)^{2(s-1)}
&\leq &
\frac{p^s}{s!} Ks (16e)^{s} \left(\frac{1}{\epsilon}\right)^{2s}\\
&\leq &
\frac{p^s}{(s-1)!} K \left(\frac{16 e}{\epsilon}\right)^{2s}\\
&\leq&
K p^s  \left(\frac{16 e}{\epsilon}\right)^{2s}.
\end{eqnarray*}
Hence
$$\log N(\epsilon\|F\|_n, \mathcal F, \|\cdot\|_n) \leq s \log p + 2 s \log \left(\frac{16 e}{\epsilon}\right).
$$
  \end{proof}

\begin{proof}[Proof of Theorem \ref{qr.thm}]
We apply Theorem \ref{glm.main2} and take $\psi_\beta(x,y) := \text{sign}(y-x\beta)x.$ Then by Lemma \ref{qr.ee} it follows that $\|\psi_{\hat\beta}\|_\infty = \mathcal O_P(\lambda)$. \\
First note that Condition \ref{itm:ew} is satisfied, which can be seen by direct calculation as follows. 
We have
$$u\mapsto G(u) = \int w(u,y) dP_{\epsilon|x} = \int \text{sign}(y-u) dP_{\epsilon|x} = 1-2F_\epsilon(u).$$
Then $G'(u) = -f_\epsilon(u).$
Hence by the assumed Lipschtiz property of $f_\epsilon,$ it follows that
$$|G'(u)-G'(v)| = | f_\epsilon(u) - f_\epsilon(v)| \leq L |u-v|,$$
thus $G'$ is Lipschitz.\\
Next we need to show that $\mathbb E(w_\beta-w_{\beta_0})^2 = o(n/(s^3(\log p)^2 (\log n)^4)).$
First we calculate the expectation conditioned on $x$:
\begin{eqnarray*}
\mathbb E_{\epsilon|x}(w_\beta-w_{\beta_0})^2 
&=&
\mathbb E_{\epsilon|x} (\text{sign}(y-x^T\beta) - \text{sign}(y-x^T\beta_0))^2
\\
&=&
\int (1_{\epsilon\leq 0 } - 1_{\epsilon \leq x^T(\beta-\beta_0)})^2 dP_{\epsilon|x}
\\
&=&
|\int_0^z dP_{\epsilon|x} | =|F_\epsilon (0)- F_\epsilon (x^T(\beta-\beta_0))| \leq L |x^T(\beta-\beta_0)|.
\end{eqnarray*}
Then for $\beta$ satisfying $\mathbb E\|X(\beta-\beta_0)\|_2^2/n =\mathcal O(s\lambda^2)$ and $\|\beta-\beta_0\|_1=\mathcal O(s\lambda)$ we have
\begin{eqnarray*}
\mathbb E(w_\beta-w_{\beta_0})^2 &=&
 \mathbb E_x(\mathbb E_{\epsilon|x}(w_\beta-w_{\beta_0})^2|x)
\\
& \leq & 
\mathbb E_x L |x^T(\beta-\beta_0)| \leq L \sqrt{\mathbb E_x |x^T(\beta-\beta_0)|^2 }
\\
 &= &
L \sqrt{ (\beta-\beta_0)^T \mathbb E_x  xx^T(\beta-\beta_0) }
\\
&=& \mathcal O(\sqrt{s}\lambda).
\end{eqnarray*}
Then 
$$Pf^2 = P|\Theta_j^Tx|^2 |w_\beta-w_{\beta_0}|^2 \leq s\mathbb E(w_\beta-w_{\beta_0})^2 =\mathcal O({s}^{3/2}\lambda).$$
Then under $s^5 (\log p)^3 (\log n)^4 /n = o(1),$ it follows that $R\sqrt{s\log p}\log n =o(1).$\\
By boundedness of $w_\beta$, we have $$P\sup_f |w_\beta-w_{\beta_0}|^4  = \mathcal O(1) $$ and hence
by the Cauchy-Schwarz inequality, the condition 
$$P\sup_{f}f^4 s^3 (\log p)^3 (\log n)^6=o(1)$$ is satisified.
\\
We have that $\hat{\Theta}'$ is an estimate of $ \Theta'$ constructed using nodewise regression with the matrix $\hat\Sigma:=X^TX/n.$
Then since $\Lambda_{\max}(\mathbb E_x  xx^T) =\mathcal O(1)$, we have by Theorem 2.4 in \cite{vdgeer13}
that
$$\|\hat{\Theta}'_j - \Theta_j'\|_1 = \mathcal O_P(s\lambda).$$
Then since $f_\epsilon(0) \geq c>0$ where $c$ is a universal constant, we get
$$\|\hat\Theta_j - \Theta_j^0\|_1 = \mathcal O_P(s\lambda).$$
Finally, the entropy condition is satisfied by Lemma \ref{qr.entropy}.
\end{proof}

\end{document}